\documentclass[11pt]{article}

\usepackage[margin=1in]{geometry}
\usepackage{amsmath,amssymb,amsfonts}
\usepackage{amsthm}
\usepackage{graphicx}
\usepackage{epsfig,multirow}

\def\d{\delta}
\def\r{\rho}
\def\e{\epsilon}

\def\a{\alpha}

\def\Ph{\boldsymbol{\Phi}}

\def\gm{\gamma}

\def\N{\mathbb{N}}
\def\A{\mathbf{A}}
\def\RR{\mathbb{R}}

\newcommand{\edges}{\mathcal E}
\newcommand{\support}{\mathcal S}
\newcommand{\bensemble}{\mathcal{B}}
\newcommand{\xensemble}{\mathbb{E}}

\newcommand{\obs}{\mathbf y}
\newcommand{\x}{\mathbf x}

\newcommand{\sol}{\widehat{\x}}
\newcommand{\bigO}{\mathcal O}

\newtheorem{theorem}{Theorem}[section]
\newtheorem{remark}{Remark}[section]
\newtheorem{corollary}{Corollary}[section]
\newtheorem{proposition}{Proposition}[section]
\newtheorem{definition}{Definition}[section]
\newtheorem{lemma}{Lemma}[section]

\title{On the construction of sparse matrices from expander graphs}

\author{Bubacarr Bah\thanks{African Institute for Mathematical Sciences (AIMS) South Africa, University of Stellenbosch ({\tt bubacarr@maims.ac.za})}, and Jared Tanner\thanks{Mathematics Institute, University of Oxford ({\tt tanner@maths.ox.ac.uk}).}}

\date{}

\begin{document}

\maketitle

\pagestyle{myheadings}
\thispagestyle{empty}
\markboth{BUBACARR BAH, AND JARED TANNER}{expander asymptotics}

\pagestyle{myheadings}


\begin{abstract}
We revisit the asymptotic analysis of probabilistic construction of adjacency matrices of expander graphs proposed in \cite{bah2013vanishingly}. With better bounds we derived a new reduced sample complexity for the number of nonzeros per column of these matrices, precisely $d = \bigO\left(\log_s(N/s) \right)$; as opposed to the standard $d = \bigO\left(\log(N/s) \right)$. This gives insights into why using small $d$ performed well in numerical experiments involving such matrices. Furthermore, we derive quantitative sampling theorems for our constructions which show our construction outperforming the existing state-of-the-art. We also used our results to compare performance of sparse recovery algorithms where these matrices are used for linear sketching.
\end{abstract}

\section{Introduction}\label{sec:intro}
Sparse binary matrices, say $\A\in \{0,1\}^{n\times N}$, with $n\ll N$ are widely used in applications including graph sketching \cite{ahn2012graph,gilbert2004compressing}, network tomography \cite{vardi1996network,castro2004network}, data streaming \cite{muthukrishnan2005data,indyk2007sketching}, breaking privacy of databases via aggregate queries \cite{dwork2007price}, compressed imaging of intensity patterns \cite{donoho1992maximum}, and more generally 
combinatorial compressed sensing \cite{donoho2006compressed, xu2007efficient,jafarpour2009efficient,berinde2009sequential,mendoza2017expander,mendoza2017robust}, linear sketching \cite{indyk2007sketching}, and group testing \cite{du2000combinatorial,gilbert2008group}. In all these areas we are interested in the case where $n\ll N$, in which case $\A$ is used as an efficient encoder of sparse signals $\x\in\RR^N$ with sparsity $s\ll n$, where they are known to preserve $\ell^1$ distance of sparse vectors \cite{berinde2008combining}.  Conditions that guarantee that a given encoder, $\A$, also referred to as a sensing matrix in compressed sensing, typically include the the nullspace, coherence, and the restricted isometry conditions, see \cite{foucart2013mathematical} and references there in. The goal is for $\A$ to satisfy one or more of these conditions with the minimum possible $n$, the number of measurements. For uniform guarantees over all $\A$, it has been established that $n$ has to be $\Omega\left(s^2\right)$, but that with high probability on the draw of random $\A$, $n$ can be $\bigO\left(s\log N/n \right)$ for $\A$ with entries drawn from a sub-gaussian distribution, see \cite{foucart2013mathematical} for a review of such results. Matrices with entries drawn from a Bernoulli distribution fall in the family of sub-gaussian but these are {\em dense} as opposed the the {\em sparse} binary matrices considered here. For 
computational advantages, such as faster application and smaller storage, it is advantageous to use sparse $\A$ in application 
\cite{berinde2008combining,bah2013vanishingly,mendoza2017expander}.

Herein we consider the $n$ achievable when $\A$ is an adjacency matrix of a expander graph \cite{berinde2008combining}, expander graph will be defined in the next section. Hence the construction of such matrices can be construed as either a linear algebra problem or equivalently a graph theory one (in this manuscript we will focus more on the linear algebra discourse). There has been significant research on expander graphs in pure mathematics and theoretical computer science, see  \cite{hoory2006expander} and references therein. Both deterministic and probabilistic constructions of expander graphs have been suggested. The best known deterministic constructions achieve $n = \bigO\left(s^{1+\a}\right)$ for $\a>0$ \cite{guruswami2009unbalanced}. One the other hand random constructions, first proven in \cite{bassalygo1973complexity}, achieve the optimal $n = \bigO\left(s\log\left(N/s\right) \right)$, precisely $n = \bigO(sd)$, with $d = \bigO\left(\log\left(N/s\right) \right)$, where $d$ is the {\em left degree} of the expander graph but also the number of ones in each column of $\A$, to be defined in the next section. However, to the best of our knowledge, it was \cite{bah2013vanishingly} that proposed a probabilistic construction that is not only optimal but also more suitable to making quantitative statements where such matrices are applied.

This work follows the probabilistic construction proposed in \cite{bah2013vanishingly} but with careful computation of the bounds, is able to achieve $n = \bigO\left(s\log\left(N/s\right)\right)$ with $d = \bigO\left(\frac{\log\left(N/s\right)}{\log s}\right)$. We retain the complexity of $n$ but got a smaller complexity for $d$, which is novel. Related results with a similar $d$ were derived in \cite{indyk2013model,bah2014model} but for structure sparse signals in the framework of model-based compressing sensing or sketching. In that framework, one has second order information about $\x$ beyond simple sparsity, which is first order information about $\x$. It is thus expected and established that it is possible to get a small $n$ and hence a smaller $d$. Arguably, such a small complexity for $d$ justifies in hindsight fixing $d$ to a small number in simulations with such $\A$ as in \cite{bah2013vanishingly,bah2014model,mendoza2017expander}, just to mention a few.

The results derive here are asymptotic, though finite dimensional bounds follow directly.  We focus on for what ratios of the problem dimensions $(s,n,N)$ does these results hold. There is almost a standard way of interrogating such a question, i.e. phase transitions, probably introduced to the compressed sensing literature by \cite{donoho2006thresholds}. In other words, we derive sampling theorems numerically depicted by phase transition plots about problem size spaces for which our construction holds. This is similar to what was done in \cite{bah2013vanishingly} but for comparison purposes we include phase transition plots from probabilistic constructions by \cite{buhrman2002bitvectors,berinde2009advances}. The plots show improvement over these earlier works. Furthermore, we show implications of our results for compressed sensing by using our results with the phase transition framework to compare the performance of selected combinatorial compressed sensing algorithms as is done in \cite{bah2013vanishingly,mendoza2017expander}.

The manuscript is organized as follows. Section \ref{sec:intro} gives the introduction; while Section \ref{sec:prelim} sets the notation and defines some useful terms. The main results are stated in Section \ref{sec:mresults} and the details of the construction is given in Section \ref{sec:construct}. This is followed by a discussion in Section \ref{sec:discuss} about our results, comparing them to existing results and using the results to compare the performance of some combinatorial compressed sensing algorithms. In Section \ref{sec:proof} we state the remaining proofs of theorems, lemmas, corollaries, and propositions used in this manuscript. After this section is the conclusion in Section \ref{sec:conclusion}. We include an appendix in Section  \ref{sec:appdx}, where we summarized key relevant materials from \cite{bah2013vanishingly}, and showed the derivation of some bounds used in the proofs.

\section{Preliminaries} \label{sec:prelim}

\subsection{Notation}
Scalars will be denoted by lowercase letters (e.g. $k$), vectors by lowercase boldface letters (e.g., ${\bf x}$), sets by uppercase calligraphic letters (e.g., $\mathcal{S}$) and matrices by uppercase boldface letters (e.g. ${\bf A}$).
The cardinality of a set $\mathcal{S}$ is denoted by $|\mathcal{S}|$ and $[N] := \{1, \ldots, N\}$.
Given $\mathcal{S} \subseteq [N]$, its complement is denoted by $\mathcal{S}^c := [N] \setminus \mathcal{S}$ and $\x_\mathcal{S}$ is the restriction of $\x \in \RR^N$ to $\mathcal{S}$, i.e.~ $(\x_\mathcal{S})_i = \x_i$ if $i \in \mathcal{S}$ and $0$ otherwise. For a matrix $\A$, the restriction of $\A$ to the columns indexed by $\support$ is denoted by $\A_\support$.
For a graph, $\Gamma(\support)$ denotes the set of {\em neighbors} of $\support$, that is the nodes that are connected to the nodes in $\support$, and $e_{ij} = (x_i,y_j)$ represents an edge connecting node $x_i$ to node $y_j$.
The $\ell_p$ norm of a vector ${\bf x} \in \RR^N$ is defined as $\|{\bf x}\|_p := \left ( \sum_{i=1}^N x_i^p \right )^{1/p}$.

\subsection{Definitions}
Below we give formal definitions that will be used in this manuscript. 
\begin{definition}[$\ell_p$-norm restricted isometry property]
 \label{def:rip}
A matrix $\A$ satisfies the $\ell_p$-norm restricted isometry property (RIP-p) of order $s$ and constant $\d_s <1$ if it satisfies the following inequality.
\begin{equation}
\label{eqn:rip}
\left(1-\d_s\right)\|\x\|_p^p \leq \|\A\x\|_p^p \leq \left(1+\d_s\right)\|\x\|_p^p, \quad \forall ~s\mbox{--sparse } \x.
\end{equation}
\end{definition}
\noindent The most popular case is RIP-2 and was first proposed in \cite{candes2006stable}. Typically when RIP is mentioned without qualification, it means $\mbox{RIP}_2$. In the discourse of this work though RIP-1 is the most relevant. The RIP says that $\A$ is a near-isometry and it is a sufficient condition to guarantee exact sparse recovery in the noiseless setting (i.e. $\obs = \A\x$); or recovery up to some error bound, also referred to as {\em optimality condition}, in the noisy setting (i.e. $\obs = \A\x + {\bf e}$, where ${\bf e}$ is the bounded noise vector). we define optimality condition more precisely below.
\begin{definition}[Optimality condition]
 \label{def:optimality}
Given $\obs = \A\x + {\bf e}$ and $\sol = \Delta\left(\A\x + {\bf e}\right)$ for a reconstruction algorithm $\Delta$, the optimal error guarantee is 
\begin{equation}
\label{eqn:optimality}
\|\sol - \x\|_p \leq C_1\sigma_s(\x)_q + C_2 \|{\bf x}\|_p\,,
\end{equation}
where $C_1,C_2 > 0$ depend only on the RIP constant (RIC), i.e. $\d_s$, and not the problem size, $1\leq q \leq p \leq 2$, and $\sigma_s(\x)_q$ denote the error of the best $s$-term approximation in the $\ell_q$-norm, that is
\begin{equation}
\label{eqn:beststerm}
\sigma_s(\x)_q  := \min_{s-\mbox{sparse } {\bf z}}\|{\bf z} - \x\|_q\,.
\end{equation}
\end{definition}
Equation \eqref{eqn:optimality} is also referred to as the $\ell_p/\ell_q$ optimality condition (or error guarantee). Ideally, we would like $\ell_2/\ell_2$, but the best provable is $\ell_2/\ell_1$ \cite{candes2006stable}, weaker than this is the $\ell_1/\ell_1$ \cite{berinde2008combining}, which is what is possible with the $\A$ considered in this work.

To aid translation between the terminology of graph theory and linear algebra we define the {\em set of neighbors} in both notation.

\begin{definition}[Definition 1.4 in \cite{bah2013vanishingly}]
 \label{def:neighbours}
Consider a bipartite graph $G=\left( [N],[n],\edges \right)$ where $\edges$ is the set of edges and $e_{ij}=(x_i,y_j)$ is the edge that connects vertex $x_i$ to vertex $y_j$. For a given set of left vertices $\support\subset [N]$ its set of neighbors is $\Gamma(\support) = \{y_j|x_i\in \support \mbox{ and } e_{ij}\in \edges\}$. In terms of the adjacency matrix, $\A$, of $G=\left( [N],[n],\edges \right)$ the set of neighbors of $\A_\support$ for $|\support|=s$, denoted by $A_s$, is the set of rows with at least one nonzero.
\end{definition}

\begin{definition}[Expander graph]
\label{def:llexpander}
Let $G=\left( [N],[n],\edges \right)$ be a left-regular bipartite graph with $N$ left vertexes, $n$ right vertexes, a set of edges $\edges$ and left degree $d$.
If, for any $\epsilon \in (0,1/2)$ and any $\support \subset [N]$ of size $|\support|\leq k$, we have that $|\Gamma(\support)| \geq (1-\epsilon)d|\support|$, then $G$ is referred to as a {\em $(s,d,\epsilon)$-expander graph}.
\end{definition}
\noindent The $\e$ is referred to as the expansion coefficient of the graph. A $(s,d,\e)$-expander graph, also called an {\em unbalanced} expander graph \cite{berinde2008combining} or a {\em lossless} expander graph \cite{capalbo2002randomness}, is a highly connected bipartite graph. We denote the ensemble of $n\times N$ binary matrices with $d$ ones per column by $\bensemble(N,n;d)$, or just $\bensemble$ to simplify notation. We also will denote the ensemble of $n\times N$ adjacency matrices of $(s,d,\e)$-expander graphs as $\xensemble(N,n;s,d,\e)$ or simply $\xensemble$.

\section{Results} \label{sec:mresults}
The main result of this work is formalized in Theorem \ref{thm:probconstruct}, which is an asymptotic result, where the dimensions grow while their ratios remain bounded. This is also referred to as the {\em propoational growth asymptotics} \cite{blanchard2011compressed,bah2010improved}.
\begin{theorem}
\label{thm:probconstruct}
Consider $\e\in(0,\frac{1}{2})$ and let $d,s,n,N\in \N$, a random draw of an $n\times N$ matrix $\A$ from $\bensemble$, i.e. for each column of $\A$ uniformly assign ones in $d$ out of $n$ positions, as $(s,n,N) \rightarrow \infty$ while $s/n \in (0,1)$ and $n/N \in (0,1)$, with probability approaching $1$ exponentially, the matrix $\A\in\xensemble$ with
\begin{equation}
\label{eqn:probconstruct}
d = \bigO\left(\frac{\log\left(N/s\right)}{\e\log s}\right), \quad \mbox{and} \quad n = \bigO\left(\frac{s\log\left(N/s\right)}{\e^2}\right).
\end{equation}
\end{theorem}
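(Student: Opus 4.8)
The plan is to compute the probability that a random draw $\A \in \bensemble(N,n;d)$ \emph{fails} to be the adjacency matrix of an $(s,d,\e)$-expander, and to show this probability decays exponentially in the proportional-growth regime when $d$ and $n$ satisfy \eqref{eqn:probconstruct}. The failure event is that there exists some $\support \subset [N]$ with $|\support| = k \le s$ for which $|\Gamma(\support)| < (1-\e)dk$. First I would reduce to the worst case $|\support| = s$: it suffices to control, for each $k \le s$, the probability that a fixed $k$-set has too few neighbours, and then union bound over the $\binom{N}{k}$ choices of $\support$ and over $k$. For a fixed $\support$ with $|\support| = k$, the quantity $|\Gamma(\support)|$ is determined by the $dk$ independent ``ball throws'' (one for each edge incident to $\support$, each landing uniformly in $[n]$); the bad event is that these $dk$ balls occupy fewer than $(1-\e)dk$ of the $n$ bins, equivalently that at least $\e d k$ of the throws are ``collisions'' (land in an already-occupied bin).

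The core estimate is therefore a tail bound on the number of collisions among $dk$ uniform throws into $n$ bins. I would bound the probability that a specified set of $t = \lceil \e d k \rceil$ of the throws are all collisions by $(dk/n)^t$ (each such throw, conditioned on the earlier ones, lands in one of at most $dk$ occupied bins), and then multiply by $\binom{dk}{t}$ for the choice of which throws collide. Combining with the $\binom{N}{k}$ union bound gives a failure probability for level $k$ bounded by roughly
\begin{equation}
\binom{N}{k} \binom{dk}{\e d k} \left(\frac{dk}{n}\right)^{\e d k}.
\end{equation}
Using $\binom{N}{k} \le (eN/k)^k$ and $\binom{dk}{\e dk} \le (e/\e)^{\e d k}$, taking logarithms, and dividing by $k$, the exponent per left-vertex is of order
\begin{equation}
\log\!\left(\frac{N}{k}\right) + \e d \log\!\left(\frac{e}{\e}\right) + \e d \log\!\left(\frac{dk}{n}\right),
\end{equation}
and I want this to be negative (and bounded away from $0$, with room to spare to beat the sum over $k$). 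Choosing $n$ of order $s \log(N/s)/\e^2$ makes $dk/n$ small enough that the last term dominates the first, while $d$ of order $\log(N/s)/(\e \log s)$ is exactly what is needed so that $\e d \log(dk/n)$ — which carries a factor $\log s$ from $\log(n/s) \sim \log(\log(N/s)/\e^2)$ being the relevant gain... more carefully, $\e d \cdot \log s$-type savings — absorbs the $\log(N/s)$ coming from the $\binom{N}{k}$ term. I would verify the constants by plugging $k = s$ (the binding case) and checking the inequality with the implied constants, then confirm the geometric-type decay in $k$ is enough for the sum $\sum_{k \le s}$ to stay exponentially small.

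The main obstacle I anticipate is the bookkeeping in the second step: the naive collision bound $(dk/n)^t$ with $\binom{dk}{t}$ is the ``standard'' argument that yields $d = \bigO(\log(N/s))$, and squeezing out the extra factor of $\log s$ requires the sharper bounds promised in the abstract — presumably a more careful treatment of the dependence of the exponent on $\log(n/s)$, so that the $\log(N/s)$ in the cardinality term is divided by a genuine $\log s$ rather than merely by a constant. Getting this division to come out cleanly — i.e., identifying precisely which quantity in the exponent scales like $\log s$ and ensuring it multiplies $\e d$ — is the delicate point; the rest (reduction to $|\support|=s$, the union bound, summing the geometric series in $k$, and tracking the $\e$-dependence to land $n = \bigO(s\log(N/s)/\e^2)$) is routine once that estimate is in hand. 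I would also need to be slightly careful that the events across the $dk$ throws are handled by conditioning rather than independence, since ``occupied bin count'' is a martingale-like quantity; a clean way is to fix in advance which $t$ throws are the alleged collisions and bound each conditionally on all earlier throws.
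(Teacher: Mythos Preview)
Your proposal has a genuine gap, and you have essentially identified it yourself: the bound
\[
\binom{N}{k}\binom{dk}{\e dk}\left(\frac{dk}{n}\right)^{\e dk}
\]
is exactly Berinde's bound (equation \eqref{eqn:prob_expansion_bound_BI} in the paper), and as you note it only delivers $d = \bigO(\log(N/s))$. The paper explicitly computes the phase transition coming from this bound and shows it is strictly weaker than the one obtained here. So the ``core estimate'' you propose cannot reach the claimed $d = \bigO\big(\log(N/s)/(\e\log s)\big)$; no amount of bookkeeping in the union bound or the sum over $k$ will recover the missing $\log s$.

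What the paper actually does is quite different. Instead of counting collisions among the $dk$ edges directly, it uses the \emph{dyadic splitting} of $\support$ from \cite{bah2013vanishingly}: write $A_s = A_{\lceil s/2\rceil}^1 \cup A_{\lfloor s/2\rfloor}^2$, express $\hbox{Prob}(|A_s|\le a_s)$ via the distribution of $|A_{\lceil s/2\rceil}^1 \cap A_{\lfloor s/2\rfloor}^2|$, and recurse for $\lceil\log_2 s\rceil - 1$ levels down to single columns. This produces a product of $\sim\log_2 s$ ``level'' factors, and after maximising each level's exponent one obtains (Theorem \ref{thm:prob_expansion_bound_new}) a bound on $\Psi_n$ whose dominant negative term is $-\frac{\eta(\beta-1)}{2\beta n}(1-\e)ds\log_2(s/2)$. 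The crucial $\log_2 s$ multiplier comes from the \emph{number of dyadic levels}, via the identity \eqref{eqn:seriesbound}: $\sum_{i\in\Omega}\frac{s}{2i}\cdot\frac{a_i}{n}\ge \frac{\log_2(s/2)}{2n}(1-\e)ds$. It is this $\log_2 s$ --- not any $\log(n/s)$ dependence as you speculated --- that, once plugged into the union-bound exponent \eqref{eqn:bigpsiN} and balanced against $\log(N/s)$, yields the condition $d \ge c_d\log(N/s)/(\e\log s)$ in \eqref{eqn:complexity_condition_d2}. A second, more technical, ingredient is Lemma \ref{lem:beta}, which solves the coupled cubic system \eqref{eqn:unrestrictedsizes} asymptotically to show the constrained $a_i$ still obey $a_{2i}=a_i(2 - \beta a_i/n)$ for an explicit $\beta(\e)>1$; this is what makes each $\psi_i$ bounded away from zero (Proposition \ref{pro:psibound}) and hence makes the $\log_2 s$-term sum genuinely negative.
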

The proof of this theorem is found Section \ref{sec:pthm1}. It is worth emphasizing that the complexity of $d$ is novel and it is the main contribution of this work.

Furthermore, in the {\em proportional growth asymptotics}, i.e. as $(s,n,N) \rightarrow \infty$ while $s/n \rightarrow \r$ and $n/N \rightarrow \d$ with $\r,\d \in (0,1)$, for completeness, we derived a {\em phase transition} function (curve) in $\d\r$-space below which Theorem \ref{thm:probconstruct} is satisfied with high probability and the reverse is true. This is formalized in the following lemma. 

\begin{lemma}
\label{lem:prob_expander_existence}
Fix $\e\in(0,\frac{1}{2})$ and let $d,s,n,N\in \N$, as $(s,n,N) \rightarrow \infty$ while $s/n \rightarrow \r \in (0,1)$ and $n/N \rightarrow \d \in (0,1)$ then for $\r < (1-\gm)\r_{BT}(\d;d,\e)$ and $\gm>0$, a random draw of $\A$ from $\bensemble$ implies $\A\in\xensemble$ with probability approaching $1$ exponentially.
\end{lemma}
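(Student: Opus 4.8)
The plan is to convert the asymptotic order statement of Theorem \ref{thm:probconstruct} into a sharp probabilistic statement in the proportional-growth regime, where one tracks the exact exponential rate rather than just the $\bigO$ behaviour. First I would recall the standard union-bound argument: a random $\A$ drawn from $\bensemble$ fails to be the adjacency matrix of an $(s,d,\e)$-expander precisely when there exists some $\support \subset [N]$ with $|\support| = \ell \le s$ such that $|\Gamma(\support)| < (1-\e)d\ell$. Fixing $|\support| = \ell$ and $|\Gamma(\support)| = a < (1-\e)d\ell$, the probability that the $d\ell$ edges emanating from $\support$ all land in a fixed set of $a$ right vertices is at most $\binom{a}{(1-\e)d\ell}\big(a/n\big)^{d\ell}$-type expression (more carefully, one counts ordered edge placements), and multiplying by $\binom{N}{\ell}\binom{n}{a}$ and summing over $a$ and $\ell$ gives the failure probability. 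Stirling's approximation turns each binomial $\binom{m}{k}$ into $\exp\{m H(k/m)\}$ up to polynomial factors, where $H$ is the binary entropy; collecting the exponents, the total exponent is $N$ times a function $\Psi(\r,\d;d,\e)$ obtained after substituting $\ell = \r\d N$ and optimizing over the ratio $a/(d\ell)$.

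The second step is to define $\r_{BT}(\d;d,\e)$ as exactly the threshold where this controlling exponent changes sign: for $\r < \r_{BT}(\d;d,\e)$ the dominant exponent $\Psi$ is strictly negative, so the failure probability decays like $e^{-cN}$ for some $c>0$, and for $\r>\r_{BT}$ it is positive and the bound is useless (and in fact, by a matching lower-bound / second-moment heuristic, expansion genuinely fails). This is where I would lean on the appendix material summarized from \cite{bah2013vanishingly}: the function $\r_{BT}$ and the relevant large-deviation bounds on the neighbourhood size of $\support$ are exactly the quantities analyzed there, so the bulk of the estimation of $\Psi$ can be imported rather than redone. The factor $(1-\gm)$ with $\gm>0$ is the usual device to stay strictly inside the open region: since $\Psi$ is continuous and strictly negative for $\r$ in the interior, choosing $\r \le (1-\gm)\r_{BT}$ guarantees a uniform spectral gap $\Psi \le -c(\gm) < 0$, hence the "probability approaching $1$ exponentially" conclusion, uniformly as $(s,n,N)\to\infty$ along the prescribed ratios.

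Concretely, the steps in order are: (i) write the failure event as a union over $\support$ and over target neighbourhood sizes $a$; (ii) bound each term by a product of binomials via a direct counting argument on edge placements; (iii) apply Stirling to pass to entropy exponents and substitute the proportional-growth scalings $s = \r\d N$, $n = \d N$; (iv) maximize the resulting exponent over $a$ (equivalently over the contraction ratio) to extract the controlling function, and define $\r_{BT}(\d;d,\e)$ as its sign-change locus; (v) observe that $\r \le (1-\gm)\r_{BT}$ forces the exponent below a negative constant, so summing the at-most-polynomially-many terms still yields exponential decay; (vi) take complements. I expect the main obstacle to be step (iv): carrying out the optimization over the neighbourhood size cleanly enough to obtain a well-defined, monotone threshold curve $\r_{BT}$ and verifying that the optimizing configuration is the "worst case" (i.e. that smaller $\support$, or intermediate $a$, do not dominate). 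This is precisely the delicate large-deviation computation that the careful bookkeeping of \cite{bah2013vanishingly} — and the sharper bounds claimed in this paper — are designed to handle, so I would structure the argument to isolate that optimization into a single lemma, citing the appendix for the technical estimates on $\binom{\cdot}{\cdot}$ ratios and on the tail of $|\Gamma(\support)|$.
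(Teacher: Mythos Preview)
Your high-level skeleton (union bound $\to$ entropy exponent $\to$ define the threshold as the sign-change locus $\to$ exponential decay strictly below it) matches the paper, but you misjudge where the work sits and which bound is being used.

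In the paper, Lemma~\ref{lem:prob_expander_existence} is essentially a one-line corollary of the already-proved Theorem~\ref{thm:prob_expansion_bound_new2}: one simply takes the exponent bound \eqref{eqn:bigpsiN}, lets $(s,n,N)\to\infty$ with $s/n\to\rho$, $n/N\to\delta$ so that the $o(N)$ term vanishes, obtains the explicit limit
\[
-\rho\log(\delta\rho)+\rho-\frac{\tau(1-\e)d\rho\log(\delta\rho)}{2\log 2}-\frac{\tau\e(1-\e)d}{2c_n\log 2}+\frac{\tau(1-\e)d\rho}{2},
\]
and \emph{defines} $\rho_{BT}(\delta;d,\e)$ as its zero in $\rho$; negativity below the curve gives the exponential decay, and $(1-\gamma)$ enforces a uniform gap. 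There is no further optimisation over the neighbourhood size at this stage --- that was already done earlier (Lemma~\ref{lem:beta} and Proposition~\ref{pro:psibound}) via the coupled cubic system coming from the dyadic splitting, and is baked into the constants $\tau$, $c_n$.

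By contrast, your step~(ii) proposes a \emph{direct} edge-placement count of the type $\binom{n}{a}(a/n)^{d\ell}$ or $\binom{d\ell}{\e d\ell}(d\ell/n)^{\e d\ell}$. That is precisely the cruder Berinde/Buhrman estimate the paper records in \eqref{eqn:prob_expansion_bound_BI}--\eqref{eqn:prob_expansion_bound_BM}; carrying it through yields $\rho_{BI}$ or $\rho_{BM}$, not $\rho_{BT}$. The sharper curve $\rho_{BT}$ comes specifically from the dyadic-splitting bound of Theorem~\ref{thm:prob_expansion_bound} and its refinement in Lemma~\ref{lem:prob_expansion_bound_new} and Theorem~\ref{thm:prob_expansion_bound_new}, not from direct counting. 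So the ``main obstacle'' you flag in step~(iv) --- optimising over $a$ --- is real, but the paper resolves it upstream by solving \eqref{eqn:unrestrictedsizes} asymptotically (Lemma~\ref{lem:beta}), not by a one-shot maximisation inside the proof of this lemma. If you replace your step~(ii) by ``invoke Theorem~\ref{thm:prob_expansion_bound_new2}'' the rest of your outline collapses to the paper's short argument.
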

The proof of this lemma is given in Section \ref{sec:plem1}. The phase transition function $\r_{BT}(\d;d,\e)$ turned out to be significantly higher that those derived from existing probabilistic constructions, hence our results are significant improvement over earlier works. This will be graphically demonstrated with some numerical simulations in Section \ref{sec:discuss}.

\section{Construction} \label{sec:construct}
The standard probabilistic construction is for each column of $\A$ to uniformly assign ones in $d$ out of $n$ positions; while the standard approach to derive the probability bounds is to randomly selected $s$ columns of $\A$ indexed by $\support$ and compute the probability that $|A_s| < (1-\e)ds$, then do a {\em union bound} over all sets $\support$ of size $s$. Our work in \cite{bah2013vanishingly} computed smaller bounds than previous works based on a {\em dyadic splitting} of $\support$ and derived the following bound. We changed the notation and format of Theorem 1.6 in \cite{bah2013vanishingly} slightly to be consistent with the notation and format in this manuscript.

\begin{theorem}[Theorem 1.6, \cite{bah2013vanishingly}]
\label{thm:prob_expansion_bound}
Consider $d,s,n,N \in \N$, fix $\support$ with $|\support|\leq s$, let an $n\times N$ matrix $\A$ be drawn from $\bensemble$, then
\begin{equation}
\label{eqn:prob_expansion_bound1}
 \hbox{Prob}\left(\left|A_s\right| \leq a_s\right) < p_n(s,d) \cdot e^{\left[n\cdot\Psi_n\left(a_s,\ldots,a_1\right)\right]}
\end{equation}
with $a_1:=d$, and the functions defined as
\begin{align}
\label{eqn:pn}
p_n(s,d) & = \frac{2}{25\sqrt{2\pi s^3d^3}}, \quad \mbox{and}\\
\label{eqn:bigpsin}
 \Psi_n\left(a_s,\ldots,a_1\right) & = \frac{1}{n} \left[3s\log(5d) + \sum_{i\in\Omega} \frac{s}{2i} \psi_i \right], \quad \mbox{for} \quad \Omega = \{2^j\}_{j=0}^{\log_2(s) - 1}\,,
\end{align}
where 
\begin{equation}
\label{eqn:smallpsi}
\psi_i = \left(n-a_i\right) \cdot \mathcal{H}\left(\frac{a_{2i}-a_i}{n-a_i}\right) + a_i\cdot \mathcal{H}\left(\frac{a_{2i}-a_i}{a_i}\right) - n\cdot \mathcal{H}\left(\frac{a_i}{n}\right)\,,
\end{equation} 
and $\mathcal{H}(\cdot)$ is the Shannon entropy in base $e$ logarithms, and the index set $\Omega = \{2^j\}_{j=0}^{\log_2(s) - 1}$. 
\begin{itemize}
\item[a)] If no restriction is imposed on $a_s$ then the $a_i$ for $i>1$ take on their expected value $\hat{a}_{i}$ given by 
\begin{equation}
\label{eqn:restrictedsizes}
 \hat{a}_{2i} = \hat{a}_{i}\left(2 - \frac{\hat{a}_{i}}{n}\right), \quad \mbox{for} \quad i\in\{2^j\}_{j=0}^{\log_2(s) - 1}\,. 
\end{equation}
\item[b)]If $a_{s}$ is restricted to be less than $\hat{a}_{s}$, then the $a_i$ for $i>1$ are the unique solutions to the following polynomial system 
\begin{equation}
\label{eqn:unrestrictedsizes}
  a_{2i}^3 - 2a_ia_{2i}^2 + 2a_i^2a_{2i} - a_i^2a_{4i} = 0, \quad \mbox{for} \quad i\in\{2^j\}_{j=0}^{\log_2(s) - 2}\,,
\end{equation}
with $a_{2i}\geq a_i$ for each $i$.
\end{itemize}
\end{theorem}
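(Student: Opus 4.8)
The plan is to bound $\hbox{Prob}(|A_s|\le a_s)$ by the \emph{dyadic splitting} underlying \eqref{eqn:prob_expansion_bound1}. First I would reduce to $|\support|=s$ with $s$ a power of two — the general case following from monotonicity of $|\Gamma(\cdot)|$ under inclusion and from replacing $\log_2 s$ by $\lceil\log_2 s\rceil$ — and fix a balanced binary tree on the $s$ columns of $\A_\support$: level $\ell$ carries $s/2^{\ell}$ disjoint blocks of $2^{\ell}$ columns, with the $s$ singletons at the bottom and $\support$ at the top. Two elementary facts drive the recursion: since the columns of $\A$ are drawn independently, the neighborhoods $\Gamma(B),\Gamma(B')$ of \emph{disjoint} blocks are independent random subsets of $[n]$; and since the construction is invariant under permutations of $[n]$, conditioned on its cardinality each $\Gamma(B)$ is uniform among subsets of that size. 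Hence when a block $B=B_1\sqcup B_2$ is formed from two size-$i$ children with $|\Gamma(B_1)|=|\Gamma(B_2)|=a_i$, the merged neighborhood has size $a_{2i}$ precisely when the two neighborhoods overlap in $2a_i-a_{2i}$ elements, an event of hypergeometric probability $\binom{a_i}{2a_i-a_{2i}}\binom{n-a_i}{a_{2i}-a_i}/\binom{n}{a_i}$; the base case is $a_1=d$ deterministically, since every column has exactly $d$ ones.

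Next I would pass to \emph{balanced profiles} $(a_1,a_2,a_4,\dots,a_s)$ in which all blocks at a given level share a common neighborhood size. Reducing the event $\{|A_s|\le a_s\}$ to a union over such profiles — which I expect to be the technical heart of the argument — and using independence across the $s/(2i)$ merges at level $i$ and across levels, the probability of a fixed profile factors as $\prod_{i\in\Omega}\big[\binom{a_i}{2a_i-a_{2i}}\binom{n-a_i}{a_{2i}-a_i}/\binom{n}{a_i}\big]^{s/(2i)}$, so that $\hbox{Prob}(|A_s|\le a_s)$ is at most the number of admissible profiles (at most $n^{\log_2 s}$, hence subexponential in the regime of interest) times the largest such product. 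Applying Stirling's formula — the elementary bound $\binom{m}{k}\le e^{m\mathcal{H}(k/m)}$ on the numerator binomials, a matching lower bound with its $\Theta(k^{-1/2})$ correction on $\binom{n}{a_i}$, and the identity $\mathcal{H}\!\big(\tfrac{2a_i-a_{2i}}{a_i}\big)=\mathcal{H}\!\big(\tfrac{a_{2i}-a_i}{a_i}\big)$ — turns the logarithm of each merge into exactly $\psi_i$ of \eqref{eqn:smallpsi}, so the product becomes $e^{\,n\Psi_n(a_s,\dots,a_1)}$; bounding the accumulated polynomial corrections and the profile count with $d\le a_i\le\min\{n,di\}$ then produces the prefactor $p_n(s,d)$ together with the $3s\log(5d)$ contribution to $n\Psi_n$.

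It then remains to pin down the relevant profile. For part (a), with $a_s$ unconstrained the dominant profile is the expected trajectory: given $|\Gamma(B_1)|=|\Gamma(B_2)|=\hat a_i$, a uniform $\hat a_i$-subset meets a fixed $\hat a_i$-subset in $\hat a_i^2/n$ elements on average, so $E\,|\Gamma(B_1\cup B_2)|=2\hat a_i-\hat a_i^2/n=\hat a_i(2-\hat a_i/n)$, which is \eqref{eqn:restrictedsizes}. For part (b), with $a_s$ held below $\hat a_s$, one minimizes $\sum_{i\in\Omega}\tfrac{s}{2i}\psi_i$ over the free sizes $a_2,\dots,a_{s/2}$ subject to $a_1=d$ and the prescribed value of $a_s$; since each $a_{2i}$ appears only in $\psi_i$ and $\psi_{2i}$, the first-order conditions read $2\,\partial\psi_i/\partial a_{2i}+\partial\psi_{2i}/\partial a_{2i}=0$, and inserting $\mathcal{H}'(p)=\log\frac{1-p}{p}$ and clearing logarithms collapses each of them to the cubic $a_{2i}^3-2a_ia_{2i}^2+2a_i^2a_{2i}-a_i^2a_{4i}=0$ of \eqref{eqn:unrestrictedsizes}, with $a_{2i}\ge a_i$ selecting the correct root; a check of the second-order conditions confirms this critical profile is the minimizer and hence gives the claimed worst-case exponent.

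The main obstacle, as noted, is the reduction to balanced profiles: one must show it suffices to track a single neighborhood size per level rather than the $s-1$ internal sizes of an arbitrary realization (whose count $n^{s-1}$ could not be folded into $p_n(s,d)$ and $3s\log(5d)$), and then carry out the Stirling bookkeeping tightly enough that every stray polynomial factor collapses into exactly those two quantities. By comparison, the overlap/hypergeometric estimates and the Lagrange computation in part (b) are routine.
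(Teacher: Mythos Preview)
Your overall architecture is right --- dyadic tree, hypergeometric merge probabilities, Stirling to produce the entropies in $\psi_i$, and a Lagrange computation to locate the extremal profile for part~(b) --- and your derivation of the cubic \eqref{eqn:unrestrictedsizes} from the first-order conditions $2\,\partial\psi_i/\partial a_{2i}+\partial\psi_{2i}/\partial a_{2i}=0$ is exactly how that system arises. The gap is precisely where you flag it: the ``reduction to balanced profiles''. You propose to write $\{|A_s|\le a_s\}$ as a union over profiles in which every block at level $j$ has the \emph{same} neighborhood size $a_{2^j}$, and then bound by (number of balanced profiles)$\times$(max probability). But the event does not sit inside such a union in any obvious way: a realization with unequal sizes across sibling blocks at a given level need not be covered by any single balanced profile, so this step as stated does not go through.

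The paper avoids this reduction entirely. It writes $\hbox{Prob}(|A_s|\le a_s)$ \emph{exactly} as a nested sum over \emph{all} (generically unbalanced) size profiles $(l_{Q_0},l_{Q_1},l_{R_1},\dots)$, with each summand a product of the merge probabilities $\hbox{P}_n(\cdot)$ of Lemma~\ref{lem:intersect_prob}. Each $\hbox{P}_n$ is then bounded by $\pi(\cdot)\,e^{\psi_n(\cdot)}$, and the key step is not a reduction of the event but a \emph{pointwise} monotonicity of the exponent: Lemma~\ref{lem:psi_n_behaviour} shows $\psi_n(x,y,z)$ is monotone in its arguments in the relevant range, so every $\psi_n(l_\bullet,\cdot,\cdot)$ in the sum is bounded above by $\psi_n(a_\bullet,\cdot,\cdot)$ at the common level value $a_i$. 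The maximized exponential --- now ``balanced'' by construction --- factors out of the nested sums, leaving a sum of polynomial prefactors $\pi(\cdot)$ that is bounded separately (this is where $p_n(s,d)$ and the $3s\log(5d)$ term, or in the present paper the improved polylog term, come from). So the balancing is a consequence of maximizing the exponent over the full sum, not a prior restriction of the sample space; the missing ingredient in your plan is this monotonicity lemma for $\psi_n$ rather than a combinatorial covering of unbalanced profiles by balanced ones.
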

In this work, based on the same approach as in \cite{bah2013vanishingly}, we derive new expressions for the $p_n(s,d)$ and $\Psi_n\left(a_s,\ldots,a_1\right)$ in Theorem \ref{thm:prob_expansion_bound}, i.e. \eqref{eqn:pn} and \eqref{eqn:bigpsin} respectively, and provide a simpler bound for the improved expression of $\Psi_n\left(a_s,\ldots,a_1\right)$. 

\begin{lemma}
\label{lem:prob_expansion_bound_new}
Theorem \ref{thm:prob_expansion_bound} holds with the functions
\begin{align}
\label{eqn:pn_new}
p_n(s,d) & = 2^{-3}s^{9/2}e^{1/4}\,,\\
\label{eqn:bigpsin_new}
 \Psi_n\left(a_s,\ldots,a_1\right) & = \frac{1}{n} \left[\frac{3\log 2}{2}\log_2^2 s + \left(\log_2 s - \frac{3}{2}\right)\log a_s + \sum_{i\in\Omega} \frac{s}{2i} \psi_i \right], ~~ \mbox{for} ~~ \Omega = \{2^j\}_{j=0}^{\log_2(s) - 1}\,.
\end{align}
\end{lemma}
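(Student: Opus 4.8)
The plan is to re-run the dyadic-splitting computation behind Theorem~\ref{thm:prob_expansion_bound} — recalled in the appendix, Section~\ref{sec:appdx} — but to replace the crude Stirling estimates of \cite{bah2013vanishingly} by two-sided ones, so that the polynomial and sub-exponential factors are tracked explicitly rather than absorbed into the exponent through a bound of the type $(5d)^{3s}$. Recall that splitting $\support$ into a balanced binary tree with $s$ leaves bounds $\hbox{Prob}(|A_s|\le a_s)$ by a sum, over admissible size profiles $(a_1,a_2,a_4,\ldots,a_s)$ with $a_1=d$ and $a_i\le a_{2i}\le\min(2a_i,n)$, of $\prod_{i\in\Omega}q_i^{s/2i}$, where $q_i=\binom{a_i}{2a_i-a_{2i}}\binom{n-a_i}{a_{2i}-a_i}\big/\binom{n}{a_i}$ is the hypergeometric probability that two independent uniformly random $a_i$-subsets of $[n]$ have union of size $a_{2i}$, and whose exponential rate is exactly the $\psi_i$ of \eqref{eqn:smallpsi}. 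Thus $n\Psi_n$ is built from (i) the $m\log m-m$ parts of the factorials in $\prod_i q_i^{s/2i}$, which reassemble into $\sum_{i\in\Omega}(s/2i)\psi_i$ unchanged; (ii) the $\tfrac12\log(\cdot)$ Stirling corrections; and (iii) a bound on the number of admissible profiles; while $p_n(s,d)$ collects the bounded residue and the $e^{1/(12m)}$ Robbins errors, which on the relevant ranges of the $a_i$ amount to at most the factor $e^{1/4}$.

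First I would fix a profile, apply Robbins' two-sided Stirling formula to each binomial in $\prod_{i\in\Omega}q_i^{s/2i}$ — the upper bound on the numerators and the lower bound on $\binom{n}{a_i}$ — and split each $\log\binom{M}{K}$ into its entropy term $M\mathcal H(K/M)$, the correction $\tfrac12\log\frac{M}{2\pi K(M-K)}$, and an $e^{1/(12M)}$-type error. The entropy terms reproduce $\sum_{i\in\Omega}(s/2i)\psi_i$ verbatim, so the last summand of \eqref{eqn:bigpsin_new} coincides with that of \eqref{eqn:bigpsin}. The heart of the argument is bookkeeping the $\tfrac12\log(\cdot)$ corrections along the tree: the factor $(n-a_i)$ produced in the correction at level $i$ cancels against the $(n-a_i)$ produced at level $i/2$, and the bare $a_i$ combines at its own level with the $(a_{2i}-a_i)$ there into $\log\frac{a_i}{a_{2i}-a_i}$, so that after the cancellations only boundary contributions remain. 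Carrying this out and using the monotonicity $d=a_1\le a_i\le a_{2i}\le ds$ together with the structure of the extremal profiles of parts (a) and (b) — in particular the relation $a_{4i}=a_{2i}\big((a_{2i}-a_i)^2/a_i^2+1\big)$ from \eqref{eqn:unrestrictedsizes}, which, writing $r_i:=(a_{2i}-a_i)/a_i$, forces $r_{2i}=r_i^2$ and hence $a_s=d\prod_j(1+r_{2^j})\le d/(1-r_1)$ — the surviving corrections collapse to a root term with coefficient $\log_2 s-\tfrac32$ on $\log a_s$, i.e.\ the middle summand of \eqref{eqn:bigpsin_new}, plus a residue of order $\log_2^2 s$ generated by the $\log_2 s$ geometric weights $s/2i$, which gives the $\tfrac{3\log 2}{2}\log_2^2 s$ summand.

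Next I would bound the number of admissible profiles: since $a_1=d$ and each $a_{2i}$ lies in $\{a_i,\ldots,\min(2a_i,n)\}$ (and is essentially determined once we impose the extremal recursion or the expected values), the count is at most a product of $\log_2 s$ terms which, using again that the small $a_i$ are geometrically smaller than $a_s$, is dominated by $c\,s^{9/2}a_s^{\log_2 s-3/2}e^{\frac{3\log 2}{2}\log_2^2 s}$ with $c\le 2^{-3}e^{1/4}$; together with the leaf residue and the Robbins errors this produces the prefactor $2^{-3}s^{9/2}e^{1/4}$ of \eqref{eqn:pn_new}. Then $\hbox{Prob}(|A_s|\le a_s)\le(\#\text{profiles})\cdot\max_{\text{profile}}\prod_{i\in\Omega}q_i^{s/2i}\le p_n(s,d)\,e^{n\Psi_n}$ with the new $p_n$ and $\Psi_n$. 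Finally, since only the prefactor and the $O(\log_2^2 s+\log a_s)$ portion of the exponent have changed while the $\psi_i$ that drive the optimisation over profiles are untouched, the characterisation of the extremal profile in parts (a) and (b), namely \eqref{eqn:restrictedsizes} and \eqref{eqn:unrestrictedsizes}, is unaffected.

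The step I expect to be the main obstacle is the telescoping bookkeeping in the second paragraph: the three binomials in $q_i$ carry different ``budgets'' ($a_i$, $n-a_i$, $n$), so their square-root corrections cancel only in staggered pairs across adjacent levels rather than within a level, and one must pair the $\log a_i$, $\log(n-a_i)$, $\log(2a_i-a_{2i})$ and $\log(a_{2i}-a_i)$ pieces carefully — discarding those that only improve the bound and exploiting the extremal-profile identities above — so as to show that \emph{no} correction of order $s$ times a logarithm survives and to pin down the \emph{exact} coefficients $\log_2 s-\tfrac32$ and $\tfrac{3\log 2}{2}$ rather than mere order-of-magnitude estimates. A secondary point, handled exactly as in \cite{bah2013vanishingly}, is that $s$ need not be a power of $2$, so $\Omega$ and the tree are defined with roundings whose boundary effect must be checked to be dominated by the stated expressions. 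It is also worth recording that although the new $p_n(s,d)$ is polynomially \emph{larger} than the old, the exponent $n\Psi_n$ no longer contains the $\Theta(s\log d)$ term $3s\log(5d)$, so the bound is substantially tighter — and this exponent saving is exactly what later yields the reduced complexity $d=\bigO(\log_s(N/s))$.
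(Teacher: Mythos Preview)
Your high-level diagnosis is right: the improvement over Theorem~\ref{thm:prob_expansion_bound} comes from tracking the sub-exponential Stirling corrections explicitly rather than absorbing them into a crude $3s\log(5d)$ term in the exponent, and the $\sum_{i\in\Omega}(s/2i)\psi_i$ part is indeed inherited unchanged. But the mechanism you propose --- telescoping the $\tfrac12\log(\cdot)$ corrections across adjacent tree levels, exploiting the extremal-profile relation $r_{2i}=r_i^2$, and reading off the coefficients $\log_2 s-\tfrac32$ and $\tfrac{3\log 2}{2}$ from the surviving boundary terms --- is not what the paper does, and you yourself flag the telescoping as the main obstacle without carrying it out.

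The paper's route is considerably more direct. It reuses the decomposition $\hbox{P}_n(x,y,z)\le \pi(x,y,z)\,e^{\psi_n(x,y,z)}$ from Definition~\ref{def:prob_1set_poly} and Lemma~\ref{lem:prob_1set_poly}, maximises each $\psi_n$ by setting the $l_{(\cdot)}$ to $a_{(\cdot)}$ (Lemma~\ref{lem:psi_n_behaviour}), and is then left with a product-of-sums $\Pi(l_s,\ldots,l_2,d)$ of the polynomial factors $\pi(\cdot)$. Rather than telescope across levels, it simply bounds each $\pi(x,y,z)\le \pi(x,x,x)=(5/4)^2\sqrt{2\pi x(n-x)/n}<(5/4)^2\sqrt{2\pi x}$ (Corollary~\ref{cor:pi_monotonicity} and \eqref{eqn:prob_1set_poly3d}), so no cross-level cancellation is needed at all. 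The remaining product of sums is then bounded by (number of summands)$\times$(largest summand) at each level, giving $\big[\sqrt{2\pi}(5/4)^2\big]^{2\lceil\log_2 s\rceil-3}\cdot(q_0q_1r_1\cdots)\cdot(a_{Q_0}a_{Q_1}\cdots a_2)^{1/2}$. The two nontrivial factors are handled by elementary counting (Lemma~\ref{lem:num_size_split} gives $q_j,r_j$, whence $q_0q_1r_1\cdots\le s^{\log_2 s-1}$) and by the proportionality $a_{s'}=(1-\epsilon)d\,s'$ for $s'\le s$ (whence $(a_{Q_0}\cdots a_2)^{1/2}\le 2^{-1}e^{1/4}(a_s)^{\log_2 s-3/2}s^{\frac12\log_2 s+\frac32}$). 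The exact constants $\log_2 s-\tfrac32$ and $\tfrac{3\log 2}{2}$ fall out of these products mechanically, not from a delicate telescoping argument. Your extremal-profile identity $r_{2i}=r_i^2$ is not used here; it belongs to the later analysis of $\beta$ (Lemma~\ref{lem:beta}). So while your plan might be made to work, the paper's argument is both simpler and fully explicit where yours is still a sketch at the crucial step.
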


The proof of the lemma is given in Section \ref{sec:plem2}. Asymptotically, the argument of the exponential term in the bound of the probability in \eqref{eqn:prob_expansion_bound1} of Theorem \ref{thm:prob_expansion_bound}, i.e. $\Psi_n\left(a_s,\ldots,a_1\right)$ in \eqref{eqn:bigpsin_new} is more important than the polynomial $p_n(s,d)$ in \eqref{eqn:pn_new} since the exponential factor will dominate the polynomial factor. The significance of the lemma is that $\Psi_n\left(a_s,\ldots,a_1\right)$ in \eqref{eqn:bigpsin_new} is smaller than $\Psi_n\left(a_s,\ldots,a_1\right)$ in \eqref{eqn:bigpsin} since $\frac{3\log 2}{2}\log_2^2 s + \left(\log_2 s - \frac{3}{2}\right)\log a_s$ in \eqref{eqn:bigpsin_new} is asymptotically smaller than $3s\log(5d)$ in \eqref{eqn:bigpsin}, because the former is $\bigO(\mbox{polylog} s)$ while the latter is $\bigO(s)$, since we consider $a_s = \bigO(s)$.

Recall that we are interested in computing $\hbox{Prob}\left(\left|A_s\right| \leq a_s\right)$ when $a_s = (1-\e)ds$. This means having to solve the polynomial equation \eqref{eqn:unrestrictedsizes} to compute as small a bound of $\Psi_n\left((1-\e)ds,\ldots, d \right)$ as possible. We derive an asymptotic solution to  \eqref{eqn:unrestrictedsizes} for $a_s = (1-\e)ds$ and use that solution to get the following bounds. 

\begin{theorem}
\label{thm:prob_expansion_bound_new}
Consider $d,s,n,N \in \N$, fix $\support$ with $|\support|\leq s$, for $\eta>0$, $\beta \geq 1$, and $\e\in\left(0,\frac{1}{2}\right)$, let an $n\times N$ matrix $\A$ be drawn from $\bensemble$, then
\begin{equation}
\label{eqn:prob_expansion_bound_new}
 \hbox{Prob}\left(\left|A_s\right| \leq (1-\e)ds\right) < p_n(s,d,\e) \cdot \exp{\left[n\cdot\Psi_n\left(s,d,\e \right)\right]}
\end{equation}
where 
\begin{align}
\label{eqn:pn_new2}
p_n(s,d,\e) & = \frac{\sqrt[4]{e} \cdot s^{\log_2(1-\e) + 3}}{\sqrt{2^{6}(1-\e)^3d^3}}\,,\\
\label{eqn:bigpsin_new2}
  \Psi_n\left(s,d,\e \right) &\leq -\frac{1}{2n}\left[\eta\beta^{-1}(\beta-1)(1-\e)ds\log_2(s/2) - (5\log 2)\log_2^2 s - 2\log d \log_2 s\right]\,.
\end{align}
\end{theorem}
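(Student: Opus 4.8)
The plan is to push the generic bound of Lemma~\ref{lem:prob_expansion_bound_new} through the specialisation $a_s=(1-\e)ds$, along the lines of \cite{bah2013vanishingly} but with the sharper functions \eqref{eqn:pn_new}--\eqref{eqn:bigpsin_new}. First I would note that in the regime of interest $(1-\e)ds<\hat a_s$: iterating \eqref{eqn:restrictedsizes} gives $\hat a_s=ds\prod_{i\in\Omega}\bigl(1-\hat a_i/(2n)\bigr)$ with $\sum_{i\in\Omega}\hat a_i=\bigO(ds)$, so $\hat a_s\ge ds\,(1-o(1))$ as soon as $n$ grows at least like $ds$. Hence part~(b) of Theorem~\ref{thm:prob_expansion_bound} applies, and the intermediate sizes $a_i$, $i\in\Omega$, are the unique monotone solution of the polynomial system \eqref{eqn:unrestrictedsizes} with $a_1=d$ and $a_s=(1-\e)ds$. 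The work then splits into (i) solving \eqref{eqn:unrestrictedsizes} asymptotically, (ii) substituting that solution into \eqref{eqn:smallpsi} and bounding $\sum_{i\in\Omega}\tfrac{s}{2i}\psi_i$ from above, and (iii) reassembling the algebra of \eqref{eqn:bigpsin_new} into the claimed $p_n(s,d,\e)$ and $\Psi_n(s,d,\e)$.

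For (i), divide \eqref{eqn:unrestrictedsizes} by $a_i^2a_{2i}$ and set $r_i:=a_{2i}/a_i$; the system collapses to the scalar recursion $r_{2i}=(r_i-1)^2+1$ on $i\in\Omega$, subject to $\prod_{i\in\Omega}r_i=a_s/a_1=(1-\e)s$. Writing $r_i=1+t_i$ converts the recursion into $t_{2i}=t_i^2$, so $t_{2^j}=t_1^{2^j}$ and the constraint telescopes,
\[
\prod_{j=0}^{\log_2 s-1}\bigl(1+t_1^{2^j}\bigr)=\frac{1-t_1^{\,s}}{1-t_1}=(1-\e)s ,
\]
which fixes $t_1$ (equivalently the single free size $a_2$) and gives the closed form $a_{2^j}=d\,(1-t_1^{2^j})/(1-t_1)$. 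Since $\log a_{2^j}$ is concave in $j$ (the increments $a_{2^{j+1}}/a_{2^j}=1+t_1^{2^j}$ are decreasing), this closed form lies above its chord through $j=0$ and $j=\log_2 s$, producing the clean geometric control $a_{2^j}\ge(1-\e)\,d\,2^{j}$ (and $a_{2^j}\le d\,2^{j}$) together with a matching estimate of the gap $2-r_{2^j}=1-t_1^{2^j}$; the exponent $\beta\in(1,2]$ in the statement is exactly the slack one keeps in the geometric lower bound $a_{2^j}\ge d\,\beta^{j}$.

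For (ii), feed these into \eqref{eqn:smallpsi}. Because $a_i/n$ and $(a_{2i}-a_i)/(n-a_i)$ are $o(1)$ while $(a_{2i}-a_i)/a_i=r_i-1$ stays in $(0,1)$, I would expand the two small Shannon entropies via $\mathcal{H}(x)=x\log(1/x)+x+\bigO(x^2)$, keep $\mathcal{H}(r_i-1)$ exact, and use that the logarithmic part of $(n-a_i)\mathcal{H}\!\left(\tfrac{a_{2i}-a_i}{n-a_i}\right)$ cancels against $-n\mathcal{H}(a_i/n)$ up to an $\bigO(a_i)$ remainder and an $\bigO(a_i^2/n)$ error. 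The outcome I would aim for is a per-scale bound of the form $\psi_i\le-\eta\,\beta^{-1}(\beta-1)\,a_i$ with $\eta>0$ an absolute constant coming from the entropy estimate and $\beta$ the slack of step~(i). Then, using $a_{2^j}\ge(1-\e)d\,2^{j}$, each weighted term is scale-free, $\tfrac{s}{2^{j+1}}\psi_{2^j}\le-\tfrac12\eta\beta^{-1}(\beta-1)(1-\e)ds$, and summing over the $\log_2 s$ dyadic scales in $\Omega$ (and collecting the $\bigO(a_i)$ and $\bigO(a_i^2/n)$ leftovers, which are $\bigO(\mathrm{polylog}\,s)$ once divided by $n$) gives $\sum_{i\in\Omega}\tfrac{s}{2i}\psi_i\le-\tfrac12\eta\beta^{-1}(\beta-1)(1-\e)ds\log_2(s/2)$. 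For (iii), put $a_s=(1-\e)ds$ in \eqref{eqn:bigpsin_new} and split $(\log_2 s-\tfrac32)\log a_s=(\log_2 s)\log(1-\e)+(\log_2 s)\log d+(\log 2)\log_2^2 s-\tfrac32\log((1-\e)ds)$; exponentiating the bound of Lemma~\ref{lem:prob_expansion_bound_new}, the genuinely polynomial factors $e^{(\log_2 s)\log(1-\e)}=s^{\log_2(1-\e)}$ and $e^{-\frac32\log((1-\e)ds)}=((1-\e)ds)^{-3/2}$ multiply $p_n(s,d)$ of \eqref{eqn:pn_new} into exactly $p_n(s,d,\e)$ of \eqref{eqn:pn_new2}, while the remaining exponent equals $\tfrac1n\bigl[\tfrac{5\log 2}{2}\log_2^2 s+(\log_2 s)\log d+\sum_{i\in\Omega}\tfrac{s}{2i}\psi_i\bigr]$, and the bound just obtained for the sum turns this into \eqref{eqn:bigpsin_new2}.

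The mechanical part is (i)--(iii): once the substitution $r_i=1+t_i$ is spotted the recursion telescopes, and the reassembly in (iii) is bookkeeping. The real obstacle is the per-scale entropy estimate in (ii): one has to carry the small-argument expansions of $\mathcal{H}$ to the precise order at which the $\log(n/a_i)$ terms cancel, track the sign so that what survives is genuinely $\le-\eta\beta^{-1}(\beta-1)a_i$ \emph{uniformly over all $\log_2 s$ dyadic scales} (including the largest scale $i=s/2$, where $a_i/n$ is largest), and then show every leftover is only $\bigO(\mathrm{polylog}\,s)$ after the $1/n$, so that it perturbs the lower-order $\log_2^2 s$ and $\log d\,\log_2 s$ corrections of \eqref{eqn:bigpsin_new2} but not its leading $ds\log_2(s/2)$ term.
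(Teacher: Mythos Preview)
Your overall plan---specialise Lemma~\ref{lem:prob_expansion_bound_new} at $a_s=(1-\e)ds$, solve the cubic system asymptotically, bound each $\psi_i$ from above by a negative multiple of $a_i$, sum over scales, and reshuffle the polynomial leftovers into $p_n(s,d,\e)$---is exactly the paper's route, and your step~(iii) bookkeeping and the series bound $\sum_{i\in\Omega}\tfrac{s}{2i}\tfrac{a_i}{n}\ge\tfrac{\log_2(s/2)}{2n}(1-\e)ds$ match the paper line for line. Your solution of the cubics in step~(i) is also equivalent to the paper's: the paper observes that \eqref{eqn:unrestrictedsizes} forces $(a_{2i}-2a_i)/a_i^2$ to be a constant $c$, writes $c=-\beta/n$, and obtains $ca_{2^j}+1=(cd+1)^{2^j}$; your $t_1$ is precisely $cd+1=1-\beta d/n$, and your closed form $a_{2^j}=d(1-t_1^{2^j})/(1-t_1)$ is the same formula. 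So there is no genuine divergence in~(i), only a reparametrisation.

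The real difference is step~(ii). The paper does \emph{not} Taylor-expand $\mathcal H$. Instead it substitutes $a_{2i}-a_i=a_i(1+ca_i)$ into \eqref{eqn:smallpsi}, sets $x_1=\tfrac{a_i}{n}\cdot\tfrac{1+ca_i}{1-a_i/n}$, $x_2=\tfrac{a_i}{n}$, $x_3=-ca_i$, checks the exact identity $a_i(x_3-x_2)=(n-a_i)(x_2-x_1)$, and then applies the mean value theorem twice to $\mathcal H$ and once to $\mathcal H'$, arriving at $\psi_i\le a_i\eta(x_3-x_2)\mathcal H''(x_3)=-a_i\eta(\beta-1)\beta^{-1}(1-\beta a_i/n)^{-1}$. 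Here $\beta$ is not a free ``slack'' in a geometric lower bound as you describe; it is the specific constant $-cn$ determined by the boundary data $a_1=d$, $a_s=(1-\e)ds$, and $\eta>0$ is the gap $\xi_{32}-\xi_{21}$ between two mean-value points. This MVT route sidesteps precisely the obstacle you flag: there is no need to track cancellation of $\log(n/a_i)$ terms or to worry about uniformity at the largest scale, because the identity $a_i(x_3-x_2)=(n-a_i)(x_2-x_1)$ kills the first-order contribution exactly, and the second-order remainder is manifestly negative via $\mathcal H''<0$. Your Taylor-expansion approach could in principle be pushed through, but it would be messier and you would have to recover the same cancellation by hand; the paper's argument is both shorter and cleaner on this point.
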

The proof of this theorem is also found in Section \ref{sec:pthm3}. Since Theorem \ref{thm:prob_expansion_bound_new} holds for a fixed $\support$ of size at most $s$, if we want this to hold for all $\support$ of size at most $s$, we do a union bound over all $\support$ of size at most $s$. This leads to the following probability bound.

\begin{theorem}
\label{thm:prob_expansion_bound_new2}
Consider $d,s,n,N \in \N$, and all $\support$ with $|\support|\leq s$, for $\tau>0$, and $\e\in\left(0,\frac{1}{2}\right)$, let an $n\times N$ matrix $\A$ be drawn from $\bensemble$, then
\begin{equation}
\label{eqn:prob_expansion_bound_new2}
 \hbox{Prob}\left(\left|A_s\right| \leq (1-\e)ds\right) < p_N(s,d,\e) \cdot \exp{\left[N\cdot\Psi_N\left(s,d,\e \right)\right]}\,,
\end{equation}
where 
\begin{align}
\label{eqn:pN}
p_{N}(s,d,\e) & = \frac{5\cdot \sqrt[4]{e} \cdot s^{\log_2(1-\e) + 5/2}}{\sqrt{2^{10}(1-\e)^3d^3
\pi(1-s/N)}} \,,\\
\label{eqn:bigpsiN}
  \Psi_N\left(s,d,\e \right) &\leq -\frac{s}{N}\log\left(\frac{s}{N}\right) + \frac{s}{N} - \frac{\tau(1-\e)d}{2\log 2}\frac{s}{N}\log\left(\frac{s}{2}\right) + o(N)\,.
\end{align}
\end{theorem}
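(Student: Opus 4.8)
The plan is to start from the single-set bound in Theorem \ref{thm:prob_expansion_bound_new} and perform a union bound over all $\support\subseteq[N]$ with $|\support|\le s$. Since the bound in \eqref{eqn:prob_expansion_bound_new} is monotone in $|\support|$ (a smaller support only makes expansion easier), it suffices to union bound over the $\binom{N}{s}$ sets of size exactly $s$, so that
\[
\hbox{Prob}\left(\left|A_s\right|\le(1-\e)ds\right) < \binom{N}{s}\, p_n(s,d,\e)\cdot\exp\left[n\cdot\Psi_n(s,d,\e)\right].
\]
First I would absorb the binomial coefficient. Using a Stirling-type estimate of the form $\binom{N}{s}\le \frac{1}{\sqrt{2\pi s(1-s/N)}}\exp\left[N\mathcal H(s/N)\right]$ (the standard entropy bound, consistent with the $p_N$ denominator containing $\sqrt{2^{10}(1-\e)^3 d^3\pi(1-s/N)}$), the $\exp[N\mathcal H(s/N)]$ factor is folded into the exponential term and the $1/\sqrt{2\pi s(1-s/N)}$ prefactor is folded into $p_N(s,d,\e)$, which explains the extra factor of $5$, the change of the exponent of $s$ from $\log_2(1-\e)+3$ to $\log_2(1-\e)+5/2$, and the appearance of $\pi(1-s/N)$ in \eqref{eqn:pN}.

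Next I would expand the entropy term. Writing $\mathcal H(s/N) = -\tfrac{s}{N}\log\tfrac{s}{N} - (1-\tfrac{s}{N})\log(1-\tfrac{s}{N})$ and using $-(1-x)\log(1-x)\le x$ for $x\in(0,1)$ with $x=s/N$, we get $\mathcal H(s/N)\le -\tfrac{s}{N}\log\tfrac{s}{N} + \tfrac{s}{N}$, which supplies the first two terms of \eqref{eqn:bigpsiN}. Then I would combine $N\mathcal H(s/N)$ with $n\,\Psi_n(s,d,\e)$: dividing the bound \eqref{eqn:bigpsin_new2} by $N$ and noting $n\le N$, the dominant negative term $-\tfrac{\eta}{2}\beta^{-1}(\beta-1)(1-\e)ds\log_2(s/2)$ becomes, after setting $\tau := \eta\beta^{-1}(\beta-1)$ and converting $\log_2(s/2)=\log(s/2)/\log 2$, exactly $-\tfrac{\tau(1-\e)d}{2\log 2}\tfrac{s}{N}\log(s/2)$, matching the third term of \eqref{eqn:bigpsiN}. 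The remaining pieces of \eqref{eqn:bigpsin_new2}, namely the $\mathcal O(\mathrm{polylog}\,s)$ terms $(5\log 2)\log_2^2 s$ and $2\log d\,\log_2 s$, divided by $N$ are asymptotically negligible and get swept into the $o(N)$ — more precisely $o(1)$ after the factor of $1/N$, but written $o(N)$ in the displayed normalization — remainder, which is where I should be careful to state exactly which terms are being discarded and under what growth regime ($d,s\to\infty$ with the stated ratios fixed).

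The main obstacle is bookkeeping the prefactors cleanly: one must verify that multiplying $p_n(s,d,\e)$ from \eqref{eqn:pn_new2} by the sub-exponential Stirling correction $1/\sqrt{2\pi s(1-s/N)}$ yields precisely \eqref{eqn:pN}, including tracking the constant $\sqrt[4]{e}$, the powers of $2$ (from $2^6$ to $2^{10}$ and the stray $5$), and the exponent shift on $s$. This is routine but error-prone. A secondary subtlety is justifying that the union bound over sizes $1$ through $s$ is dominated by size $s$ uniformly — this follows from monotonicity of the failure event's probability bound in $|\support|$, but I would state it explicitly rather than leave it implicit. Once the prefactor accounting is done, the exponential part is a direct substitution of the entropy expansion and the definition $\tau := \eta\beta^{-1}(\beta-1)$ into \eqref{eqn:bigpsin_new2}, so no genuinely new estimate is required beyond Theorem \ref{thm:prob_expansion_bound_new} and the Stirling bound on $\binom{N}{s}$.
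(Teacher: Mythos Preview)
Your proposal is correct and follows essentially the same route as the paper: apply the union bound over $\binom{N}{s}$ sets to the single-set estimate of Theorem~\ref{thm:prob_expansion_bound_new}, bound $\binom{N}{s}$ via the Stirling/entropy inequality~\eqref{eqn:stirling}, split off the sub-exponential prefactor to obtain~\eqref{eqn:pN}, and then bound the exponent $\Psi_N = \mathcal H(s/N) + (n/N)\Psi_n$ using $\mathcal H(s/N)\le -\tfrac{s}{N}\log\tfrac{s}{N}+\tfrac{s}{N}$, the definition $\tau=\eta\beta^{-1}(\beta-1)$, and absorption of the polylog terms into the remainder. Your extra remarks on monotonicity in $|\support|$ and on the $o(N)$ versus $o(1)$ normalization are points the paper leaves implicit.
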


\begin{proof}
Applying the union bound over all $\support$ of size at most $s$ to \eqref{eqn:prob_expansion_bound_new} leads to the following. 
\begin{equation}
\label{eqn:unionbnd}
 \hbox{Prob}\left(\left|A_s\right| \leq (1-\e)ds\right) < \binom{N}{s} p_n(s,d,\e) \exp{\left[n\cdot\Psi_n\left(s,d,\e \right)\right]}.
\end{equation}
Then we used the upper bound of  \eqref{eqn:stirling} to bound the combinatorial term $\binom{N}{s}$ in \eqref{eqn:unionbnd}.
After some algebraic manipulations, we separated the the polynomial term, given in \eqref{eqn:pN}, from the exponential terms whose exponent is  
\begin{equation}
 \label{eqn:bigpsiN2}
 \Psi_N\left(s,d,\e \right) := \mathcal{H}\left(\frac{s}{N}\right) + \frac{n}{N}\Psi_n\left(s,d,\e \right).
 \end{equation} 
 We upper bound $\Psi_N\left(s,d,\e \right)$ in \eqref{eqn:bigpsiN} by upper bounding $\mathcal{H}\left(\frac{s}{N}\right)$ with $-\frac{s}{N}\log\left(\frac{s}{N}\right) + \frac{s}{N}$ and the upper bound of $\Psi_n\left(s,d,\e \right)$ in \eqref{eqn:bigpsin_new2}. The $o(N)$ decays to zeros with $N$ and its a result of dividing the polylogarithmic terms of $s$ in \eqref{eqn:bigpsin_new2}, and $\tau = \eta\beta^{-1}(\beta-1)$ in \eqref{eqn:bigpsin_new2}. This concludes the proof.
\end{proof}

The next corollary easily follows from Theorem \ref{thm:prob_expansion_bound_new2} and it is equivalent to Theorem \ref{thm:probconstruct}. Its statement is that if the conditions therein holds, then the probability that the cardinality of the set of neighbors of any $\support$ with $|\support|\leq s$ is less than $(1-\e)ds$ goes to zero as dimensions of $\A$ grows. On the other hand, the probability that the cardinality of the set of neighbors of any $\support$ with $|\support|\leq s$ is greater than $(1-\e)ds$ goes to one as dimensions of $\A$ grows. Implying that $\A$ is the adjacency matrix of a $(s,d,\e)$-expander graph.

\begin{corollary}
\label{cor:expander_asymptotics}
Given $d,s,n,N \in \N$, and $\e\in\left(0,\frac{1}{2}\right)$, for $d\geq \frac{c_d\log\left(N/s\right)}{\e\log s}$ and $n\geq \frac{c_ns\log\left(N/s\right)}{\e^2}$, with $c_d,c_n>0$. Let an $n\times N$ matrix $\A$ be drawn from $\bensemble$, in the proportional growth asymptotics 
\begin{equation}
\label{eqn:prob_expansion_bound_new3}
 \hbox{Prob}\left(\left|A_s\right| \leq (1-\e)ds\right) \rightarrow 0.
\end{equation}
\end{corollary}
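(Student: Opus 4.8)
The plan is to deduce Corollary~\ref{cor:expander_asymptotics} directly from Theorem~\ref{thm:prob_expansion_bound_new2} by showing that, under the stated lower bounds on $d$ and $n$, the exponent $N\cdot\Psi_N(s,d,\e)$ in \eqref{eqn:prob_expansion_bound_new2} tends to $-\infty$ while the polynomial prefactor $p_N(s,d,\e)$ grows only polynomially in $s$ (hence sub-exponentially), so the right-hand side of \eqref{eqn:prob_expansion_bound_new2} goes to zero. The whole argument is a sign/growth-rate bookkeeping exercise on the expression in \eqref{eqn:bigpsiN}.

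First I would rewrite \eqref{eqn:bigpsiN} in terms of the ratios that are fixed in the proportional-growth regime. Writing $s/N\to\r\d$ (since $s/n\to\r$, $n/N\to\d$), the first two terms $-\frac{s}{N}\log(s/N)+\frac{s}{N}$ converge to the finite constant $-\r\d\log(\r\d)+\r\d$; dividing by nothing extra, $N$ times these is $\bigO(N)$ but with a \emph{positive} coefficient of order $\bigO(1)$ — call it $c_H\,N$ with $c_H>0$ fixed. The decisive term is $-\frac{\tau(1-\e)d}{2\log2}\frac{s}{N}\log(s/2)$; multiplied by $N$ this is $-\frac{\tau(1-\e)d}{2\log2}\,s\log(s/2)$. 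So the exponent behaves, up to the $o(N)$ and the polylog corrections folded into $o(N)$, like
\begin{equation}
N\cdot\Psi_N(s,d,\e)\;\le\; c_H N \;-\; \frac{\tau(1-\e)}{2\log2}\,d\,s\log(s/2)\;+\;o(N).
\end{equation}
Now I substitute the hypothesis $d\ge \frac{c_d\log(N/s)}{\e\log s}$. Then $d\,s\log(s/2)\ge \frac{c_d(1-\e)}{\e}\cdot\frac{s\log(N/s)\log(s/2)}{\log s}$, and since $\log(s/2)/\log s\to1$, this is $(1+o(1))\frac{c_d(1-\e)}{\e}\,s\log(N/s)$. In the proportional regime $s\log(N/s)=s\log(1/(\r\d))(1+o(1))$ is $\Theta(N)$ (because $s=\Theta(N)$ and $\log(N/s)=\Theta(1)$), so the negative term is $\ge c' N$ for a constant $c'$ that can be made larger than $c_H$ by choosing $c_d$ large enough — and, crucially, $c_d$ is a free constant in the statement (it multiplies a $\bigO$), so there is no circularity. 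Hence $N\cdot\Psi_N(s,d,\e)\le -(c'-c_H)N+o(N)\to-\infty$.

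Next I would dispose of the prefactor: from \eqref{eqn:pN}, $p_N(s,d,\e)=\bigO\!\left(s^{\log_2(1-\e)+5/2}\big/\sqrt{d^3}\right)$, which for $s\to\infty$ is at most polynomial in $s$ and therefore $e^{o(N)}$. Combining, the bound in \eqref{eqn:prob_expansion_bound_new2} is $\le e^{o(N)}\cdot e^{-(c'-c_H)N+o(N)}\to0$, indeed exponentially, which gives \eqref{eqn:prob_expansion_bound_new3}. The role of the condition $n\ge \frac{c_n s\log(N/s)}{\e^2}$ is not needed for the limit \eqref{eqn:prob_expansion_bound_new3} per se (the exponent in \eqref{eqn:bigpsiN} no longer displays $n$, having been simplified via $n=s/\r$), but I would note it is the companion guarantee that ties $d$ and $n$ together through $n=\Theta(sd)$, consistent with Theorem~\ref{thm:probconstruct}; I would state it for completeness and to match the theorem.

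The main obstacle — really the only subtle point — is making the comparison $c'>c_H$ rigorous, i.e. verifying that with $d$ at its prescribed lower order the negative term genuinely dominates the positive entropy term $-\frac{s}{N}\log(s/N)+\frac{s}{N}$ uniformly, and that all the terms I am lumping into $o(N)$ (the polylogarithmic pieces $(5\log2)\log_2^2 s$ and $2\log d\,\log_2 s$ from \eqref{eqn:bigpsin_new2}, divided through appropriately) really are $o(N)$ in the proportional regime. Since $\log_2^2 s=\bigO(\log^2 N)=o(N)$ and $\log d\,\log s=\bigO(\log^2 N)=o(N)$, this is routine, but it is the step that must be written carefully so that the choice of $c_d$ (and $c_n$) is seen to be made once, independent of $N$, and then drives the probability to zero. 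I would present it as: fix $c_d$ large enough that $\frac{c_d(1-\e)}{\e}\log(1/(\r\d))>-\r\d\log(\r\d)+\r\d$, absorb the $o(N)$, and conclude.
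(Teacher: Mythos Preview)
Your proposal is correct and follows essentially the same route as the paper: both deduce the corollary from Theorem~\ref{thm:prob_expansion_bound_new2} by checking that the exponent in \eqref{eqn:bigpsiN} is negative (so that $N\Psi_N\to-\infty$ and overwhelms the polynomial prefactor) once $d$ meets its stated lower bound. The paper's write-up is slightly cleaner because it factors $s/N$ out of \eqref{eqn:bigpsiN} and reduces everything to the single scalar inequality $-\log(s/N)+1-\frac{\tau(1-\e)d}{2\log 2}\log(s/2)<0$, from which the bound $d>\frac{2\log 2(\log(N/s)+1)}{\tau(1-\e)\log(s/2)}$ drops out directly; you instead multiply through by $N$ and compare two $\Theta(N)$ terms, which is equivalent but a touch less tidy (and your final displayed inequality for $c_d$ has lost the factor $\frac{\tau}{2\log 2}\cdot\r\d$ on the left---a bookkeeping slip, not a real gap). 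Your observation that the hypothesis on $n$ plays no independent role in forcing the probability to zero is correct and worth stating: the paper's own treatment of $n$ is a consistency check rather than a necessary input to the limit.
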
 

\begin{proof}
It suffice to focus on the exponent of \eqref{eqn:prob_expansion_bound_new2}, more precisely on the bound of $\Psi_N\left(s,d,\e \right)$ in \eqref{eqn:bigpsiN}, i.e.
\begin{equation}
 \label{eqn:bigpsiNbound}
 -\frac{s}{N}\log\left(\frac{s}{N}\right) + \frac{s}{N} - \frac{\tau(1-\e)d}{2\log 2}\frac{s}{N}\log\left(\frac{s}{2}\right) + o(N)\,.
  \end{equation} 
We can ignore the $o(N)$ term as this goes to zero as $N$ grows, and show that the remaining sum is negative. The remaining sum is
\begin{equation}
\label{eqn:complexity_condition}
-\frac{s}{N}\log\left(\frac{s}{N}\right) + \frac{s}{N} - \frac{\tau(1-\e)d}{2\log 2}\frac{s}{N}\log\left(\frac{s}{2}\right) = \frac{s}{N}\left[-\log\left(\frac{s}{N}\right) + 1 - \frac{\tau(1-\e)d}{2\log 2}\log\left(\frac{s}{2}\right)\right]\,.
\end{equation}
Hence, we can further focus on the sum in the square brackets, and find conditions on $d$ that will make it negative. We require
\begin{align}
\label{eqn:complexity_condition_d}
-\log\left(\frac{s}{N}\right) + 1 - \frac{\tau(1-\e)d}{2\log 2}\log\left(\frac{s}{2}\right) < 0, \quad & \Rightarrow \quad d > \frac{2\log2 \left(\log\left(N/s\right) + 1\right)}{\tau(1-\e)\log(s/2)} \\
\label{eqn:complexity_condition_d2}
& \Leftrightarrow \quad d\geq \frac{c_d\log(N/s)}{\e\log s}, \quad \exists ~ c_d > 0\,.
\end{align}
Recall $\tau = \eta\beta^{-1}(\beta-1)$ and $\beta$ is a function of $\e$, with $\beta(\e) \approx 1+\e$. Therefore, $\tau$ is a function of $\e$, and $\tau(\e) = \bigO(\e)$, hence there exists a $c_d > 0$ for \eqref{eqn:complexity_condition_d2} to hold. 

With regards to the complexity of $n$, we go back to the right hand side (RHS) of \eqref{eqn:complexity_condition_d} and we substitute $\frac{C_d\log\left(N/s\right)}{\e\log\left(s/2\right)}$ with $C_d>0$ for $d$ in the RHS of \eqref{eqn:complexity_condition_d} to get the following.
\begin{align}
\label{eqn:complexity_condition_n}
-\log\left(\frac{s}{N}\right) + 1 - \frac{C_d\tau(1-\e)\log\left(N/s\right)}{2\e \log 2} < 0\,.
\end{align}
Now we assume $n = \frac{C_ns\log\left(N/s\right)}{\e^2}$ with $C_n>0$ for $n$ and substitute this in \eqref{eqn:complexity_condition_n} to get the following. 
\begin{align}
\label{eqn:complexity_condition_n2}
-\log\left(\frac{s}{N}\right) + 1 - \frac{C_d\tau(1-\e)\e n}{2C_n s\log 2} < 0, \quad & \Rightarrow \quad n > \frac{2C_ns\log2 \left(\log\left(N/s\right) + 1\right)}{C_d\tau\e(1-\e)} \\
\label{eqn:complexity_condition_n3}
& \Leftrightarrow \quad n\geq \frac{c_ns\log\left(N/s\right)}{\e^2}\,, \quad \exists ~ c_n > 0\,.
\end{align}
Again since $\tau(\e) = \bigO(\e)$, hence there exists a $c_n > 0$ for \eqref{eqn:complexity_condition_n3} to hold.  The bound of $n$ in \eqref{eqn:complexity_condition_n3} agrees with our earlier assumption, thus concluding the proof.
\end{proof}

 \section{Discussion} \label{sec:discuss}
 \subsection{Comparison to other constructions}
 In addition to being the first probabilistic construction of adjacency matrices of expander graphs with such a small degree, quantitatively our results compares favorably to existing probabilistic constructions.We use the standard tool of phase transitions to compare our construction to the construction proposed in \cite{berinde2009advances} and those proposed in \cite{buhrman2002bitvectors}. The phase transition curve $\r_{BT}(\d;d,\e)$ we derived in Lemma \ref{lem:prob_expander_existence} is the $\r$ that solves the following equation.
 \begin{equation}
 \label{eqn:pt-bt}
 -\r\log\left(\d\r\right) + \r - \frac{\tau(1-\e)d\r\log(\d\r)}{2\log 2} - \frac{\tau\e(1-\e)d}{2c_n\log 2} + \frac{\tau(1-\e)d\r}{2} = 0\,, 
 \end{equation}
where $c_n>0$ is as in \eqref{eqn:complexity_condition_n3}. Equation \ref{eqn:pt-bt}  comes from taking the limit, in the proportional growth asymptotics, of the bound in \eqref{eqn:bigpsiN}, setting that to zero and simplifying. Similarly, for any $\support$ with $|\support|\leq s$, Berinde in \cite{berinde2009advances} derived the following bound on the set of neigbours of $\support$, i.e. $A_s$.
\begin{equation}
\label{eqn:prob_expansion_bound_BI}
\hbox{Prob}\left(\left|A_s\right| \leq (1-\e)ds\right) < \binom{N}{s} \binom{ds}{\e ds} \left(\frac{ds}{n}\right)^{\e ds}\,.
\end{equation}
We then express the bound in \eqref{eqn:prob_expansion_bound_BI} as the product of a polynomial term and an exponential term. A bound of the exponent is carefully derived as in the derivations above. We set the limit, in the proportional growth asymptotics, of this bound to zero and simplify to get the following.
 \begin{equation}
 \label{eqn:pt-bi}
 (\e d - 1)\log\r  - \log\d+ (1+\e d) - \e d\log\left(\e/d\right) = 0\,.
 \end{equation}
We refer to the $\r$ that solves \eqref{eqn:pt-bi} as the phase transition for the construction proposed by Berinde in \cite{berinde2009advances} and denote this $\r$ (the phase transition function) as $\r_{BI}(\d;d,\e)$. Another probabilistic construction was proposed by Burhman et al. in \cite{buhrman2002bitvectors}. In conforminty with the notation used in this manuscript their bound is equivalent to the following, also stated in a similar form by Indyk and Razenshteyn in \cite{indyk2013model}.
\begin{equation}
\label{eqn:prob_expansion_bound_BM}
\hbox{Prob}\left(\left|A_s\right| \leq (1-\e)ds\right) < \binom{N}{s} \left(\frac{\nu \e n}{ds}\right)^{-\e ds}\,,
\end{equation}
where $\nu > 0$. We again express the bound in \eqref{eqn:prob_expansion_bound_BM} as the product of a polynomial term and an exponential term. A bound of the exponent is carefully derived as in the derivations above. We set the limit, in the proportional growth asymptotics, of this bound to zero and simplify to get the following.
 \begin{equation}
 \label{eqn:pt-bm}
 (\e d - 1)\log\r  - \log\d+ 1 - \e d\log\left(\nu\e/d\right) = 0\,.
 \end{equation}
Similarly, we refer to the $\r$ that solves \eqref{eqn:pt-bm} as the phase transition for the construction proposed by Burhman et al. in \cite{buhrman2002bitvectors} and denote this $\r$ as $\r_{BM}(\d;d,\e)$. We compute numerical solutions to \eqref{eqn:pt-bt}, \eqref{eqn:pt-bi}, and  \eqref{eqn:pt-bm} to derive the phase transitions $\r_{BT}(\d;d,\e)$, $\r_{BI}(\d;d,\e)$, and $\r_{BM}(\d;d,\e)$ respectively. These are plotted in the {\em left panel} of Figure \ref{fig:pt-compare}. It is clear that our construction has a much higher phase transition than the others. Recall that the phase transition curves in these plots depict construction of adjacency matrices of $(s,d,\e)$-expanders with high probability for ratios of $s,n$ and $N$ (since $\r := s/n$, and $\d := n/N$) below the curve; and the failure to construct adjacency matrices of $(s,d,\e)$-expanders with high probability for ratios of $s,n$ and $N$ above the curve. Essentially, the larger the area under the curve the better.
\begin{remark}
It is easy to see that $\r_{BI}(\d;d,\e)$ is a special case of $\r_{BM}(\d;d,\e)$ since the two phase transitions will coincide, or equivalently \eqref{eqn:pt-bi} and  \eqref{eqn:pt-bm} will be the same, when $\nu = e^{-1}$. One could argue that Berinde's derivation in \cite{berinde2009advances} suffers from over counting.
\end{remark}

\begin{figure}[h]
\centering
\includegraphics[width=0.45\textwidth]{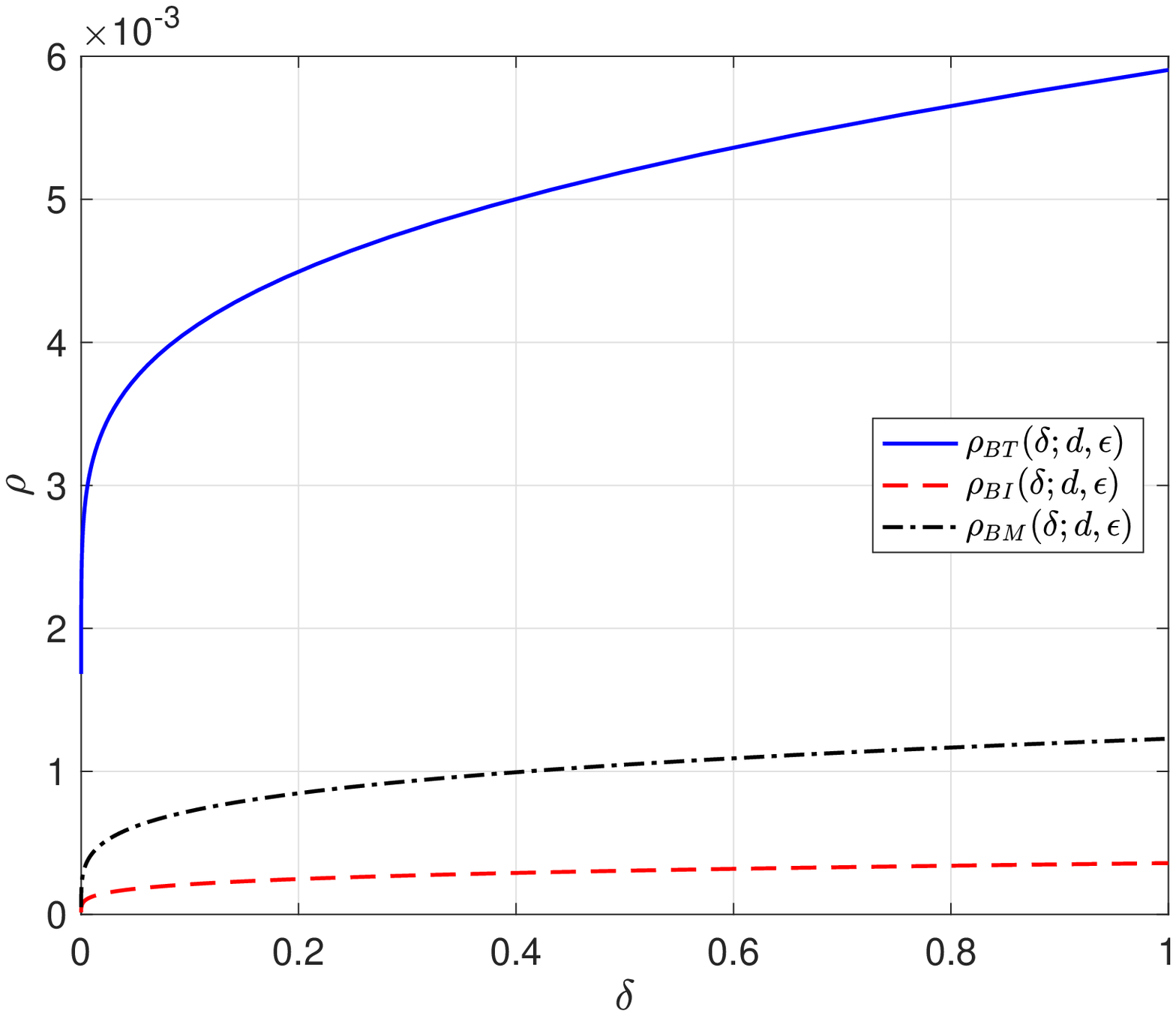} 
\includegraphics[width=0.45\textwidth]{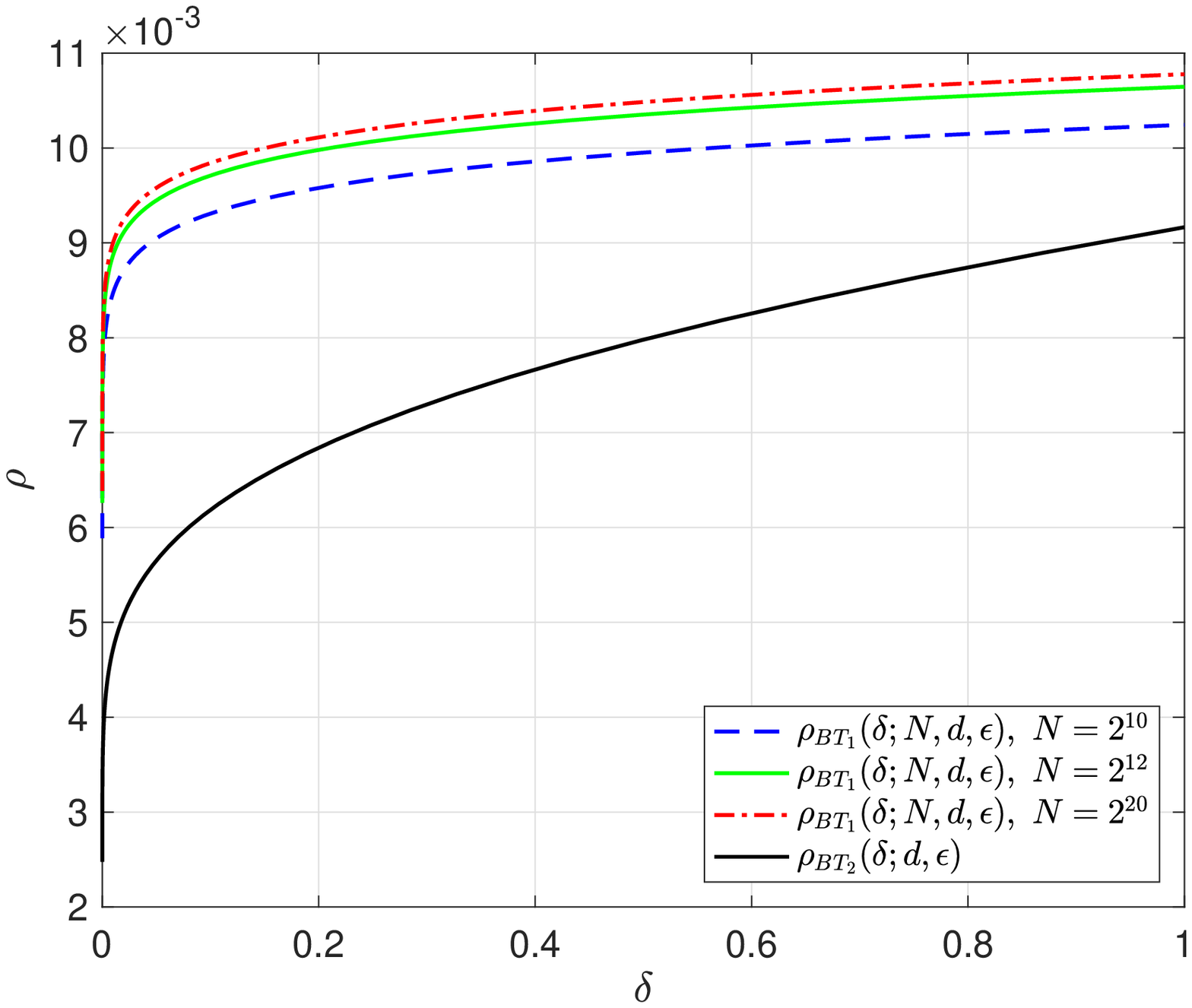} 
\caption{Phase transitions plots with fixed $d = 2^5$, $\e = 1/6$, and $\d\in\left[10^{-6},1\right]$ on a logarithmically spaced grid of 100 points. {\em Left panel:} A comparison of $\r_{BT}(\d;d,\e)$, $\r_{BI}(\d;d,\e)$, and $\r_{BM}(\d;d,\e)$, where $c_n = 2$. {\em Right panel:} A comparison of $\r_{BT}(\d;d,\e)$ denoted as $\r_{BT_2}(\d;d,\e)$ to our previous $\r_{BT}(\d;d,\e)$ denoted as $\r_{BT_1}(\d;d,\e)$ in \cite{bah2013vanishingly} with different values of $N$, (i.e. $2^{10}, 2^{12}$, and $2^{20}$) and $c_n = 2/3$.}
\label{fig:pt-compare}
\end{figure}

Given that this work is an improvement of our work in \cite{bah2013vanishingly} in terms of simplicity in computing $\r_{BT}(\d;d,\e)$, for completeness we compare our new phase transition $\r_{BT}(\d;d,\e)$ denoted as $\r_{BT_2}(\d;d,\e)$ to our previous $\r_{BT}(\d;d,\e)$ denoted as $\r_{BT_1}(\d;d,\e)$ in the {\em right panel} of Figure \ref{fig:pt-compare}. Each computation of $\r_{BT_1}(\d;d,\e)$ requires the specification of $N$, which is not needed in the computation of $\r_{BT_2}(\d;d,\e)$, hence the simplification. However, the simplification led to a lower phase transition as expected, which is confirmed by the plots in the {\em right panel} of Figure \ref{fig:pt-compare}. 

\begin{remark}
These simulations also inform us about the size of $c_n$. See from the plots of $\r_{BT}(\d;d,\e)$ and $\r_{BT_2}(\d;d,\e)$ that the smaller the value of $c_n$ the higher the phase transition but since $\r_{BT_2}(\d;d,\e)$ has to be a lower bound of $\r_{BT_1}(\d;d,\e)$, for values of $c_n$ much smaller than $2/3$, the lower bound will fail to hold. This informed the choice of $c_n=2$ in the plot of $\r_{BT}(\d;d,\e)$ in the {\em left panel} of Figure \ref{fig:pt-compare}.
\end{remark}
 
 \subsection{Implications for combinatorial compressed sensing}
When the sensing matrices are restricted to the sparse binary matrices considered in this manuscript, compressed sensing is usually referred to as {\em combinatorial compressed sensing} a term introduced in \cite{berinde2008combining} and used extensively in \cite{mendoza2017expander,mendoza2017robust}. In this setting, compressed sensing is more-or-less equivalent to linear sketching. The implications of our results on combinatorial compressed sensing are two-fold. One is on the $\ell_1$-norm RIP, we donate as RIP-1; while the second is in the comparison of performance of recovery algorithms for combinatorial compressed sensing.

\subsubsection{RIP-1}
As can be seen from \eqref{eqn:optimality}, the recovery errors in compressed sensing depend on the RIC, i.e. $\d_s$. The following lemma deduced from Theorem 1 of \cite{berinde2008combining} shows that a scaled $\A$ drawn from $\xensemble$ have RIP with $\d_s = 2\e$.
\begin{lemma}
\label{lem:rip1}
Consider $\e\in(0,\frac{1}{2})$ and let $\A$ be drawn from $\xensemble$, then $\Ph = \A/d$ satisfies the following RIP-1 condition
\begin{equation}
\label{eqn:rip1}
\left(1-2\e\right)\|\x\|_1 \leq \|\Ph\x\|_1 \leq \|\x\|_1, \quad \forall ~s\mbox{--sparse } \x.
\end{equation}
\end{lemma}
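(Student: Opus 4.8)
The plan is to prove the two bounds in \eqref{eqn:rip1} separately, following the collision-counting argument behind Theorem~1 of \cite{berinde2008combining}. The upper bound is immediate from left-regularity: every column of $\A$ has exactly $d$ ones, so for any $\x$
\[
\|\Ph\x\|_1 = \frac1d\sum_{j=1}^n\Big|\sum_{i:\,e_{ij}\in\edges}x_i\Big| \;\le\; \frac1d\sum_{j=1}^n\sum_{i:\,e_{ij}\in\edges}|x_i| \;=\; \frac1d\sum_{i=1}^N d\,|x_i| \;=\; \|\x\|_1,
\]
the double sum counting each $|x_i|$ exactly $d$ times.

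For the lower bound, fix an $s$-sparse $\x$ with support $\support=\{i_1,\dots,i_r\}$, $r\le s$, ordered so that $|x_{i_1}|\ge\cdots\ge|x_{i_r}|$. Call an edge $e_{i_k j}\in\edges$ a \emph{collision} if $j\in\Gamma(\{i_1,\dots,i_{k-1}\})$, and let $c_k$ be the number of collision edges incident to $i_k$. Growing the neighbourhood of the prefix $\{i_1,\dots,i_m\}$ one coordinate at a time, each $i_k$ contributes $d-c_k$ new neighbours, so $|\Gamma(\{i_1,\dots,i_m\})| = dm - \sum_{k=1}^m c_k$; combining with the $(s,d,\e)$-expansion bound $|\Gamma(\{i_1,\dots,i_m\})|\ge(1-\e)dm$ gives the prefix estimate $\sum_{k=1}^m c_k\le\e d m$ for every $m\le r$.

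Next, assign to each neighbour $j\in\Gamma(\support)$ its first contributor $i(j)$, the earliest coordinate in the ordering adjacent to $j$. The reverse triangle inequality gives $|(\A\x)_j|\ge|x_{i(j)}|-\sum_{i\ne i(j):\,e_{ij}\in\edges}|x_i|$; summing over $j$ and using that $i(j)=i_k$ for exactly $d-c_k$ neighbours, while the edges $e_{ij}$ with $i\ne i(j)$ are precisely the collision edges, yields
\[
\|\A\x\|_1 \;\ge\; \sum_{k=1}^r(d-c_k)|x_{i_k}| - \sum_{k=1}^r c_k|x_{i_k}| \;=\; d\|\x\|_1 - 2\sum_{k=1}^r c_k|x_{i_k}|.
\]
Since $(|x_{i_k}|)_k$ is non-increasing and the partial sums of $(c_k)_k$ are dominated by $\e d k$, Abel summation bounds $\sum_k c_k|x_{i_k}| = \sum_k\big(\sum_{\ell\le k}c_\ell\big)\big(|x_{i_k}|-|x_{i_{k+1}}|\big) \le \e d\sum_k k\big(|x_{i_k}|-|x_{i_{k+1}}|\big) = \e d\|\x\|_1$, with the convention $|x_{i_{r+1}}|:=0$. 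Substituting and dividing by $d$ gives $\|\Ph\x\|_1\ge(1-2\e)\|\x\|_1$.

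The step I expect to be the crux is this weighted collision estimate: it does not suffice to know that the total number of collisions is at most $\e d r$, and one genuinely needs the full family of prefix bounds $\sum_{k\le m}c_k\le\e d m$ together with the monotone ordering of the $|x_{i_k}|$ so that the Abel rearrangement can trade the growing weight $k$ against the nonnegative decrements $|x_{i_k}|-|x_{i_{k+1}}|$. Everything else is bookkeeping on the bipartite graph and uses only $\A\in\xensemble$, not randomness.
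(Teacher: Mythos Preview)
Your proof is correct and is precisely the collision-counting argument of Theorem~1 in \cite{berinde2008combining}, which is exactly what the paper defers to (the paper does not give its own proof of this lemma but simply cites that reference). The upper bound, the prefix collision bounds from expansion of every $\{i_1,\dots,i_m\}$, and the Abel-summation step are all carried out accurately.
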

The interested reader is referred to the proof of Theorem 1 in \cite{berinde2008combining} for the proof of this lemma. Key to the holding of Lemma \ref{lem:rip1} is the existence of $(s,d,\e)$-expander graphs, hence one can draw corollaries from our results on this.

\begin{corollary}
\label{cor:probconstruct}
Consider $\e\in(0,\frac{1}{2})$ and let $d,s,n,N\in \N$. In the proportional growth asymptotics with a random draw of an $n\times N$ matrix $\A$ from $\bensemble$, the matrix $\Ph:=\A/d$ has RIP-1 with probability approaching $1$ exponentially, if
\begin{equation}
\label{eqn:probconstruct}
d = \bigO\left(\frac{\log\left(N/s\right)}{\e\log s}\right), \quad \mbox{and} \quad n = \bigO\left(\frac{s\log\left(N/s\right)}{\e^2}\right).
\end{equation}
\end{corollary}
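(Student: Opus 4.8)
The statement is essentially a corollary bridging two earlier results: Corollary \ref{cor:expander_asymptotics} (equivalently Theorem \ref{thm:probconstruct}), which guarantees that a random draw from $\bensemble$ lands in $\xensemble$ with probability approaching $1$ exponentially under the stated scalings of $d$ and $n$, and Lemma \ref{lem:rip1}, which says that any $\A\in\xensemble$ yields $\Ph=\A/d$ satisfying the RIP-1 bound \eqref{eqn:rip1} with constant $\d_s=2\e$. So the plan is simply to chain these two implications together and track the probability.

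First I would invoke Corollary \ref{cor:expander_asymptotics}: for $d\geq \frac{c_d\log(N/s)}{\e\log s}$ and $n\geq \frac{c_n s\log(N/s)}{\e^2}$, in the proportional growth asymptotics we have $\hbox{Prob}\left(|A_\support|\leq(1-\e)d|\support| \text{ for some } \support, |\support|\leq s\right)\to 0$, where the probability bound in Theorem \ref{thm:prob_expansion_bound_new2} already incorporates the union over all such $\support$. Hence with probability approaching $1$ exponentially, \emph{every} $\support$ with $|\support|\leq s$ satisfies $|\Gamma(\support)|=|A_\support|>(1-\e)d|\support|$, which is exactly the defining condition of a $(s,d,\e)$-expander graph (Definition \ref{def:llexpander}); equivalently $\A\in\xensemble$. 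Second, conditioned on that event, Lemma \ref{lem:rip1} applies verbatim to give that $\Ph=\A/d$ satisfies \eqref{eqn:rip1}, i.e.\ RIP-1 of order $s$ with RIC $2\e<1$ (using $\e\in(0,1/2)$). Combining, $\Ph$ has RIP-1 with probability approaching $1$ exponentially. Finally I would translate the sufficient conditions $d\geq\frac{c_d\log(N/s)}{\e\log s}$ and $n\geq\frac{c_n s\log(N/s)}{\e^2}$ into the $\bigO$ form displayed in \eqref{eqn:probconstruct}, which is immediate.

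There is no real obstacle here — the work has all been done in Corollary \ref{cor:expander_asymptotics} and Lemma \ref{lem:rip1}. The only point requiring a word of care is that the $\bigO$ statement of Theorem \ref{thm:probconstruct} / this corollary should be read as: there \emph{exist} constants $c_d,c_n>0$ (depending only on the expansion parameters, via $\tau(\e)=\bigO(\e)$ as established in the proof of Corollary \ref{cor:expander_asymptotics}) such that the conclusion holds, rather than as a claim for every hidden constant. I would state the proof in two sentences: apply Corollary \ref{cor:expander_asymptotics} to get $\A\in\xensemble$ with the claimed probability, then apply Lemma \ref{lem:rip1} to that $\A$ to conclude $\Ph=\A/d$ has RIP-1, with the degree and measurement scalings inherited unchanged from Corollary \ref{cor:expander_asymptotics}.
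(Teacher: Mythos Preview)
Your proposal is correct and follows essentially the same route as the paper: invoke Corollary~\ref{cor:expander_asymptotics} (equivalently Theorem~\ref{thm:probconstruct}) to get $\A\in\xensemble$ with probability tending to $1$ exponentially under the stated scalings, then apply Lemma~\ref{lem:rip1} to conclude $\Ph=\A/d$ has RIP-1. The paper's write-up differs only cosmetically, noting explicitly that the upper bound in \eqref{eqn:rip1} is trivial for any $\A\in\bensemble$ and recording the identity $\hbox{Prob}\big(\|\Ph\x\|_1 \geq (1-2\e)\|\x\|_1\big) = \hbox{Prob}\big(|A_s|\geq(1-\e)ds\big)$ over all $\support$ with $|\support|\le s$ before deferring to Theorem~\ref{thm:probconstruct}.
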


\begin{proof}
Note that the upper bound of \eqref{eqn:rip1} holds trivially for any $\Ph = \A/d$ where $\A$ has $d$ ones per column, i.e. $\A\in\bensemble$. But for the lower bound of \eqref{eqn:rip1} to hold for any $\Ph = \A/d$, we need $\A$ to be an $(s,d,\e)$-expander matrix, i.e. $\A\in\xensemble$. Note that the event $\left|A_s\right| \geq (1-\e)ds$ is equal to the event $\|A_\support\x\|_1 \geq (1-2\e)d\|\x|_1$, which is equivalent to $\|\Ph_\support\x\|_1 \geq (1-2\e)\|\x|_1$, for a fixed $\support$, with $|\support|\leq s$. For $\A$ to be in $\xensemble$, we need expansion for all sets $\support$, with $|\support|\leq s$, i.e. $\A\in\xensemble$.  The key thing to remember is that
\begin{equation}
\label{eqn:probrip1}
\hbox{Prob}\left(\|\Ph\x\|_1 \geq (1-2\e)\|\x|_1\right) = \hbox{Prob}\left(\left|A_s\right| \geq (1-\e)ds\right)\,, \quad \mbox{for all} ~~ \{\support ~ : ~ |\support| \leq s\}\,.
\end{equation}
The probability in \eqref{eqn:probrip1} going to 1 exponentially in the proportional growth asymptotics, i.e. the existence of $\A\in\xensemble$ with parameters as given in \eqref{eqn:probconstruct}, is what is stated in Theorem \ref{thm:probconstruct}. Therefore, the rest of the proof follows from the proof of Theorem \ref{thm:probconstruct}, hence concluding the proof of the corollary.
\end{proof}

Notably, Lemma \ref{lem:rip1} holds with $\Ph$ having much smaller number of nonzeros per column due to our construction. More over, we can derive sampling theorems for which Lemma \ref{lem:rip1} holds as thus.
\begin{corollary}
\label{cor:prob_expander_existence}
Fix $\e\in(0,\frac{1}{2})$ and let $d,s,n,N\in \N$. In the proportional growth asymptotics, for any $\r < (1-\gm)\r_{BT}(\d;d,\e)$ and $\gm>0$, a random draw of $\A$ from $\bensemble$ implies $\Ph:=\A/d$ has RIP-1 with probability approaching $1$ exponentially.
\end{corollary}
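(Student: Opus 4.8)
The plan is to reduce the statement to the conjunction of two facts already in hand: the deterministic implication ``$\A\in\xensemble \Rightarrow \Ph=\A/d$ satisfies RIP-1'' supplied by Lemma~\ref{lem:rip1}, and the probabilistic existence result of Lemma~\ref{lem:prob_expander_existence}. Concretely, I would first recall, exactly as in the proof of Corollary~\ref{cor:probconstruct}, that the upper inequality $\|\Ph\x\|_1\le\|\x\|_1$ in \eqref{eqn:rip1} is automatic for every $\A\in\bensemble$ (each column of $\A$ has exactly $d$ ones, so $\|\A\x\|_1\le d\|\x\|_1$ by the triangle inequality), and that the lower inequality $\|\Ph_\support\x\|_1\ge(1-2\e)\|\x\|_1$ for a fixed $\support$ with $|\support|\le s$ is equivalent to the neighbourhood bound $\left|A_s\right|\ge(1-\e)ds$. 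Hence the only randomness that matters is captured by
\[
\hbox{Prob}\left(\Ph \text{ satisfies RIP-1 of order } s\right)
=\hbox{Prob}\left(\left|A_s\right|\ge(1-\e)ds \text{ for every } \support \text{ with } |\support|\le s\right),
\]
and $2\e<1$ because $\e\in(0,1/2)$, so $\d_s=2\e$ is an admissible RIP-1 constant.

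Next I would invoke Lemma~\ref{lem:prob_expander_existence}: in the proportional growth asymptotics $s/n\to\r$, $n/N\to\d$, whenever $\r<(1-\gm)\r_{BT}(\d;d,\e)$ for some $\gm>0$, a random draw of $\A$ from $\bensemble$ lies in $\xensemble$ with probability approaching $1$ exponentially, i.e. the complementary probability on the right-hand side of the displayed identity decays like $e^{-cN}$ for some $c>0$. Chaining this with the deterministic implication of Lemma~\ref{lem:rip1} gives that $\Ph=\A/d$ has RIP-1 with probability approaching $1$ exponentially, which is the assertion of the corollary. In short, the proof is the verbatim analogue of the proof of Corollary~\ref{cor:probconstruct}, with the appeal to Theorem~\ref{thm:probconstruct} replaced by an appeal to Lemma~\ref{lem:prob_expander_existence}.

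There is essentially no hard step here --- the corollary is a packaging of earlier results --- but two points need a little care. First, one must make sure the union bound over all $\support$ with $|\support|\le s$ is the one already folded into the statement of Lemma~\ref{lem:prob_expander_existence} (via Theorem~\ref{thm:prob_expansion_bound_new2}), so it must not be re-applied. Second, one should track that the exponential rate of convergence survives the passage to $\r<(1-\gm)\r_{BT}$: this holds because strictly below the transition curve the exponent $\Psi_N(s,d,\e)$ of Theorem~\ref{thm:prob_expansion_bound_new2} is bounded away from $0$ and is multiplied by $N\to\infty$, which dominates the polynomial prefactor $p_N(s,d,\e)$. I would close by noting that the cushion $(1-\gm)$ and the relevant constants enter only through $\tau=\eta\beta^{-1}(\beta-1)=\bigO(\e)$ with $\beta(\e)\approx1+\e$, exactly as in Corollary~\ref{cor:expander_asymptotics}, so no new quantitative bookkeeping is required.
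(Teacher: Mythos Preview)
Your proposal is correct and matches the paper's own proof essentially verbatim: the paper simply states that the argument is the analogue of Corollary~\ref{cor:probconstruct} with the role of Theorem~\ref{thm:probconstruct} played by Lemma~\ref{lem:prob_expander_existence}, and skips the details. Your write-up supplies exactly those details (the deterministic reduction via Lemma~\ref{lem:rip1} and the probabilistic input from Lemma~\ref{lem:prob_expander_existence}), so there is nothing to add.
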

\begin{proof}
The proof of this corollary follows from the proof of Corollary \ref{cor:probconstruct} above, and it is related to the proof of Lemma \ref{lem:prob_expander_existence} as the proof of Corollary \ref{cor:probconstruct} is to the proof of Theorem \ref{thm:probconstruct}. The details of the proof is thus skipped.
\end{proof}

\subsubsection{Performance of algorithms}
We wish to compare the performance of selected combinatorial compressed sensing algorithms in terms of the possible problem sizes $(s,n,N)$ that these algorithms can reconstruct sparse/compressible signals/vectors up to their respective error guarantees. The comparison is typically done in the framework of phase transitions, which depict a boundary curve where ratios of problems sizes above this curve are recovered with probability approaching 0 exponentially; while problems sizes below the curve are recovered with probability approaching 1 exponentially. The list of combinatorial compressed sensing algorithms includes Expander Matching Pursuit (EMP) \cite{indyk2008near}, Sparse Matching Pursuit \cite{berinde2008practical}, Sequential Sparse Matching Pursuit (SSMP) \cite{berinde2009sequential}, Left Degree Dependent Signal Recovery (LDDSR) \cite{xu2007efficient}, Expander Recovery (ER) \cite{jafarpour2009efficient}, Expander Iterative Hard-Thresholding (EIHT) \cite[Section 13.4]{foucart2013mathematical}, and Expander $\ell_0$-decoding (ELD) with both serial and parallel versions \cite{mendoza2017expander}. For reason similar to those used in \cite{bah2013vanishingly,mendoza2017expander}, we selected out of this list four of the algorithms: $(i)$ SSMP, $(ii)$ ER, $(ii)$ EIHT, $(iv)$ ELD. Descriptions of these algorithms is skipped here but the interested reader is referred to the original papers or their summarized details in \cite{bah2013vanishingly,mendoza2017expander}. We were also curious as to how $\ell_1$-minimization's performance compares to these selected combinatorial compressed sensing algorithms, since $\ell_1$-minimization ($\ell_1$-min) can be used to solve the combinatorial problem solved by these algorithms, see \cite[Theorem 3]{berinde2008combining}.

The phase transitions are based on conditions on the RIC of the sensing matrices used. Consequent to Lemma \ref{lem:rip1}, this becomes conditions on the expansion coefficient (i.e. $\e$) of the underlying $(s,d,\e)$-expander graphs of the sparse sensing matrices used. Where this condition on  $\e$ is not explicitly given it is easily deducible from the recovery guarantees given for each algorithms. The conditions are summarized in the table below. 

\begin{table}[h]
\centering
 \caption{Recovery conditions on $\e$}
 \begin{tabular}{|c|c|c||c|c|}
 \hline
 \multirow{2}{*}{\bf Algorithm}	& \multicolumn{2}{c||}{\bf Theoretical values} 	& \multicolumn{2}{c|}{\bf Computational values} \\
\cline{2-5}
 			& {\bf Condition}	& {\bf Sparsity} 		& {\bf Condition} 	& {\bf Sparsity}\\
  \hline
  \hline
  SSMP \cite{berinde2009sequential}			& $\e_k < 1/16$	& $k = (c+1)s, ~c>1$	& $\e_k = 1/16 - e$		& $k = \lceil (2+e)s \rceil, ~k = 3s$\\
  \hline
  ER	\cite{jafarpour2009efficient}			&	$\e_k < 1/4$	&	$k = 2s$		& $\e_k = 1/4 - e$		& $k = 2s$\\
  \hline
  EIHT \cite{foucart2013mathematical}		&	$\e_k < 1/12$	&	$k = 3s$		& $\e_k = 1/12 - e$		& $k = 3s$\\
  \hline
  ELD \cite{mendoza2017expander}			&	$\e_k \leq 1/4$	&	$k = s$		& $\e_k = 1/4$			& $k = s$	\\
  \hline
  $\ell_1$-min \cite{berinde2008combining}	&	$\e_k < 1/6$	&	$k = 2s$		& $\e_k = 1/6 - e$		& $k = 2s$\\
  \hline
 \end{tabular}
 \label{tab:e-condition}
\end{table}
The {\em theoretical values} are what will be found in the reference given in the table; while the {\em computational values} are what we used in our numerical experiments to compute the phase transition curves of the algorithms. The value for $e$ was set to be $10^{-15}$, to make the $\e_k$ as large as possible under the given condition. With these values we computed the phase transitions in Figure \ref{fig:pt-algo-compare}. 

\begin{figure}[h]
\centering
\includegraphics[width=0.45\textwidth]{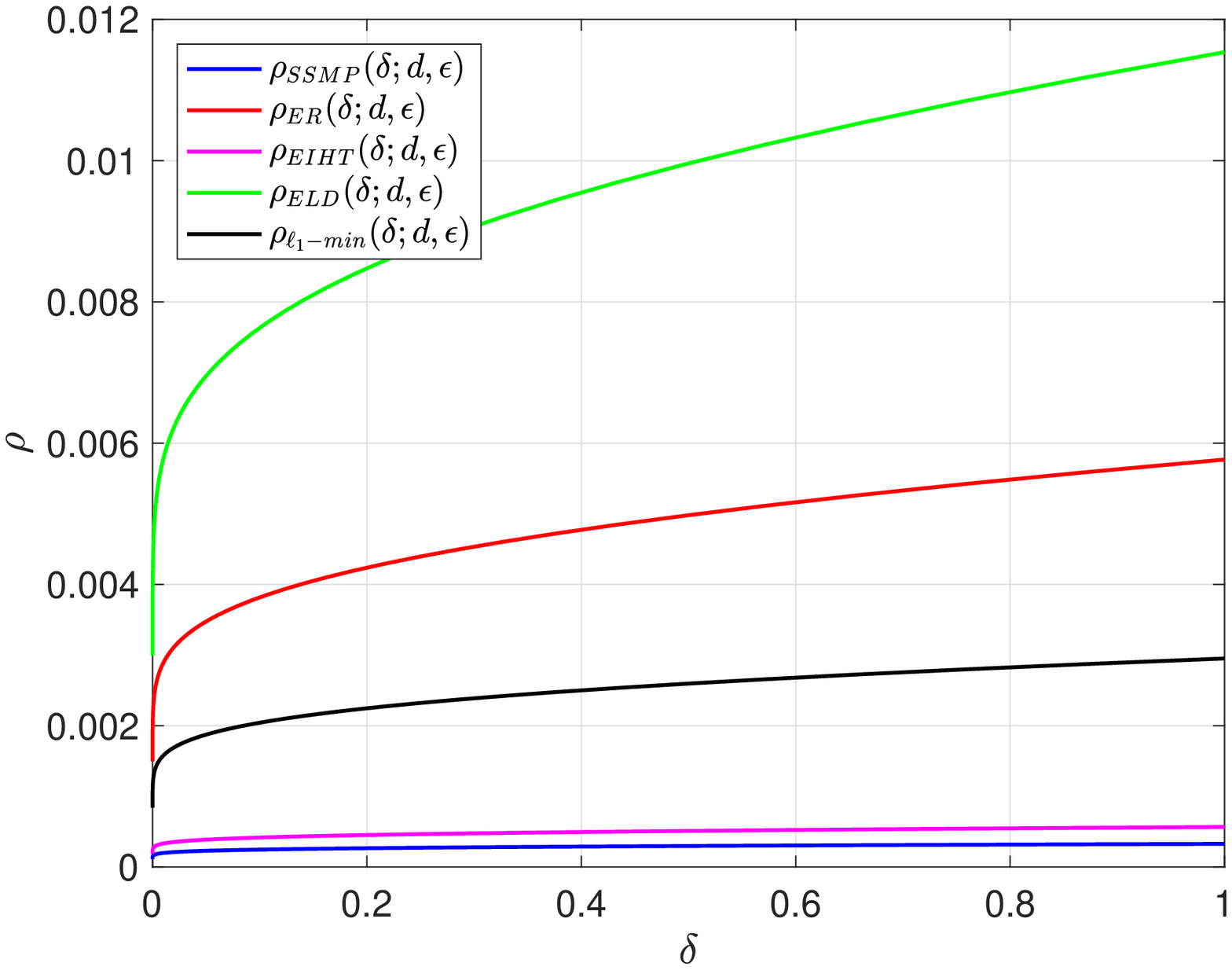} 
\includegraphics[width=0.45\textwidth]{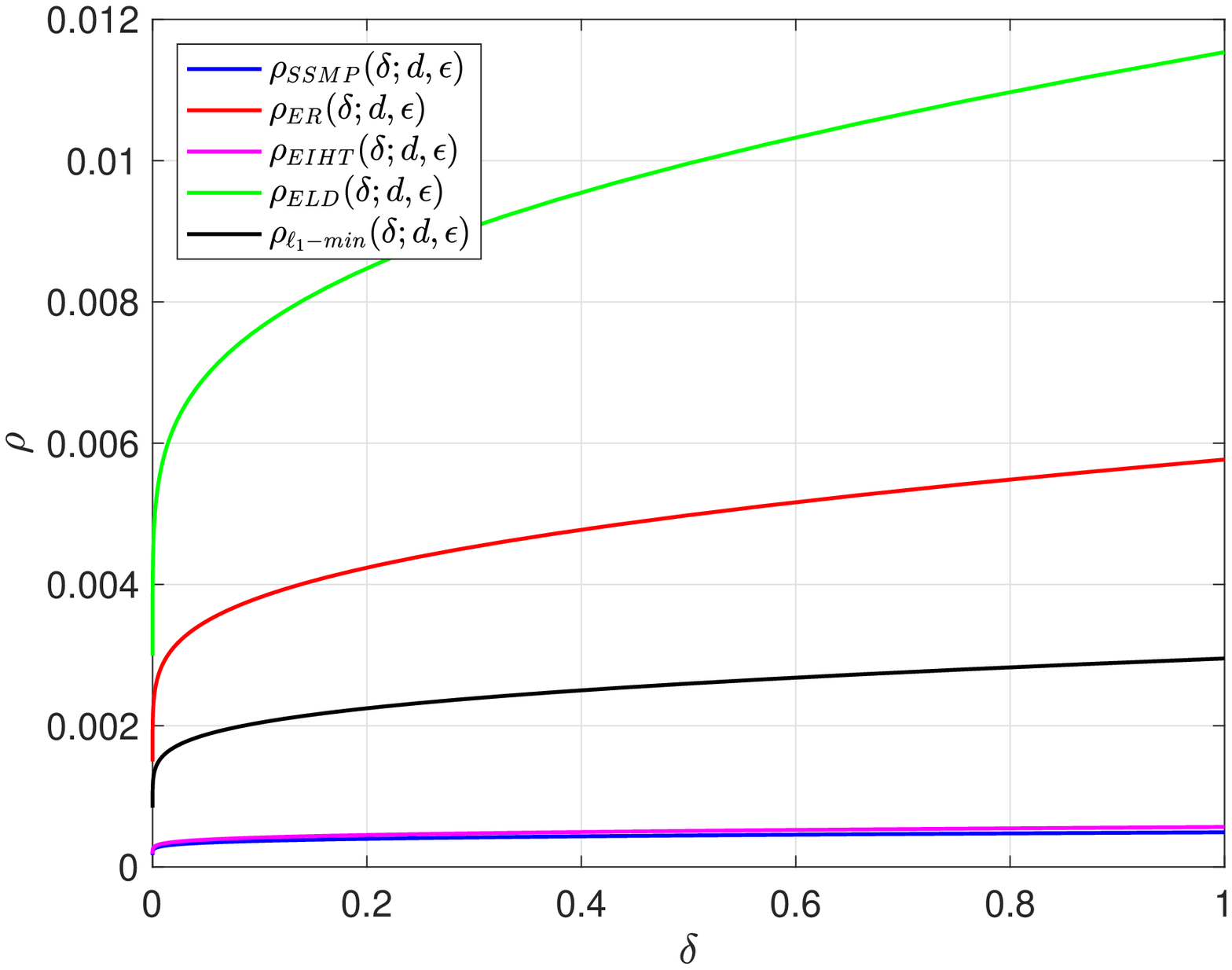} 
\caption{Phase transitions plots of algorithms with fixed $d = 2^5$, $\e$ as in the fourth column of Table  \ref{tab:e-condition} with $e = 10^{-15}$, $c_n = 2$ and $\d\in\left[10^{-6},1\right]$ on a logarithmically spaced grid of 100 points. {\em Left panel:} $k = 3s$ for $\r_{SSMP}(\d;d,\e)$. {\em Right panel:} $k = \lceil (2+e)s\rceil$ for $\r_{SSMP}(\d;d,\e)$.}
\label{fig:pt-algo-compare}
\end{figure}

The two figures are the same except for the different sparsity value used. The performance of the algorithms in this framework are thus ranked as follows: ELD, ER, $\ell_1$-min, EIHT, and SSMP.

\begin{remark}
\label{rem:comparison}
We point out that there are many way to compare performance of algorithms, this is just one way. For instance, we can compare runtime complexities or actual computational runtimes as in \cite{mendoza2017expander}; phase transitions of different probabilities, here the probability of recovery is 1 but this could be set to something else, like 1/2 in the simulations in \cite{mendoza2017expander}; one could also compare number of iterations and iteration cost as was also done in \cite{mendoza2017expander}.
\end{remark}

\section{Proofs} \label{sec:proof}\

\subsection{Theorem \ref{thm:probconstruct}} \label{sec:pthm1}
The proof of the theorem follows trivially from Corollary \ref{cor:expander_asymptotics}. Based on \eqref{eqn:prob_expansion_bound_new3} of Corollary \ref{cor:expander_asymptotics} we deduce that $\A\in\xensemble$ with probability approaching 1 exponentially with $d\geq \frac{c_d\log\left(N/s\right)}{\e\log s}$ and $n\geq \frac{c_ns\log\left(N/s\right)}{\e^2}$, with $c_d,c_n>0$, hence concluding the proof. \vspace{-0.7cm}
\begin{flushright}
$\square$
\end{flushright}

\subsection{Lemma \ref{lem:prob_expander_existence}} \label{sec:plem1}
The phase transition curve $\r_{BT}(\d;d,\e)$ is based the bound of the exponent of \eqref{eqn:prob_expansion_bound_new2}, which is
\begin{equation}
 \label{eqn:bigpsiNbound2}
 -\frac{s}{N}\log\left(\frac{s}{N}\right) + \frac{s}{N} - \frac{\tau(1-\e)d}{2\log 2}\frac{s}{N}\log\left(\frac{s}{2}\right) + o(N)\,.
  \end{equation} 
In the propotional growth asymptotics $(s,n,N) \rightarrow \infty$ while $s/n \rightarrow \r\in(0,1)$ and $n/N \rightarrow \d\in(0,1)$. This implies that $o(N) \rightarrow 0$ and  \eqref{eqn:bigpsiNbound2} becomes
 \begin{equation}
  \label{eqn:bigpsiNbound3}
 -\r\log\left(\d\r\right) + \r - \frac{\tau(1-\e)d\r\log(\d\r)}{2\log 2} - \frac{\tau\e(1-\e)d}{2c_n\log 2} + \frac{\tau(1-\e)d\r}{2}\,, 
 \end{equation}
where $c_n>0$ is as in \eqref{eqn:complexity_condition_n3}. If \eqref{eqn:bigpsiNbound3} is negative then as the problem size grows we have
\begin{equation}
\label{eqn:prob_expansion_bound_new4}
 \hbox{Prob}\left(\left|A_s\right| \leq (1-\e)ds\right) \rightarrow 0.
\end{equation}
Therefore, setting \eqref{eqn:bigpsiNbound3} to zero and solving for $\r$ gives us a critical $\r$ below which \eqref{eqn:bigpsiNbound3} is negative and positive above it. The critical $\r$ is the phase transition $\r$, i.e. $\r_{BT}(\d;d,\e)$, where below $\r_{BT}(\d;d,\e)$ is parameterized by the $\gamma$ in the lemma. This concludes the proof.
\vspace{-0.7cm}
\begin{flushright}
$\square$
\end{flushright}

\subsection{Lemma \ref{lem:prob_expansion_bound_new}} \label{sec:plem2}
By the dyadic splitting proposed in \cite{bah2013vanishingly}, we let $A_s=A_{\lceil\frac{s}{2}\rceil}^1\cup A_{\lfloor\frac{s}{2}\rfloor}^2$ such that $\left|A_s\right|=\left|A_{\lceil\frac{s}{2}\rceil}^1\cup A_{\lfloor\frac{s}{2}\rfloor}^2\right|$ and therefore
\begin{align}
\label{eqn:prob_1set_ssparseA}
\hbox{Prob}\left(\left|A_s\right| \leq a_s\right) & = \hbox{Prob}\left(\left|A_{\lceil\frac{s}{2}\rceil}^1\cup A_{\lfloor\frac{s}{2}\rfloor}^2\right| \leq a_s\right)\\
\label{eqn:prob_1set_ssparseB}
& = \sum_{l_s = d}^{a_s} \hbox{Prob}\left(\left|A_{\lceil\frac{s}{2}\rceil}^1\cup A_{\lfloor\frac{s}{2}\rfloor}^2\right| = l_s\right)
\end{align}
In \eqref{eqn:prob_1set_ssparseB} we sum over all possible events, i.e. all possible sizes of $l_s$. In line with the splitting technique, we simplify the probability to the product of the probabilities of the cardinalities of $\left|A_{\lceil\frac{s}{2}\rceil}^1\right|$ and $\left|A_{\lfloor\frac{s}{2}\rfloor}^2\right|$ and their intersection. Using the definition of $\hbox{P}_n(\cdot)$ in Lemma \ref{lem:intersect_prob} (Appendix \ref{sec:oldresults}), thus leads to the following.
\begin{multline}
\label{eqn:prob_1set_ssparseC}
\hbox{Prob}\left(\left|A_s\right| \leq a_s\right) = \sum_{l_s=2d}^{a_s} \sum_{l_{\lceil\frac{s}{2}\rceil}^1 = d}^{a_{\lceil\frac{s}{2}\rceil}} \sum_{l_{\lfloor\frac{s}{2}\rfloor}^2 = d}^{a_{\lfloor\frac{s}{2}\rfloor}} \hbox{P}_n\left(l_s,l_{\lceil\frac{s}{2}\rceil}^1,l_{\lfloor\frac{s}{2}\rfloor}^2\right) \\
\times \hbox{Prob}\left(\left|A_{\lceil\frac{s}{2}\rceil}^1\right| = l_{\lceil\frac{s}{2}\rceil}^1\right) \hbox{Prob}\left(\left|A_{\lfloor\frac{s}{2}\rfloor}^2\right| = l_{\lfloor\frac{s}{2}\rfloor}^2\right).
\end{multline}

In a slight abuse of notation we write $\displaystyle \mathop{\sum_{l^j}}_{j\in [m]}$ to denote applying the sum $m$ times. We also drop the limits of the summation indices henceforth. Now we use Lemma \ref{lem:num_size_split} in Appendix \ref{sec:oldresults} to simplify \eqref{eqn:prob_1set_ssparseC} as follows.
\begin{multline}
\label{eqn:prob_1set_ssparseD}
\mathop{\sum_{l^{j_1}_{Q_0}}}_{j_1\in [q_0]} \mathop{\sum_{l^{j_2}_{Q_1}}}_{j_2\in [q_1]} \mathop{\sum_{l^{j_3}_{R_1}}}_{j_3\in [r_1]} \hbox{P}_n\left(l^{j_1}_{Q_0},l^{2j_1-1}_{\lceil\frac{Q_0}{2}\rceil},l^{2j_1}_{\lfloor\frac{Q_0}{2}\rfloor}\right) \prod_{j_2 = 1}^{q_1}\hbox{Prob}\left(\left|A^{j_2}_{Q_1}\right| = l^{j_2}_{Q_1}\right)  \prod_{j_3=q_1+1}^{q_1+r_1} \hbox{Prob}\left(\left|A^{j_3}_{R_1}\right| = l^{j_3}_{R_1}\right).
\end{multline}

Now we proceed with the splitting - note \eqref{eqn:prob_1set_ssparseD} stopped only at the first level. At the next level, the second, we will have $q_2$ sets with $Q_2$ columns and $r_2$ sets with $R_2$ columns which leads to the following expression.
\begin{multline}
\label{eqn:prob_1set_ssparseE}
\mathop{\sum_{l^{j_1}_{Q_0}}}_{j_1\in [q_0]} \mathop{\sum_{l^{j_2}_{Q_1}}}_{j_2\in [q_1]} \mathop{\sum_{l^{j_3}_{R_1}}}_{j_3\in [r_1]} \hbox{P}_n\left(l^{j_1}_{Q_0},l^{2j_1-1}_{\lceil\frac{Q_0}{2}\rceil},l^{2j_1}_{\lfloor\frac{Q_0}{2}\rfloor}\right) \Bigg{[} \mathop{\sum_{l^{j_4}_{Q_2}}}_{j_4\in [q_2]} \mathop{\sum_{l^{j_5}_{R_2}}}_{j_5\in [r_2]}
\hbox{P}_n\left(l^{j_2}_{Q_1},l^{2j_2-1}_{\lceil\frac{Q_1}{2}\rceil},l^{2j_2}_{\lfloor\frac{Q_1}{2}\rfloor}\right) \\ \times \hbox{P}_n\left(l^{j_3}_{R_1},l^{2j_3-1}_{\lceil\frac{R_1}{2}\rceil},l^{2j_3}_{\lfloor\frac{R_1}{2}\rfloor}\right) \prod_{j_4 = 1}^{q_2}\hbox{Prob}\left(\left|A^{j_4}_{Q_2}\right| = l^{j_4}_{Q_2}\right) \prod_{j_5=q_2+1}^{q_2+r_2} \hbox{Prob}\left(\left|A^{j_5}_{R_2}\right| = l^{j_5}_{R_2}\right)\Bigg{]}.
\end{multline}

We continue this splitting of each instance of $\hbox{Prob}(\cdot)$ for $\lceil \log_2 s\rceil -1$ levels until reaching sets with single columns where, by construction, the probability that the single column has $d$ nonzeros is one. Note that at this point we drop the subscripts $j_i$, as they are no longer needed. This process gives a complicated product of nested sums of $\hbox{P}_n(\cdot)$ which we express as

\begin{multline}
\label{eqn:prob_1set_ssparseF}
\mathop{\sum_{l_{Q_0}}} \mathop{\sum_{l_{Q_1}}} \mathop{\sum_{l_{R_1}}}\hbox{P}_n\left(l_{Q_0},l_{\lceil\frac{Q_0}{2}\rceil},l_{\lfloor\frac{Q_0}{2}\rfloor}\right) \Bigg{[} \mathop{\sum_{l_{Q_2}}} \mathop{\sum_{l_{R_2}}}
\hbox{P}_n\left(l_{Q_1},l_{\lceil\frac{Q_1}{2}\rceil},l_{\lfloor\frac{Q_1}{2}\rfloor}\right) \hbox{P}_n\left(l_{R_1},l_{\lceil\frac{R_1}{2}\rceil},l_{\lfloor\frac{R_1}{2}\rfloor}\right) \Bigg{[} \cdots \\
\times \cdots \Bigg{[} \mathop{\sum_{l_{Q_{\lceil\log_2s\rceil-1}}}} \hbox{P}_n\left(l_{4}, l_{2}, l_{2}\right)  \hbox{P}_n\left(l_{3}, l_{2}, d\right)  \hbox{P}_n\left(l_{2}, d, d\right)\Bigg{]} \cdots \Bigg{]}.
\end{multline}

Using the expression for $\hbox{P}_n(\cdot)$ in \eqref{eqn:intersect_prob} of Lemma \ref{lem:intersect_prob} (Appendix \ref{sec:oldresults}) we bound \eqref{eqn:prob_1set_ssparseF} by bounding each $\hbox{P}_n(\cdot)$ as in \eqref{eqn:prob_1set_poly1} with a product of a polynomial, $\pi(\cdot)$, and an exponential with exponent $\psi_n(\cdot)$.

\begin{multline}
\label{eqn:prob_1set_ssparseG}
\mathop{\sum_{l_{Q_0}}} \mathop{\sum_{l_{Q_1}}} \mathop{\sum_{l_{R_1}}} \pi\left(l_{Q_0},l_{\lceil\frac{Q_0}{2}\rceil},l_{\lfloor\frac{Q_0}{2}\rfloor}\right) e^{ \psi_n\left(l_{Q_0},l_{\lceil\frac{Q_0}{2}\rceil},l_{\lfloor\frac{Q_0}{2}\rfloor}\right)}
\cdot\Bigg{[} \mathop{\sum_{l_{Q_2}}} \mathop{\sum_{l_{R_2}}}
\pi\left(l_{Q_1},l_{\lceil\frac{Q_1}{2}\rceil},l_{\lfloor\frac{Q_1}{2}\rfloor}\right) \times \\
e^{\psi_n\left(l_{Q_1},l_{\lceil\frac{Q_1}{2}\rceil},l_{\lfloor\frac{Q_1}{2}\rfloor}\right)} 
\pi\left(l_{R_1},l_{\lceil\frac{R_1}{2}\rceil},l_{\lfloor\frac{R_1}{2}\rfloor}\right) \cdot e^{\psi_n\left(l_{R_1},l_{\lceil\frac{R_1}{2}\rceil},l_{\lfloor\frac{R_1}{2}\rfloor}\right)} \Bigg{[} \cdots \times \Bigg{[} \mathop{\sum_{l_{Q_{\lceil\log_2s\rceil-1}}}}\pi\left(l_{4}, l_{2}, l_{2}\right) \\ 
\times e^{\psi_n\left(l_{4}, l_{2}, l_{2}\right)} \pi\left(l_{3}, l_{2}, d\right) e^{\psi_n\left(l_{3}, l_{2}, d\right)}  \pi\left(l_{2}, d, d\right) \cdot e^{\psi_n\left(l_{2}, d, d\right)}\Bigg{]} \cdots \Bigg{]}.
\end{multline}

Using Lemma \ref{lem:psi_n_behaviour} in Appendix \ref{sec:oldresults} we maximize the $\psi_n(\cdot)$ and hence the exponentials in \eqref{eqn:prob_1set_ssparseG}. We maximize each $\psi_n(\cdot)$ by choosing $l_{(\cdot)}$ to be $a_{(\cdot)}$. Then \eqref{eqn:prob_1set_ssparseG} will be upper bounded by the following.

\begin{multline}
\label{eqn:prob_1set_ssparseH}
\mathop{\sum_{l_{Q_0}}} \mathop{\sum_{l_{Q_1}}} \mathop{\sum_{l_{R_1}}} \pi\left(l_{Q_0},l_{\lceil\frac{Q_0}{2}\rceil},l_{\lfloor\frac{Q_0}{2}\rfloor}\right) e^{ \psi_n\left(a_{Q_0},a_{\lceil\frac{Q_0}{2}\rceil},a_{\lfloor\frac{Q_0}{2}\rfloor}\right)}
\cdot\Bigg{[} \mathop{\sum_{l_{Q_2}}} \mathop{\sum_{l_{R_2}}}
\pi\left(l_{Q_1},l_{\lceil\frac{Q_1}{2}\rceil},l_{\lfloor\frac{Q_1}{2}\rfloor}\right) \times \\
e^{\psi_n\left(a_{Q_1},a_{\lceil\frac{Q_1}{2}\rceil},a_{\lfloor\frac{Q_1}{2}\rfloor}\right)} 
\pi\left(l_{R_1},l_{\lceil\frac{R_1}{2}\rceil},l_{\lfloor\frac{R_1}{2}\rfloor}\right) \cdot e^{\psi_n\left(a_{R_1},a_{\lceil\frac{R_1}{2}\rceil},a_{\lfloor\frac{R_1}{2}\rfloor}\right)} \Bigg{[} \ldots \times \Bigg{[} \mathop{\sum_{l_{Q_{\lceil\log_2s\rceil-1}}}}\pi\left(l_{4}, l_{2}, l_{2}\right) \\ 
\times e^{\psi_n\left(a_{4}, a_{2}, a_{2}\right)} \pi\left(l_{3}, l_{2}, d\right) e^{\psi_n\left(a_{3}, a_{2}, d\right)}  \pi\left(l_{2}, d, d\right) \cdot e^{\psi_n\left(a_{2}, d, d\right)}\Bigg{]} \ldots \Bigg{]}.
\end{multline}

We then factor the product of exponentials. This product becomes an exponential where exponent is the summation of the $\psi_n(\cdot)$, we will denote this exponent as $\widetilde{\Psi}_n\left(a_s,\ldots,a_2,d\right)$. Then \eqref{eqn:prob_1set_ssparseH} simplifies to the following.

\begin{multline}
\label{eqn:prob_1set_ssparseI}
e^{\widetilde{\Psi}_n\left(a_s,\ldots,a_2,d\right)}\cdot\mathop{\sum_{l_{Q_0}}} \mathop{\sum_{l_{Q_1}}} \mathop{\sum_{l_{R_1}}} \pi\left(l_{Q_0},l_{\lceil\frac{Q_0}{2}\rceil},l_{\lfloor\frac{Q_0}{2}\rfloor}\right) 
\cdot\Bigg{[} \mathop{\sum_{l_{Q_2}}} \mathop{\sum_{l_{R_2}}}
\pi\left(l_{Q_1},l_{\lceil\frac{Q_1}{2}\rceil},l_{\lfloor\frac{Q_1}{2}\rfloor}\right) \times \\
\pi\left(l_{R_1},l_{\lceil\frac{R_1}{2}\rceil},l_{\lfloor\frac{R_1}{2}\rfloor}\right) \cdot \Bigg{[} \ldots \times \Bigg{[} \mathop{\sum_{l_{Q_{\lceil\log_2s\rceil-1}}}}\pi\left(l_{4}, l_{2}, l_{2}\right) \pi\left(l_{3}, l_{2}, d\right) \pi\left(l_{2}, d, d\right)\Bigg{]} \ldots \Bigg{]}.
\end{multline}

We denote the factor multiplying the exponential term by $\Pi\left(l_s,\ldots,l_2,d\right)$, therefore we have the following bound.
\begin{equation}
\label{eqn:prob_1set_ssparseJ}
\hbox{Prob}\left(\left|A_s\right| \leq a_s\right) \leq \Pi\left(l_s,\ldots,l_2,d\right) \cdot \exp\left(\widetilde{\Psi}_n\left(a_s,\ldots,a_2,d\right)\right)\,,
\end{equation}
where $\widetilde{\Psi}_n\left(a_s,\ldots,a_2,d\right)$ is exactly $\Psi_n\left(a_s,\ldots,a_2,d\right)$ given by (78) in \cite[Proof of Theorem 1.6]{bah2013vanishingly}. Consequently, we state the bound of $\widetilde{\Psi}_n\left(a_s,\ldots,a_2,d\right)$ and skip the proof, which is as thus.
\begin{equation}
\label{eqn:Psi_n_bound}
\widetilde{\Psi}_n\left(a_s,\ldots,a_2,d\right) \leq
\sum_{i\in\Omega} \frac{s}{2i} \psi_i, \quad \mbox{for} \quad \Omega = \{2^j\}_{j=0}^{\log_2(s) - 1}\,,
\end{equation}
where $\psi_i$ is given by \eqref{eqn:smallpsi}.
The upper bound of $\Pi\left(l_s,\ldots,l_2,d\right)$ is given by the following proposition.
\begin{proposition}
\label{pro:pi_bound}
Given $l_s\leq a_s, ~l_{\lceil s/2 \rceil} \leq a_{\lceil s/2 \rceil}, \cdots, l_2\leq a_2$, we have
\begin{equation}
\label{eqn:pi_bound}
\Pi\left(l_s,\ldots,l_2,d\right) \leq 2^{-3} \cdot e^{\frac{1}{4}} \cdot s^{\frac{3}{2}\log_2 s + \frac{9}{2}} \cdot \left(a_{s}\right)^{\log_2 s - \frac{3}{2}} \,.
\end{equation}
\end{proposition}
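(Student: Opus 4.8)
The plan is to unwind the nested product $\Pi(l_s,\ldots,l_2,d)$ that appears in \eqref{eqn:prob_1set_ssparseI}, count how many factors $\pi(\cdot)$ it contains at each dyadic level, and bound each one using the explicit form of $\pi(\cdot)$ coming from \eqref{eqn:prob_1set_poly1} together with the monotonicity hypotheses $l_{(\cdot)}\le a_{(\cdot)}$. First I would recall that each factor $\pi\left(l_{2i},l_i,l_i\right)$ (or the analogous $\pi$ with a ceiling/floor split) is, by \eqref{eqn:intersect_prob} and the Stirling-type bounds in Appendix \ref{sec:oldresults}, a product of a constant, a power of $2$, and a ratio of polynomial (square-root-of-factorial) terms in the arguments; replacing every $l_{(\cdot)}$ by its upper bound $a_{(\cdot)}$ and then by the crude bound $a_{(\cdot)}\le a_s$ (valid since $a_i$ is nondecreasing in $i$) collapses each factor to something of the form $C\cdot 2^{c}\cdot a_s^{e}$ for small explicit constants.

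The second step is the bookkeeping: at dyadic level $j$ (for $j=1,\ldots,\lceil\log_2 s\rceil-1$) the splitting produces roughly $2^{j-1}$ factors $\pi(\cdot)$, so the total number of factors is $\sum_{j} 2^{j-1} = \Theta(s)$, and more precisely the nested structure in \eqref{eqn:prob_1set_ssparseI} has exactly $s-1$ leaves. I would track (i) the product of the absolute constants, which contributes the $2^{-3}e^{1/4}$; (ii) the product of the powers of $a_s$, where summing the exponents over all $\Theta(s)$ factors while simultaneously noting that the factors at the top $\log_2 s$ levels contribute a $\log_2 s$ multiplicity gives the $a_s^{\log_2 s - 3/2}$ term; and (iii) the leftover powers of $s$ itself (coming from bounding $a_i$ by $a_s$ needs care only when the argument is literally $s$, $s/2$, etc.), whose exponents sum telescopically to $\tfrac{3}{2}\log_2 s + \tfrac{9}{2}$. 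Matching these three contributions against the claimed bound \eqref{eqn:pi_bound} finishes the argument.

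The main obstacle, and the step I expect to demand the most care, is the exponent accounting in (ii)–(iii): one must distinguish the factors whose three arguments are genuine dyadic pieces of $s$ (contributing the polynomial-in-$s$ part) from the deeper levels where arguments are small constants like $2,3,4$, and one must correctly see where the $\log_2 s$ \emph{multiplicative} factor on the exponent of $a_s$ comes from — it is the number of dyadic levels, and it multiplies the per-level exponent contribution rather than adding to it. A secondary subtlety is handling the ceiling/floor asymmetry $\lceil s/2\rceil$ vs.\ $\lfloor s/2\rfloor$, which only perturbs constants and can be absorbed, as in \cite{bah2013vanishingly}, by noting all the relevant quantities are within a bounded factor of the idealized $s/2^j$. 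Once the counting is set up, each individual bound is a routine application of $\sqrt{n!}\le$ (Stirling bound) and $a_i\le a_s$, so I would not belabor those; the write-up should foreground the level-by-level factor count and the exponent sums.
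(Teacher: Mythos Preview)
Your proposal has a genuine gap in the exponent accounting. You assert that the nested structure in \eqref{eqn:prob_1set_ssparseI} contains $\Theta(s)$ factors $\pi(\cdot)$ (one per internal node of the dyadic tree), and you then propose to bound each by replacing every $a_{(\cdot)}$ with the crude upper bound $a_s$. But if you actually do this, each factor contributes a fixed positive power of $a_s$, and the product of $\Theta(s)$ such factors yields an exponent on $a_s$ that is linear in $s$, not the claimed $\log_2 s - \tfrac{3}{2}$. Your sentence ``summing the exponents over all $\Theta(s)$ factors while simultaneously noting that the factors at the top $\log_2 s$ levels contribute a $\log_2 s$ multiplicity'' does not resolve this: you cannot sum $\Theta(s)$ nonzero exponents and arrive at $O(\log s)$. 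The same issue afflicts your accounting of the $s$-powers in (iii).

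What the paper does instead is crucially different in structure. After maximising each $\pi$ over its arguments (so that $\pi(l_{2i},l_i,l_i)\le\pi(l_{2i},l_{2i},l_{2i})$ via \eqref{eqn:prob_1set_poly3d} and Corollary~\ref{cor:pi_monotonicity}), the nested expression is rearranged into a product of only $2\lceil\log_2 s\rceil-3$ bracketed sums, one for each pair $(Q_j,R_j)$ at each dyadic level $j$; see \eqref{eqn:prob_1set_ssparseM}--\eqref{eqn:prob_1set_ssparseO}. Each bracket is then bounded by (number of terms)$\times\sqrt{a_{Q_j}}$ using the single-argument bound $\pi(y)<(5/4)^2\sqrt{2\pi y}$. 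The resulting product splits into three explicit pieces --- the constant $[\sqrt{2\pi}(5/4)^2]^{2\lceil\log_2 s\rceil-3}$, the product $q_0q_1r_1\cdots$ of the term counts, and the product $(a_{Q_0}a_{Q_1}a_{R_1}\cdots a_2)^{1/2}$ --- each bounded separately via \eqref{eqn:numbound}, \eqref{eqn:prob_1set_ssparseQ1}, and \eqref{eqn:prob_1set_ssparseQ2}. The $a_s^{\log_2 s-3/2}$ exponent arises because there are $O(\log s)$ factors $a_{Q_j}$ in that last product and the paper does \emph{not} replace them all by $a_s$; rather it uses the level-dependent relation $a_{Q_j}\approx a_s/2^j$ (see the derivation in Appendix~\ref{sec:bounds2}) so that the product telescopes. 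Your crude replacement $a_{(\cdot)}\le a_s$ destroys exactly this geometric decay, and without it the bound cannot close.
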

The proof of the proposition is found in Section \ref{sec:ppro1}. Taking log of right hand side of \eqref{eqn:pi_bound} and then exponentiating the results yields 
\begin{align}
\label{eqn:pi_bound2}
\Pi\left(l_s,\ldots,l_2,d\right) & \leq \exp\left[\frac{1}{4}  - 3\log 2 + \left(\frac{9}{2}+\frac{3}{2}\log_2 s\right)\log s + \left(\log_2 s - \frac{3}{2}\right)\log a_s\right]\\
\label{eqn:pi_bound3}
& = 2^{-3}s^{9/2}e^{1/4}\cdot\exp\left[\frac{3}{2}\log_2 s\log s + \left(\log_2 s - \frac{3}{2}\right)\log a_s\right]\\
\label{eqn:pi_bound4}
& = 2^{-3}s^{9/2}e^{1/4}\cdot\exp\left[\frac{3\log 2}{2}\log_2^2 s + \left(\log_2 s - \frac{3}{2}\right)\log a_s\right]\,.
\end{align}
Combining \eqref{eqn:pi_bound4} and \eqref{eqn:Psi_n_bound} gives the following bound for \eqref{eqn:prob_1set_ssparseJ}
\begin{equation}
\label{eqn:prob_1set_ssparseK}
\hbox{Prob}\left(\left|A_s\right| \leq a_s\right) \leq 2^{-3}s^{9/2}e^{1/4}\cdot \exp\left[\frac{3\log 2}{2}\log_2^2 s + \left(\log_2 s - \frac{3}{2}\right)\log a_s + \widetilde{\Psi}_n\left(a_s,\ldots,a_2,d\right)\right]\,.
\end{equation}
It follows therefore that $p_n(s,d) = 2^{-3}s^{9/2}e^{1/4}$ as in \eqref{eqn:pn_new} and the exponent in \eqref{eqn:prob_1set_ssparseK} is $n\cdot \Psi_n\left(a_s,\ldots,a_1\right)$, which implies
\eqref{eqn:bigpsin_new}. This concludes the proof of the lemma.
\vspace{-0.7cm}
\begin{flushright}
$\square$
\end{flushright}

\subsection{Proposition \ref{pro:pi_bound}} \label{sec:ppro1}
By definition, from \eqref{eqn:prob_1set_ssparseI}, we have 
\begin{multline}
\label{eqn:prob_1set_ssparseK2}
\Pi\left(l_s,\ldots,l_2,d\right) := \mathop{\sum_{l_{Q_0}}} \mathop{\sum_{l_{Q_1}}} \mathop{\sum_{l_{R_1}}} \pi\left(l_{Q_0},l_{\lceil\frac{Q_0}{2}\rceil},l_{\lfloor\frac{Q_0}{2}\rfloor}\right) 
\cdot\Bigg{[} \mathop{\sum_{l_{Q_2}}} \mathop{\sum_{l_{R_2}}}
\pi\left(l_{Q_1},l_{\lceil\frac{Q_1}{2}\rceil},l_{\lfloor\frac{Q_1}{2}\rfloor}\right) \times \\
\pi\left(l_{R_1},l_{\lceil\frac{R_1}{2}\rceil},l_{\lfloor\frac{R_1}{2}\rfloor}\right) \cdot \Bigg{[} \cdots \times \Bigg{[} \mathop{\sum_{l_{Q_{\lceil\log_2s\rceil-1}}}}\pi\left(l_{4}, l_{2}, l_{2}\right) \pi\left(l_{3}, l_{2}, d\right) \pi\left(l_{2}, d, d\right)\Bigg{]} \cdots \Bigg{]}.
\end{multline}

From \eqref{eqn:prob_1set_poly3d} we see that $\pi(\cdot)$ is maximized when all the three arguments are the same and using Corollary \ref{cor:pi_monotonicity} we take largest possible arguments that are equal in the range of the summation. Before we write out the resulting bound for $\Pi\left(l_s,\ldots,l_2,d\right)$, we simplify notation by denoting $\pi(x,x,x)$ as $\pi(x)$, and noting that $Q_{\lceil\log_2s\rceil-1} = 2$. Therefore, the bound becomes the following.
\begin{multline}
\label{eqn:prob_1set_ssparseL}
\Pi\left(l_s,\ldots,l_2,d\right) \leq \mathop{\sum_{l_{Q_0}}} \mathop{\sum_{l_{Q_1}}} \mathop{\sum_{l_{R_1}}} \pi\left(l_{Q_0}\right) \Bigg{[} \mathop{\sum_{l_{Q_2}}} \mathop{\sum_{l_{R_2}}}
\pi\left(l_{Q_1}\right) \\
\times \pi\left(l_{R_1}\right) \Bigg{[} \cdots \Bigg{[} \mathop{\sum_{l_{2}}}\pi\left(l_{4}\right) \pi\left(l_{3}\right) \pi\left(l_{2}\right)\Bigg{]} \cdots \Bigg{]}.
\end{multline}
Properly aligning the $\pi(\cdot)$ with their relevant summations simplifies the right hand side (RHS) of \eqref{eqn:prob_1set_ssparseL} to the following.
\begin{equation}
\label{eqn:prob_1set_ssparseM}
\left[\mathop{\sum_{l_{Q_0}}} \pi\left(l_{Q_0}\right) \right] \cdot \left[\mathop{\sum_{l_{Q_1}}} \pi\left(l_{Q_1}\right) \right] \cdot \left[\mathop{\sum_{l_{R_1}}} \pi\left(l_{R_1}\right)\right] \times \cdots \times
\left[\mathop{\sum_{l_{4}}} \pi\left(l_{4}\right)\right] \cdot
\left[ \mathop{\sum_{l_{3}}} \pi\left(l_{3}\right)\right] \cdot \left[\mathop{\sum_{l_{2}}} \pi\left(l_{2}\right)\right]\,.
\end{equation}
From \eqref{eqn:prob_1set_poly3d} we have
\begin{equation}
\label{eqn:psi_n_bound}
\pi(y) := \pi(y,y,y) = \left(\frac{5}{4} \right)^2 \sqrt{\frac{2\pi y(n-y)}{n}} < \left(\frac{5}{4} \right)^2 \sqrt{2\pi y}\,.
\end{equation}
We use the RHS of \eqref{eqn:psi_n_bound} to upper bound each term in \eqref{eqn:prob_1set_ssparseM}, leading to the following bound.
\begin{multline}
\label{eqn:prob_1set_ssparseN}
\left[\sqrt{2\pi}\left(\frac{5}{4} \right)^2 \mathop{\sum_{l_{Q_0}}} \sqrt{l_{Q_0}}\right] \cdot 
\left[\sqrt{2\pi}\left(\frac{5}{4} \right)^2\mathop{\sum_{l_{Q_1}}} \sqrt{l_{Q_1}}\right] \cdot 
\left[\sqrt{2\pi}\left(\frac{5}{4} \right)^2\mathop{\sum_{l_{R_1}}} \sqrt{l_{R_1}}\right] \times \\
\cdots \times
\left[\sqrt{2\pi}\left(\frac{5}{4} \right)^2\mathop{\sum_{l_{4}}} \sqrt{l_{4}}\right] \cdot
\left[\sqrt{2\pi}\left(\frac{5}{4} \right)^2 \mathop{\sum_{l_{3}}} \sqrt{l_{3}}\right] \cdot 
\left[\sqrt{2\pi}\left(\frac{5}{4} \right)^2\mathop{\sum_{l_{2}}} \sqrt{l_{2}}\right]\,.
\end{multline}
For each $Q_i$ and $R_i$, $i = 1,\ldots,\lceil \log_2 s \rceil - 2$, which means we have $\lceil \log_2 s \rceil - 2$ pairs plus one $Q_0$, hence \eqref{eqn:prob_1set_ssparseN} simplifies to the following.
\begin{align}
\label{eqn:prob_1set_ssparseO}
&\left[\sqrt{2\pi}\left(\frac{5}{4} \right)^2 \right]^{2\lceil \log_2 s \rceil - 3} \cdot
\left[\mathop{\sum_{l_{Q_0}}} \sqrt{l_{Q_0}}\right]\cdot 
\left[\mathop{\sum_{l_{Q_1}}} \sqrt{l_{Q_1}}\right]\cdot 
\cdots\left[\mathop{\sum_{l_{3}}} \sqrt{l_{3}} \right]\cdot 
\left[\mathop{\sum_{l_{2}}} \sqrt{l_{2}}\right] \\
\label{eqn:prob_1set_ssparseP}
\leq &\left[\sqrt{2\pi}\left(\frac{5}{4} \right)^2 \right]^{2\lceil \log_2 s \rceil - 3} \cdot
\left(q_0\sqrt{a_{Q_0}}\right)\cdot 
\left(q_1\sqrt{a_{Q_1}}\right)\cdots 
\left(q_{\lceil \log_2 s \rceil - 2}\sqrt{a_{3}} \right)\cdot 
\left(r_{\lceil \log_2 s \rceil - 2} \sqrt{a_{2}}\right) \\
\label{eqn:prob_1set_ssparseQ}
= &\left[\sqrt{2\pi}\left(\frac{5}{4} \right)^2 \right]^{2\lceil \log_2 s \rceil - 3} \cdot
\left(q_0q_1r_1 \cdots  q_{\lceil \log_2 s \rceil - 2}r_{\lceil \log_2 s \rceil - 2}\right) \cdot
\left(a_{Q_0}a_{Q_1}a_{R_1} \cdots a_{3}a_{2}\right)^{1/2}\,.
\end{align}
From \eqref{eqn:prob_1set_ssparseO} to \eqref{eqn:prob_1set_ssparseP} we upper each sum by taking the largest possible value of $l_{(\cdot)}$, which is $a_{(\cdot)}$, and multiplied it with the total number terms in the summation given by Lemma \ref{lem:num_size_split} in Appendix \ref{sec:oldresults}. We did upper bound the following two terms of \eqref{eqn:prob_1set_ssparseQ}.
\begin{align}
\label{eqn:prob_1set_ssparseQ1}
q_0q_1r_1q_2r_2q_3r_3 \cdots  q_{\lceil \log_2 s \rceil - 2}r_{\lceil \log_2 s \rceil - 2} & \leq s^{\log_2 s - 1} \,,\\
\label{eqn:prob_1set_ssparseQ2}
\left(a_{Q_0}a_{Q_1}a_{R_1}a_{Q_2}a_{R_2}a_{Q_3}a_{R_3} \cdots a_{3}a_{2}\right)^{1/2} & \leq 2^{-1} \cdot e^{\frac{1}{4}} \cdot \left(a_{s}\right)^{\log_2 s - \frac{3}{2}} \cdot s^{\frac{1}{2}\log_2 s + \frac{3}{2}}\,.
\end{align}
Details of the derivation of the bounds \eqref{eqn:prob_1set_ssparseQ1} and \eqref{eqn:prob_1set_ssparseQ2} is in the Appendix \ref{sec:inequalities}. Using these bounds from \eqref{eqn:prob_1set_ssparseQ} we have the following upper bound for $\Pi\left(l_s,\ldots,l_2,d\right)$.
\begin{align}
\label{eqn:prob_1set_ssparseR}
\Pi\left(l_s,\ldots,l_2,d\right) & \leq \left[\sqrt{2\pi}\left(\frac{5}{4} \right)^2 \right]^{2\lceil \log_2 s \rceil - 3} \cdot \left(s^{\log_2 s - 1} \right) \cdot \left(2^{-1} \cdot e^{\frac{1}{4}} \cdot \left(a_{s}\right)^{\log_2 s - \frac{3}{2}} \cdot s^{\frac{1}{2}\log_2 s + \frac{3}{2}} \right)\\
\label{eqn:prob_1set_ssparseS}
& \leq 2^{-3} \cdot e^{\frac{1}{4}} \cdot \left(a_{s}\right)^{\log_2 s - \frac{3}{2}} \cdot s^{\frac{3}{2}\log_2 s + \frac{9}{2}}\,.
\end{align}
From \eqref{eqn:prob_1set_ssparseR} to \eqref{eqn:prob_1set_ssparseS} we used the following upper bound.
\begin{equation}
\label{eqn:numbound}
\left[\sqrt{2\pi}\left(\frac{5}{4} \right)^2 \right]^{2\lceil \log_2 s \rceil - 3} \leq \left[\sqrt{2\pi}\left(\frac{5}{4} \right)^2 \right]^{2 \left(\log_2 s + 1 \right) - 3} 
\leq 4^{2 \log_2 s  - 1} = 2^{4 \log_2 s  - 2}  = 2^{-2} s^4\,.
\end{equation}
The bound \eqref{eqn:prob_1set_ssparseS} coincides with \eqref{eqn:pi_bound}, hence concluding the proof.
\vspace{-0.7cm}
\begin{flushright}
$\square$
\end{flushright}

\subsection{Theorem \ref{thm:prob_expansion_bound_new}} \label{sec:pthm3}
The following lemma is a key input in this proof.
\begin{lemma}
 \label{lem:beta}
 Let $0<\alpha\leq 1$, and $\varepsilon_n > 0$ such that $\varepsilon_n \rightarrow 0$ as $n\rightarrow \infty$. Then for $a_{s} < \hat{a}_{s}$, 
 \begin{equation}
 \label{eqn:set_sizes}
  a_{2i} = 2a_i + ca_i^2, \quad \mbox{for} \quad c = -\beta n^{-1}\,,
 \end{equation}
  where 
 \begin{equation}
  \label{eqn:beta}
  \beta = \frac{1 + \sqrt{1 - 4(1-\varepsilon_n)^2\left(1 - e^{-\alpha d} \right)e^{-\alpha d}}}{2(1-\varepsilon_n)\left(1 - e^{-\alpha d} \right)}\,.
 \end{equation}
\end{lemma}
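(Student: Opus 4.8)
\textbf{Proof proposal for Lemma \ref{lem:beta}.}

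The plan is to extract the recursion \eqref{eqn:set_sizes} directly from the polynomial system \eqref{eqn:unrestrictedsizes} in Theorem \ref{thm:prob_expansion_bound}, which governs the $a_i$ in the regime $a_s < \hat a_s$, and then to identify the constant $\beta$ by anchoring the recursion at the base level $i=1$ where $a_1 = d$. First I would posit an ansatz of the form $a_{2i} = 2a_i + c a_i^2$ with a constant $c$ independent of $i$ (this is the natural perturbation of the ``expected value'' recursion \eqref{eqn:restrictedsizes}, which reads $\hat a_{2i} = 2\hat a_i - \hat a_i^2/n$, i.e.\ the $c = -1/n$ case), and substitute it into \eqref{eqn:unrestrictedsizes}, namely $a_{2i}^3 - 2a_i a_{2i}^2 + 2a_i^2 a_{2i} - a_i^2 a_{4i} = 0$. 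Writing $a_{4i}$ via the same ansatz applied one level up, $a_{4i} = 2a_{2i} + c a_{2i}^2 = 2(2a_i + ca_i^2) + c(2a_i + ca_i^2)^2$, the cubic collapses after cancellation to a relation that is consistent for all $i$ precisely when $c$ is constant; this shows the ansatz is self-consistent and reduces the whole system to determining the single scalar $c$ (equivalently $\beta = -cn$).

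Next I would pin down $\beta$ using the boundary data. The quantity $a_s = (1-\e)ds$ is the constrained value we are imposing, and at the finest scale each single column contributes $d$ ones, so $a_1 = d$; the expansion probabilities $\mathrm{Prob}(|A_i| = a_i)$ at the base of the dyadic tree involve the occupancy term $1 - e^{-\alpha d}$ (with $\alpha$ the relevant occupancy parameter and $\varepsilon_n$ the vanishing lower-order correction coming from Stirling/entropy approximations used throughout Section \ref{sec:construct}). Matching the ansatz $a_2 = 2a_1 + ca_1^2 = 2d - \beta d^2/n$ against the value of $a_2$ dictated by the constrained occupancy at the first level yields a quadratic equation for $\beta$, whose relevant root is exactly \eqref{eqn:beta}: the discriminant $1 - 4(1-\varepsilon_n)^2(1 - e^{-\alpha d})e^{-\alpha d}$ is nonnegative because $(1-e^{-\alpha d})e^{-\alpha d} \le 1/4$, so $\beta$ is real, and we take the ``$+$'' root so that $\beta \to 1$ in the unconstrained limit (recovering \eqref{eqn:restrictedsizes}) and $\beta \ge 1$ in general, matching the hypothesis $\beta \ge 1$ used in Theorem \ref{thm:prob_expansion_bound_new}. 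I would also verify $a_{2i} \ge a_i$ holds along the recursion, which is needed for the bound to be meaningful.

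The main obstacle I expect is the algebra of showing the ansatz is \emph{exactly} consistent with the cubic \eqref{eqn:unrestrictedsizes} at every level rather than only asymptotically: substituting $a_{2i} = 2a_i + ca_i^2$ and $a_{4i} = 2a_{2i} + ca_{2i}^2$ produces terms up to degree four in $a_i$, and one must check that the coefficients of $a_i^2$, $a_i^3$, $a_i^4$ all vanish identically in $c$ (they do, but only after using the specific shape $2x + cx^2$, not a general quadratic), or else argue that the higher-order terms are absorbed into the $\varepsilon_n \to 0$ correction. A secondary subtlety is correctly identifying which occupancy functional produces the $1 - e^{-\alpha d}$ and $e^{-\alpha d}$ factors in \eqref{eqn:beta}; this requires tracking the Poissonization of the column-occupancy model from Lemma \ref{lem:intersect_prob} in the appendix and being careful that $\alpha \in (0,1]$ is the scaling exponent with $a_s = \Theta(s)$. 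Once $\beta$ is in hand, the closed form \eqref{eqn:beta} feeds directly into bounding $\psi_i$ and hence $\Psi_n$ in the proof of Theorem \ref{thm:prob_expansion_bound_new}, so accuracy of the constant here is what ultimately drives the improved $d = \bigO(\log_s(N/s))$ complexity.
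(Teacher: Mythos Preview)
Your derivation of the recursion \eqref{eqn:set_sizes} is correct and in fact cleaner than you fear: substituting $a_{2i}=2a_i+ca_i^2$ and $a_{4i}=2a_{2i}+ca_{2i}^2$ into \eqref{eqn:unrestrictedsizes} gives $a_{2i}^2(a_{2i}-2a_i)+a_i^2(2a_{2i}-a_{4i})=ca_i^2a_{2i}^2-ca_i^2a_{2i}^2=0$ identically, so the ansatz is \emph{exact}, not merely asymptotic, and your worry about degree-four remainders is unfounded. (The paper gets this without an ansatz by rewriting the cubic as $\tfrac{a_{j+2}-2a_{j+1}}{a_{j+1}^2}=\tfrac{a_{j+1}-2a_j}{a_j^2}$, so this ratio is level-independent and equals $c$.)

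The genuine gap is in how you propose to determine $\beta$. The only constraint available is at the \emph{top} of the dyadic tree, $a_s=(1-\varepsilon_n)\hat a_s$, not at the base; there is no independent ``constrained occupancy value'' for $a_2$ to match against, and in any case a single matching condition at one level would be linear in $\beta$, not quadratic. The paper instead iterates the recursion to the closed form $ca_l+1=(cd+1)^{2^l}$, sets $2^l=\alpha n$ (so $\alpha$ is essentially $s/n$, not a scaling exponent for $a_s$), writes $c=-\beta/n$, and imposes $a_l=(1-\varepsilon_n)\hat a_l$ with $\hat a_l$ given by the same closed form at $c=-1/n$. This yields $(1-\beta d/n)^{\alpha n}-1=\beta(1-\varepsilon_n)\bigl[(1-d/n)^{\alpha n}-1\bigr]$. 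The factors $e^{-\alpha d}$ and $1-e^{-\alpha d}$ in \eqref{eqn:beta} are nothing more than the limits $(1-d/n)^{\alpha n}\to e^{-\alpha d}$ of these iterated powers as $n\to\infty$; they do not come from Poissonization or from Lemma~\ref{lem:intersect_prob}. Finally, the quadratic in $\beta$ arises because one expands both sides to order $k^{-1}$ (with $k=\alpha n$) and matches the $k^0$ and $k^{-1}$ coefficients separately, giving two equations that combine to $(1-\varepsilon_n)(1-e^{-\alpha d})\beta^2-\beta+(1-\varepsilon_n)e^{-\alpha d}=0$. Your selection of the ``$+$'' root and the $\beta\to 1$ consistency check in the unconstrained limit are correct.
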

The proof of the lemma is found in Section \ref{sec:plem4}. Recall from Theorem \ref{lem:prob_expansion_bound_new} that
\begin{equation}
\label{eqn:bigpsin_new3}
 \Psi_n\left(a_s,\ldots,a_1\right) = \frac{1}{n} \left[\frac{3\log 2}{2}\log_2^2 s + \left(\log_2 s - \frac{3}{2}\right)\log a_s + \sum_{i\in\Omega} \frac{s}{2i} \psi_i \right], ~~ \mbox{for} ~~ \Omega = \{2^j\}_{j=0}^{\log_2(s) - 1}\,,
\end{equation}
where 
\begin{equation}
\label{eqn:smallpsi2}
\psi_i = \left(n-a_i\right) \cdot \mathcal{H}\left(\frac{a_{2i}-a_i}{n-a_i}\right) + a_i\cdot \mathcal{H}\left(\frac{a_{2i}-a_i}{a_i}\right) - n\cdot \mathcal{H}\left(\frac{a_i}{n}\right)\,,
\end{equation} 

We use Lemma \ref{lem:beta} to upper bound $\psi_i$ in \eqref{eqn:smallpsi2} away from zero from above as $n\rightarrow 0$. We formalize this bound in the following proposition.
\begin{proposition}
 \label{pro:psibound}
 Let $\eta > 0$ and $\beta>1$ as defined in Lemma \ref{lem:beta}. Then 
 \begin{equation}
  \label{eqn:psibound}
  \psi_i \leq  -a_i\eta(\beta-1)\beta^{-1}\left(1-\beta\frac{a_i}{n}\right)^{-1}\,.
 \end{equation}
\end{proposition}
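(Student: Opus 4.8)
The plan is to start from the exact formula for $\psi_i$ in \eqref{eqn:smallpsi2} and substitute the relation $a_{2i} = 2a_i + ca_i^2$ with $c = -\beta n^{-1}$ supplied by Lemma \ref{lem:beta}. With this substitution the three entropy arguments become explicit: $\frac{a_{2i}-a_i}{n-a_i} = \frac{a_i(1 - \beta a_i/n)}{n - a_i}$, $\frac{a_{2i}-a_i}{a_i} = 1 - \beta a_i/n$, and $\frac{a_i}{n}$. So $\psi_i$ becomes a function of the single scaled quantity $t := a_i/n \in (0,1)$ and the parameter $\beta$. The goal reduces to showing this function is bounded above by $-n t\,\eta(\beta-1)\beta^{-1}(1-\beta t)^{-1}$, i.e. after dividing through by $n$, that $\frac{1}{n}\psi_i \le -t\,\eta(\beta-1)\beta^{-1}(1-\beta t)^{-1}$.

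The key analytic step is to Taylor-expand (or otherwise carefully estimate) the entropy terms. Writing $\mathcal{H}(x) = -x\log x - (1-x)\log(1-x)$, I would expand each of the three $\mathcal{H}(\cdot)$ contributions for small argument or use the standard bound $\mathcal{H}(x) \le -x\log x + x$ together with a matching lower bound $\mathcal{H}(x) \ge -x\log x$ where the sign of the coefficient requires it. The combination $(n-a_i)\mathcal{H}(\cdots) + a_i\mathcal{H}(\cdots) - n\mathcal{H}(a_i/n)$ is designed so that the leading $-x\log x$ pieces telescope, leaving a residual that is a product of $a_i$ with a term governed by $(\beta-1)$ and $\varepsilon_n$; this is exactly the place the constant $\eta$ absorbs the $\varepsilon_n$-dependent lower order contributions (note $\eta$ can be taken arbitrarily small but positive since $\varepsilon_n \to 0$). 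The factor $(1-\beta a_i/n)^{-1} = (1-\beta t)^{-1}$ arises naturally because $a_{2i} - a_i = a_i(1-\beta t)$, so when one writes $a_i$ in terms of $a_{2i}-a_i$ the reciprocal appears.

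I would organize the computation as: (1) substitute the Lemma \ref{lem:beta} relation into \eqref{eqn:smallpsi2}; (2) expand all three entropies and collect the $a_i\log(\text{stuff})$ terms, verifying the cancellation of the dominant logarithmic parts; (3) identify the surviving term as $-a_i \cdot (\text{positive quantity depending on }\beta) \cdot (1-\beta t)^{-1}$ plus an error controlled by $\varepsilon_n$; (4) absorb the error into $\eta$ and check monotonicity in $\beta$ to get the clean bound $-a_i\eta(\beta-1)\beta^{-1}(1-\beta t)^{-1}$ with $\beta>1$ (so $(\beta-1)\beta^{-1}>0$, making the whole expression negative, which is the point — $\psi_i$ is bounded away from zero from above in the right regime).

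The main obstacle I anticipate is the bookkeeping in step (2): the entropy expansions generate several terms of comparable size, and showing that the logarithmic terms cancel exactly (rather than merely approximately) requires using the defining quadratic relation for $\beta$ in \eqref{eqn:beta} — in particular the identity that makes $(1-\varepsilon_n)(1-e^{-\alpha d})\beta^2 - \beta + (1-\varepsilon_n)e^{-\alpha d} = 0$ at the relevant order, tying $\beta$ to the expected-growth recursion $\hat a_{2i} = \hat a_i(2-\hat a_i/n)$. Getting the sign and the precise constant $\eta(\beta-1)\beta^{-1}$ right, as opposed to some looser expression, is where care is needed; once the cancellation is established the remaining estimate is a routine application of convexity of $-x\log x$ and the smallness of $\varepsilon_n$.
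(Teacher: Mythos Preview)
Your plan has a genuine gap rooted in a misreading of what $\eta$ is. In the paper's argument $\eta$ is \emph{not} an error-absorbing constant tied to $\varepsilon_n$; it is defined as $\eta := \xi_{32}-\xi_{21}$, the gap between two mean-value abscissae arising from two applications of the mean value theorem. Concretely, after substituting $a_{2i}=2a_i+ca_i^2$ with $c=-\beta/n$ and setting $x_1=\frac{a_i}{n}\cdot\frac{1+ca_i}{1-a_i/n}$, $x_2=\frac{a_i}{n}$, $x_3=-ca_i$, one rewrites $\psi_i=-(n-a_i)[\mathcal H(x_2)-\mathcal H(x_1)]+a_i[\mathcal H(x_3)-\mathcal H(x_2)]$. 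The decisive point is the \emph{exact} algebraic identity $a_i(x_3-x_2)=(n-a_i)(x_2-x_1)$, which holds for any $\beta$ once $c=-\beta/n$ is inserted; it does \emph{not} require the quadratic relation \eqref{eqn:beta} you plan to invoke. This identity makes the first-order contribution vanish identically, and the surviving second-order piece, obtained by applying the mean value theorem to $\mathcal H$ and then to $\mathcal H'$, is $a_i(x_3-x_2)(\xi_{32}-\xi_{21})\mathcal H''(\xi_{31})$; bounding $\mathcal H''(\xi_{31})\le\mathcal H''(x_3)=-[\beta\tfrac{a_i}{n}(1-\beta\tfrac{a_i}{n})]^{-1}$ and using $x_3-x_2=(\beta-1)\tfrac{a_i}{n}$ gives exactly \eqref{eqn:psibound}.

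Your proposed route via the one-sided bounds $\mathcal H(x)\le -x\log x + x$ and $\mathcal H(x)\ge -x\log x$ will not deliver this: those inequalities are lossy at order $x$, the \emph{same} order as the term that survives, so the ``telescoping'' you hope for cannot be exact and the residual will swamp the $(\beta-1)$ term you need. A second-order Taylor expansion could in principle be made to work, but then you are essentially redoing the double mean-value-theorem computation; the clean way is to spot the identity $a_i(x_3-x_2)=(n-a_i)(x_2-x_1)$ directly and let $\eta$ be the MVT gap rather than an $\varepsilon_n$-absorbing fudge factor.
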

The proof of Proposition \ref{pro:psibound} is found in Section \ref{sec:ppro2}.
Using the bound of $\psi_i$ in Proposition \ref{pro:psibound}, we upper $\Psi\left(a_s,\ldots,a_1\right)$ as follows.
\begin{align}
\label{eqn:prob_expansion_bound4_new1}
 \Psi\left(a_s,\ldots,a_1\right) & \leq \frac{1}{n} \left[\frac{3\log 2}{2}\log_2^2 s + \left(\log_2 s - \frac{3}{2}\right)\log a_s - \sum_{i\in\Omega} \frac{s}{2i} \cdot \frac{a_i\eta(\beta-1)}{\beta\left(1-\beta\frac{a_i}{n}\right)} \right],  \\
 \label{eqn:prob_expansion_bound4_new2}
 & \leq - \frac{\eta(\beta-1)}{\beta}\sum_{i\in\Omega}\left( \frac{s}{2i} \cdot \frac{a_i}{n} \right)  + \frac{1}{n} \left[\frac{3\log 2}{2}\log_2^2 s + \left(\log_2 s - \frac{3}{2}\right)\log a_s \right]\,.
\end{align}
Then setting $a_s = (1-\e)ds$ and substituting in \eqref{eqn:prob_expansion_bound4_new2}, the factor multiplying $\frac{1}{n}$ becomes
\begin{align}
\label{eqn:prob_expansion_bound4_term1}
 & \frac{3\log 2}{2}\log_2^2 s + \left(\log_2 s - \frac{3}{2}\right)\log \left[(1-\e)ds \right]\\
 \label{eqn:prob_expansion_bound4_term2}
 & = \frac{3\log 2}{2}\log_2^2 s + \log (1-\e)\log_2 s + \log d \log_2 s + \log_2 s \log s - \frac{3}{2} \log\left[(1-\e)d\right] - \frac{3}{2} \log s\\
 \label{eqn:prob_expansion_bound4_term3}
 & = \frac{5\log 2}{2}\log_2^2 s + \left(\log_2 (1-\e) - 3/2\right)\log s + \log d \log_2 s + \log\left[(1-\e)^{-3/2}d^{-3/2}\right] \\
 \label{eqn:prob_expansion_bound4_term4}
 & = \frac{5\log 2}{2}\log_2^2 s + \log d \log_2 s + \log\left[(1-\e)^{-3/2}d^{-3/2}\right] + \log s^{\log_2(1-\e) - 3/2}\,.
\end{align}
The last two terms of \eqref{eqn:prob_expansion_bound4_term4} become polynomial in $s,d$ and $\e$, when exponentiated hence they are incorporated into $p_n(s,d,\e)$ in \eqref{eqn:pn_new2}, which means
\begin{align}
\label{eqn:pn_new3}
p_n(s,d,\e) & = p_n(s,d)\cdot \exp\left[\log\left[(1-\e)^{-3/2}d^{-3/2}\right] + \log s^{\log_2(1-\e) - 3/2}\right]\\
\label{eqn:pn_new4}
 & = 2^{-3}s^{9/2}e^{1/4} \cdot (1-\e)^{-3/2}d^{-3/2} s^{\log_2(1-\e) - 3/2}\\
  & = \frac{\sqrt[4]{e} \cdot s^{\log_2(1-\e) + 3}}{\sqrt{2^{6}(1-\e)^3d^3}} \,,
\end{align}
which is \eqref{eqn:pn_new2}.
The first two terms of \eqref{eqn:prob_expansion_bound4_term4} will grow faster than a polynomial in $s,d$ and $\e$ when exponentiated, hence they replace in \eqref{eqn:prob_expansion_bound4_new2}, the factor multiplying $\frac{1}{n}$. Therefore, 
\eqref{eqn:prob_expansion_bound4_term4} is modified as thus
\begin{equation}
\label{eqn:bigpsin_new3}
- \frac{\eta(\beta-1)}{\beta}\sum_{i\in\Omega}\left( \frac{s}{2i} \cdot \frac{a_i}{n} \right)  + \frac{1}{n}\left[\frac{5\log 2}{2}\log_2^2 s + \log d \log_2 s \right] =: \Psi_n\left(s,d,\e \right)\,.
\end{equation}
 
The factor $\sum_{i\in\Omega} \left(\frac{s}{2i} \cdot \frac{a_i}{n}\right)$ 
in \eqref{eqn:bigpsin_new3} is lower bounded as follows, see proof in Section \ref{sec:ineq1}.
\begin{equation}
 \label{eqn:seriesbound}
 \sum_{i\in\Omega}\left( \frac{s}{2i} \cdot \frac{a_i}{n}\right) \geq \frac{\log_2(s/2)}{2n}(1-\e)ds\,.
\end{equation}
Using this bound in \eqref{eqn:bigpsin_new3} gives \eqref{eqn:bigpsin_new2},
thus concluding the proof.
\vspace{-0.7cm}
\begin{flushright}
$\square$
\end{flushright}

\subsection{Lemma \ref{lem:beta}} \label{sec:plem4}
 Recall that we have a formula for the expected values of the $a_i$ as
\begin{equation}
\label{eqn:expectedvalues}
 \hat{a}_{2i} = \hat{a}_{i} \left(2 - \frac{\hat{a}_{i}}{n} \right) \quad \mbox{for} \quad i\in\{2^j\}_{j=0}^{\log_2(s) - 1},
\end{equation}
which follow a relatively simple formulas, and then the coupled system of cubics as
\begin{equation}
\label{eqn:coupledsystem}
 a_{2i}^3 - 2a_{i}a_{2i}^2 + 2a_{i}^2a_{2i} - a_{i}^2a_{4i} = 0 \quad \mbox{for} \quad i\in\{2^j\}_{j=0}^{\log_2(s) - 2},
\end{equation}
for when the final $a_s$ is constrained to be less than $\hat{a}_s$. To simplify the notation of the indexing in \eqref{eqn:coupledsystem}, observe that if $i = 2^{j}$ for a fixed $j$, then $2i = 2^{j+1}$ and $4i = 2^{j+2}$. Therefore, it suffice to use the index $a_j, a_{j+1}$, and $a_{j+2}$ rather than $a_{i}, a_{2i}$, and $a_{4i}$. Moving the second two terms in \eqref{eqn:coupledsystem} to the right and dividing the quadratic multiples we get the relation
\begin{equation}
\label{eqn:coupledsystemordered}
 \frac{a_{j+2}-2a_{j+1}}{a_{j+1}^2}=\frac{a_{j+1}-2a_j}{a_j^2},
\end{equation}
which is the same expression on the right and left, but with $j$ increased by one on the left.  This implies that the fraction is independent of $j$, so
\begin{equation}
\label{eqn:constantrelation}
 \frac{a_{j+1}-2a_j}{a_j^2}=c\,, \quad \Rightarrow \quad a_{j+1}= a_j(2 + c a_j)\,,
\end{equation}
for some constant $c$ independent of $j$ (though not necessarily of $n$). This is in fact the relation \eqref{eqn:expectedvalues}, if we set $c$ to be equal to $-1/n$. One can then wonder what is the behavior of $c$ if we fix the final $a_s$. Moreover, \eqref{eqn:constantrelation} is equivalent to
\begin{equation}
\label{eqn:constantrelation2}
c a_{j+1} + 1 = (c a_j + 1)^2\,,
\end{equation}
which inductively leads to
\begin{equation}
\label{eqn:constantrelation3}
c a_{l} + 1 = (c a_0 + 1)^{2^l}, \quad l>0,
\end{equation}
so that one has a relation of the $l^{th}$ stage in terms of the first stage. 
Note this does not require the $a_s$ to be fixed, \eqref{eqn:constantrelation3} is how one simply computes all $a_l$ for $l>0$ once one has $a_0$ and $c$.  The point is that $c$ to match the $a_s$ one has to select $c$ appropriately.  So the way we calculate $c$ is by knowing $a_0$ and $a_s$, then solving \eqref{eqn:constantrelation3} for $l=s$.  
Unfortunately there is not an easy way to solve for $c$ in \eqref{eqn:constantrelation3} so we need to do some asymptotic approximation. 
Let's assume that $a_l$ is close to $\hat{a}_l$.  So we do an asymptotic expansion in terms of the difference from $\hat{a}_l$. 

To simplify things a bit lets insert $a_0=d$ (since $a_0$ is $a_1$ in our standard notation) and then we insert what we know for $\hat{a}_l$.  For $\hat{a}_l$ we have $c=-n^{-1}$, see \eqref{eqn:expectedvalues}.  We then have from \eqref{eqn:constantrelation3} that 
\begin{equation}
\label{eqn:constantrelation4}
a_l = c^{-1}  (cd+1)^{2^l} - c^{-1} \quad \mbox{and} \quad \hat{a}_l = -n (-d/n+1)^{2^l} + n.
\end{equation}
So if we write $a_l = (1-\varepsilon_n)\hat{a}_l$ and consider the case of $\varepsilon_n \rightarrow 0$ as $n\rightarrow 0$. The point of this is that instead of working with $\hat{a}_l$ we can now work in terms of $\varepsilon_n$.
Setting $a_l = (1-\varepsilon_n)\hat{a}_l$ gives
\begin{equation}
\label{eqn:constantrelation5}
c^{-1} (cd+1)^{2^l} - c^{-1} = -n(1-\varepsilon_n) \left[(-d/n+1)^{2^l} - 1\right]\,.
\end{equation}
We now solve for $c$ as a function of $\varepsilon_n$ and $d$.  As $\varepsilon_n$ goes to zero we should have $c$ converging to $-n^{-1}$.  

Let $\alpha n = 2^l$, for $0 < \alpha \leq 1$ and $c = -\beta(\varepsilon_n,d)/n$, then, dropping the argument of $\beta(\cdot,\cdot)$, \eqref{eqn:constantrelation5} becomes
\begin{equation}
\label{eqn:constantrelation6}
\frac{n}{\beta} \left(1-\frac{\beta d}{n}\right)^{\alpha n} - \frac{n}{\beta} = n (1-\varepsilon_n)\left(1-\frac{d}{n}\right)^{\alpha n} - n(1-\epsilon)\,.
\end{equation}
Multiplying through by $\beta/n$ and performing a change of variables of $k = \alpha n$, \eqref{eqn:constantrelation6} becomes
\begin{equation}
\label{eqn:constantrelation7}
\left(1-\frac{\alpha \beta d}{k}\right)^{k} - 1 = \beta(1-\varepsilon_n)\left(1-\frac{\alpha d}{k}\right)^{k} - \beta(1-\varepsilon_n)\,.
\end{equation}
The left hand side of \eqref{eqn:constantrelation7} simplifies to 
\begin{equation}
\label{eqn:constantrelation7lhs}
e^{-\alpha \beta d} - 1 -\frac{e^{-\alpha \beta d}\alpha^2 \beta^2d^2}{2k} + \bigO(k^{-2})\,.
\end{equation}
The right hand side of \eqref{eqn:constantrelation7} simplifies to 
\begin{equation}
\label{eqn:constantrelation7rhs}
\beta(1-\varepsilon_n)e^{-\alpha d} - \beta(1-\varepsilon_n) - \frac{\beta(1-\varepsilon_n)e^{-\alpha d}\alpha^2d^2}{2k} + \bigO(k^{-2})\,.
\end{equation}
Matching powers of $k$ in \eqref{eqn:constantrelation7lhs} and \eqref{eqn:constantrelation7rhs} for $k^0$ and $k^{-1}$ yields the following.
\begin{align}
\label{eqn:constantrelation7n0}
e^{-\alpha \beta d} - 1	& = \beta(1-\varepsilon_n)e^{-\alpha d} - \beta(1-\varepsilon_n) \,,\quad \mbox{and} \\ 
\label{eqn:constantrelation7n1}
\alpha^2 \beta^2d^2e^{-\alpha \beta d} & = \beta(1-\varepsilon_n)\alpha^2d^2e^{-\alpha d}\,.
\end{align}
Both of which respectively simplify to the following.
\begin{align}
\label{eqn:constantrelation7n0b}
e^{-\alpha \beta d} - 1 & = \beta(1-\varepsilon_n)\left(e^{-\alpha d}  - 1 \right) \,,\quad \mbox{and} \\ 
\label{eqn:constantrelation7n1b}
\beta e^{-\alpha \beta d} & = (1-\varepsilon_n)e^{-\alpha d}\,.
\end{align}
Multiply \eqref{eqn:constantrelation7n0b} by $\beta$ and subtract the two equations, \eqref{eqn:constantrelation7n0b} and \eqref{eqn:constantrelation7n1b}, to get
\begin{equation}
\label{eqn:constantrelation7n1a}
(1-\varepsilon_n)\left(1 - e^{-\alpha d} \right) \beta^2 - \beta + (1-\varepsilon_n)e^{-\alpha d} = 0\,.
\end{equation}
This yields
\begin{equation}
\label{eqn:constantrelation8}
\beta = \frac{1 \pm \sqrt{1 - 4(1-\varepsilon_n)^2\left(1 - e^{-\alpha d} \right)e^{-\alpha d}}}{2(1-\varepsilon_n)\left(1 - e^{-\alpha d} \right)}\,.
\end{equation}
To be consistent with what $c$ ought to be as $\varepsilon_n \rightarrow 0$, we choose 
\begin{equation}
\label{eqn:fepsilon}
\beta(\varepsilon_n,d) = \frac{1 + \sqrt{1 - 4(1-\varepsilon_n)^2\left(1 - e^{-\alpha d} \right)e^{-\alpha d}}}{2(1-\varepsilon_n)\left(1 - e^{-\alpha d} \right)}\,,
\end{equation}
as required -- concluding the proof.
\vspace{-0.7cm}
\begin{flushright}
$\square$
\end{flushright}

\subsection{Proposition \ref{pro:psibound}} \label{sec:ppro2}
We use Lemma \ref{lem:beta} to express $\psi_i$ in \eqref{eqn:smallpsi2} as follows
\begin{align}
\label{eqn:prob_expansion_bound3_new1}
 \psi_i & = -n\left[\mathcal{H}\left(\frac{a_i}{n}\right) - \mathcal{H}\left(\frac{a_{i}+ca^2_i}{n-a_i}\right)\right] + a_i\left[ \mathcal{H}\left(\frac{a_{i}+ca^2_i}{a_i}\right) - \mathcal{H}\left(\frac{a_{i}+ca^2_i}{n-a_i}\right)\right] \\
 \label{eqn:prob_expansion_bound3_new2}
  & = -n\left[\mathcal{H}\left(\frac{a_i}{n}\right) - \mathcal{H}\left(\frac{a_i}{n}\cdot\frac{1+ca_i}{1-\frac{a_i}{n}}\right)\right] + a_i\left[ \mathcal{H}\left(1+ca_i\right) - \mathcal{H}\left(\frac{a_i}{n}\cdot\frac{1+ca_i}{1-\frac{a_i}{n}}\right)\right]\,.
\end{align}
Note  that for regimes of small $s/n$ considered
\begin{equation}
 \label{eqn:ordering}
 -\frac{\beta}{n} = c \leq -\frac{1}{n} , \quad \Rightarrow \quad ca_i \leq -\frac{a_i}{n}, \quad \mbox{and} \quad 1+ca_i \leq 1 - \frac{a_i}{n} \,.
\end{equation}
We need the following expressions for the Shannon entropy and it's first and second derivatives
\begin{align}
 \label{eqn:sentrpy}
 \mathcal{H}(z) & = -z\log z - (1-z)\log(1-z), \\
 \label{eqn:sentrpyp}
 \mathcal{H}'(z) & = \log \left(\frac{1-z}{z}\right), \quad \mbox{and} \\
 \label{eqn:sentrpypp}
 \mathcal{H}''(z) & = -\frac{1}{z(1-z)}\,.
\end{align}
But also $\mathcal{H}(z) = \mathcal{H}(1-z)$ due to the symmetry about $z = 1/2$. Similarly, $\mathcal{H}''(z)$ is symmetric about $z=1/2$; while $\mathcal{H}'(z)$ is anti-symmetric, i.e. $\mathcal{H}'(z) = -\mathcal{H}'(1-z)$.
Using the symmetry of $\mathcal{H}(z)$ we rewrite $\psi_i$ in \eqref{eqn:prob_expansion_bound3_new2} as follows.
\begin{equation}
\label{eqn:prob_expansion_bound3_new3}
 \psi_i = -n\left[\mathcal{H}\left(\frac{a_i}{n}\right) - \mathcal{H}\left(\frac{a_i}{n}\cdot\frac{1+ca_i}{1-\frac{a_i}{n}}\right)\right] + a_i\left[ \mathcal{H}\left(-ca_i\right) - \mathcal{H}\left(\frac{a_i}{n}\cdot\frac{1+ca_i}{1-\frac{a_i}{n}}\right)\right]\,.
\end{equation}
From \eqref{eqn:ordering}, we deduce the following ordering
\begin{equation}
 \label{eqn:ordering2}
 \frac{a_i}{n}\cdot\frac{1+ca_i}{1-\frac{a_i}{n}} \leq \frac{a_i}{n} \leq -ca_i \leq 1/2 \,.
\end{equation}
To simplify notation, let $x_1 = \frac{a_i}{n}\cdot\frac{1+ca_i}{1-\frac{a_i}{n}}$, $x_2 = \frac{a_i}{n}$, and $x_3  = -ca_i$, which implies that $x_1 \leq x_2 \leq x_3 \leq 1/2$. Therefore, from  \eqref{eqn:prob_expansion_bound3_new3}, we have
\begin{align}
\label{eqn:prob_expansion_bound3_new4}
 \psi_i & = -n\left[\mathcal{H}\left(x_2\right) - \mathcal{H}\left(x_1\right)\right] + a_i\left[ \mathcal{H}\left(x_3\right) - \mathcal{H}\left(x_1\right)\right]\\
 \label{eqn:prob_expansion_bound3_new5}
  & = -n\left[\mathcal{H}\left(x_2\right) - \mathcal{H}\left(x_1\right)\right] + a_i\left[ \mathcal{H}\left(x_3\right) - \mathcal{H}\left(x_2\right) + \mathcal{H}\left(x_2\right) - \mathcal{H}\left(x_1\right)\right]\\
  \label{eqn:prob_expansion_bound3_new6}
  & = -(n-a_i)\left[\mathcal{H}\left(x_2\right) - \mathcal{H}\left(x_1\right)\right] + a_i\left[ \mathcal{H}\left(x_3\right) - \mathcal{H}\left(x_2\right)\right]\\
  \label{eqn:prob_expansion_bound3_new7}
  & \leq -(n-a_i)(x_2-x_1)\mathcal{H}'\left(x_{2}\right) + a_i(x_3-x_2)\mathcal{H}'\left(x_{2}\right)\\
  \label{eqn:prob_expansion_bound3_new7b}
  & = \left[a_i(x_3-x_2) - (n-a_i)(x_2-x_1)\right]\mathcal{H}'\left(x_{2}\right)\,.
\end{align}
Observe that the expression in the square brackets on the right hand side of \eqref{eqn:prob_expansion_bound3_new7b} is zero, which implies that
\begin{equation}
\label{eqn:nicerelation}
 a_i(x_3-x_2) = (n-a_i)(x_2-x_1)\,.
\end{equation}
This is very easy to check by substituting the values of $x_1$, $x_2$, and $x_3$.
So instead of bound \eqref{eqn:prob_expansion_bound3_new7}, we alternatively upper bound \eqref{eqn:prob_expansion_bound3_new6} as follows
\begin{equation}
\label{eqn:prob_expansion_bound3_new8}
 \psi_i \leq -(n-a_i)(x_2-x_1)\mathcal{H}'\left(\xi_{21}\right) + a_i(x_3-x_2)\mathcal{H}'\left(\xi_{32}\right)\,,
\end{equation}
where $\xi_{21} \in (x_1,x_2)$, and $\xi_{32} \in (x_2,x_3)$, which implies
\begin{equation}
\label{eqn:nicerelation2}
  \xi_{21} < \xi_{32}, \quad \mbox{and} \quad \mathcal{H}'\left(\xi_{21}\right) > \mathcal{H}'\left(\xi_{32}\right)\,.
\end{equation}
Using relation \eqref{eqn:nicerelation}, bound \eqref{eqn:prob_expansion_bound3_new8} simplifies to the following.
\begin{align}
\label{eqn:prob_expansion_bound3_new9}
 \psi_i & \leq -a_i(x_3-x_2)\mathcal{H}'\left(\xi_{21}\right) + a_i(x_3-x_2)\mathcal{H}'\left(\xi_{32}\right)\\
 \label{eqn:prob_expansion_bound3_new10}
  & = -a_i(x_3-x_2)\left[\mathcal{H}'\left(\xi_{21}\right) - \mathcal{H}'\left(\xi_{32}\right)\right]\\
  \label{eqn:prob_expansion_bound3_new11}
  & \leq -a_i(x_3-x_2)\left(\xi_{21}-\xi_{32}\right)\mathcal{H}''\left(\xi_{31}\right)\,,
\end{align}
for $\xi_{31} \in \left(x_{1},x_{3}\right)$. Since $\xi_{21} < \xi_{32}$, we rewrite bound \eqref{eqn:prob_expansion_bound3_new11} as follows.
\begin{align}
\label{eqn:prob_expansion_bound3_new12}
 \psi_i & \leq a_i(x_3-x_2)\left(\xi_{32} - \xi_{21}\right)\mathcal{H}''\left(\xi_{31}\right)\\
 \label{eqn:prob_expansion_bound3_new13}
  & \leq a_i\eta(x_3-x_2)\mathcal{H}''\left(x_{3}\right)\,,
\end{align}
where $\eta = \xi_{32} - \xi_{21} > 0$, and the last bound is due to the fact that $x_{3} > \xi_{31}$. 

Going back to our normal notation, we rewrite bound \eqref{eqn:prob_expansion_bound3_new13} as follows.
\begin{align}
\label{eqn:prob_expansion_bound3_new14}
 \psi_i & \leq a_i\eta\left(-ca_i-\frac{a_i}{n}\right)\mathcal{H}''\left(-ca_i\right)\\
 \label{eqn:prob_expansion_bound3_new15}
  & = a_i\eta\frac{a_i}{n}(\beta-1)\frac{-1}{\beta\frac{a_i}{n}\left(1-\beta\frac{a_i}{n}\right)}\\
 \label{eqn:prob_expansion_bound3_new16}
  & = -\frac{a_i\eta(\beta-1)}{\beta\left(1-\beta\frac{a_i}{n}\right)}\,,
\end{align}
This concludes the proof.
\vspace{-0.7cm}
\begin{flushright}
$\square$
\end{flushright}

\subsection{Inequality \ref{eqn:seriesbound}} \label{sec:ineq1}
The series bound \eqref{eqn:seriesbound} is derived as follows.
\begin{align}
 \label{eqn:seriesbound1}
 \sum_{i\in\Omega}\left( \frac{s}{2i} \cdot \frac{a_i}{n}\right) & = \left(\frac{s}{2} \cdot \frac{a_1}{n}\right) + \left(\frac{s}{4} \cdot \frac{a_2}{n}\right) + \cdots + \left(\frac{s}{s} \cdot \frac{a_{s/2}}{n}\right) \\
 \label{eqn:seriesbound2}
 & \geq \left(\frac{s}{2n} \cdot \frac{(1-\e)ds}{2^{\log_2 s}}\right) + \left(\frac{s}{4n} \cdot \frac{(1-\e)ds}{2^{\log_2(s/2)}} \right) + \cdots + \left(\frac{s}{sn} \cdot \frac{(1-\e)ds}{2}\right) \\
 \label{eqn:seriesbound3}
 & = \left[\left(\frac{s}{2n}\cdot\frac{1}{s}\right) + \left(\frac{s}{4n}\cdot\frac{2}{s}\right) + \cdots + \left(\frac{s}{sn}\cdot\frac{1}{2}\right) \right](1-\e)ds \\
 \label{eqn:seriesbound4}
 & = \left(\frac{1}{2n} + \frac{1}{2n} + \cdots + \frac{1}{2n}\right)(1-\e)ds ~ = ~ \frac{\log_2(s/2)}{2n}(1-\e)ds\,.
\end{align}
That is the required bounds, hence concluding the proof.
\vspace{-0.7cm}
\begin{flushright}
$\square$
\end{flushright}

%

\section{Conclusion} \label{sec:conclusion}
We considered the construction of sparse matrices that are invaluable for dimensionality reduction with application in diverse fields. These sparse matrices are more efficient computationally compared to their dense counterparts also used for the purpose of dimensionality reduction. Our construction is probabilistic based on the dyadic splitting method we introduced in \cite{bah2013vanishingly}. By better approximation of the bounds we achieve a novel result, which is a reduced complexity of the sparsity per column of these matrices. Precisely, a complexity that is a state-of-the-art divided by $\log s$, where $s$ is the intrinsic dimension of the problem. 

Our approach is one of a few that gives quantitative sampling theorems for existence of such sparse matrices. Moreover, using the phase transition framework comparison, our construction is better than existing probabilistic constructions. We are also able to compare performance of combinatorial compressed sensing algorithms by comparing their phase transition curves. This is one perspective in algorithm comparison amongst a couple of others like runtime and iteration complexities.

Evidently, our results holds true for the construction of expander graphs, which is a graph theory problem and is of interest to communities in theoretical computer science and pure mathematics.

\vspace{0.5cm}
\begin{center}
{\bf ACKNOWLEDGMENT}
\end{center} \vspace{-0.2cm}
BB acknowledges the support from the funding by the German Federal Ministry of Education and Research (BMBF) for the German Research Chair at AIMS South Africa, funding for which is administered by Alexander von Humboldt Foundation (AvH). JT acknowledges support from The Alan Turing Institute under the EPSRC grant EP/N510129/1.

\section{Appendix} \label{sec:appdx}
\subsection{Key relevant results from \cite{bah2013vanishingly}} \label{sec:oldresults}
In order to make this manuscript self containing we include in this section key relevant lemmas, corollaries and definitions from \cite{bah2013vanishingly}. 

\begin{lemma}[Lemma 2.5, \cite{bah2013vanishingly}]
\label{lem:num_size_split}
Let $S$ be an index set of cardinality $s$.  For any level $j$ of the dyadic splitting, $j = 0,\ldots,\lceil \log_2 s \rceil -1$, the set $S$ is decomposed into disjoint sets each having cardinality $Q_j = \big{\lceil}\frac{s}{2^j}\big{\rceil}$ or $R_j = Q_j - 1$. Let $q_j$ sets have cardinality $Q_j$ and $r_j$ sets have cardinality $R_j$, then
\begin{align}
\label{eqn:split_num_size}
q_j = s - 2^j\cdot\Big{\lceil}\frac{s}{2^j}\Big{\rceil} + 2^j, \quad \mbox{and} \quad r_j = 2^j - q_j.
\end{align}
\end{lemma}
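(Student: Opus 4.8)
The plan is to prove this by induction on the level $j$, first determining the multiset of block cardinalities that occur at level $j$ and then recovering $q_j$ and $r_j$ from two elementary counting identities.

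\textbf{Step 1: block sizes.} First I would show that for every $j$ with $0\le j\le \lceil\log_2 s\rceil-1$, each block produced at level $j$ has cardinality either $Q_j=\lceil s/2^j\rceil$ or $R_j=\lfloor s/2^j\rfloor$, and that $Q_j-R_j\le 1$. The base case $j=0$ is the single block of cardinality $s$. For the inductive step, recall that a block of cardinality $m$ is split into sub-blocks of cardinalities $\lceil m/2\rceil$ and $\lfloor m/2\rfloor$. Writing $a=\lfloor s/2^{j+1}\rfloor$ we have $s/2^j\in[2a,2a+2)$, so any block cardinality $m$ at level $j$ satisfies $m\in\{2a,2a+1,2a+2\}$, and in each of these three cases $\{\lceil m/2\rceil,\lfloor m/2\rfloor\}\subseteq\{a,a+1\}=\{\lfloor s/2^{j+1}\rfloor,\lceil s/2^{j+1}\rceil\}$; equivalently one may invoke the nested rounding identities $\lceil\lceil s/2^j\rceil/2\rceil=\lceil s/2^{j+1}\rceil$ and $\lfloor\lfloor s/2^j\rfloor/2\rfloor=\lfloor s/2^{j+1}\rfloor$. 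This legitimises the notation $Q_j,R_j$ in the statement, with the understanding that $r_j=0$ in the degenerate case $2^j\mid s$, where $R_j$ does not actually occur.

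\textbf{Step 2: block count.} Next I would verify that level $j$ consists of exactly $2^j$ blocks. The constraint $j\le\lceil\log_2 s\rceil-1$ forces $s>2^{i+1}$ for all $i\le j-1$, hence $Q_i=\lceil s/2^i\rceil\ge 3$ and so every block at every level $0,\dots,j-1$ has cardinality at least $2$ and therefore splits into two nonempty blocks; since level $0$ has one block and the count doubles at each split, level $j$ has $2^j$ blocks. In particular $q_j+r_j=2^j$, which is the second asserted identity once $q_j$ is known.

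\textbf{Step 3: solve.} Finally, counting the $s$ elements of $S$ block by block gives $q_jQ_j+r_jR_j=s$. Substituting $r_j=2^j-q_j$ and $R_j=Q_j-1$ yields $q_jQ_j+(2^j-q_j)(Q_j-1)=s$, i.e. $2^jQ_j-2^j+q_j=s$, so $q_j=s-2^j\lceil s/2^j\rceil+2^j$ and $r_j=2^j-q_j$, as claimed; nonnegativity of both follows from $R_j\le s/2^j\le Q_j$. The only step requiring genuine care is Step 1: one must ensure that the two admissible block sizes never drift more than one apart under repeated halving, which is exactly what the nested floor/ceiling identities (or the three-case argument above) guarantee. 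Once the block cardinalities are pinned to $\{Q_j,R_j\}$ and the block count is shown to be $2^j$, the formulas for $q_j$ and $r_j$ drop straight out of the two counting equations.
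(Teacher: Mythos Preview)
Your argument is correct. The induction in Step~1 showing that block sizes stay confined to $\{\lfloor s/2^j\rfloor,\lceil s/2^j\rceil\}$ is sound (the three-case check with $a=\lfloor s/2^{j+1}\rfloor$ works, and the nested-rounding identities you mention are the cleaner way to phrase it). The verification in Step~2 that $j\le\lceil\log_2 s\rceil-1$ forces $s>2^{i+1}$ for $i\le j-1$, hence $R_i\ge 2$ and no block becomes a singleton before level $j$, is also right; the key inequality $\lceil\log_2 s\rceil-1<\log_2 s$ gives $2^{\lceil\log_2 s\rceil-1}<s$ directly. Step~3 is elementary linear algebra in two unknowns.

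As for comparison: this paper does not actually prove the lemma. It is quoted in the appendix verbatim as Lemma~2.5 of \cite{bah2013vanishingly} and used as an input, so there is no in-paper proof to hold yours against. Your write-up is a self-contained justification of a result the present manuscript simply imports; the original reference argues along the same counting lines (two linear constraints $q_j+r_j=2^j$ and $q_jQ_j+r_jR_j=s$ once the block sizes are pinned down), so your approach matches what one finds there.
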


\begin{lemma}[Lemma 2.3, \cite{bah2013vanishingly}]
\label{lem:intersect_prob}
Let $B,~B_1,~B_2 \subset [n]$ where $\left|B_1\right|=b_1$, $\left|B_2\right|=b_2$, $B = B_1\cup B_2$ and $|B|=b$. Also let $B_1$ and $B_2$ be drawn uniformly at random, independent of each other, and define $\hbox{P}_n\left(b,b_1,b_2\right) := \hbox{Prob}\left(\left|B_1\cap B_2\right| = b_1+b_2-b\right)$, then 
\begin{equation}
\label{eqn:intersect_prob}
\hbox{P}_n\left(b,b_1,b_2\right) = \binom{b_1}{b_1+b_2-b} \binom{n-b_1}{b-b_1} \binom{n}{b_2}^{-1}.
\end{equation}
\end{lemma}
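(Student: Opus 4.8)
The plan is to reduce the two-source randomness to a single uniform draw by conditioning on $B_1$, and then to count directly. First I would record the inclusion--exclusion identity $|B_1\cap B_2| = |B_1|+|B_2|-|B_1\cup B_2| = b_1+b_2-b$, which holds deterministically once $|B_1\cup B_2|=b$; hence, writing $j:=b_1+b_2-b$, the event whose probability we want, $\{|B_1\cap B_2| = j\}$, is exactly the event that $B_2$ meets $B_1$ in precisely $j$ points (equivalently, $\{|B_1\cup B_2|=b\}$).

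Next I would exploit the independence of $B_1$ and $B_2$ together with the permutation invariance of the uniform distribution on $b_1$-subsets and on $b_2$-subsets of $[n]$: the conditional probability of $\{|B_1\cap B_2|=j\}$ given $B_1$ is the same for every $b_1$-subset $B_1$, so $P_n(b,b_1,b_2)$ equals this conditional probability for one fixed $B_1$. Fix such a $B_1$ and let $B_2$ range uniformly over the $\binom{n}{b_2}$ subsets of $[n]$ of size $b_2$.

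Then the count is elementary. A set $B_2$ with $|B_2|=b_2$ satisfies $|B_2\cap B_1|=j$ iff it chooses $j$ of the $b_1$ elements inside $B_1$ and its remaining $b_2-j = b-b_1$ elements from the $n-b_1$ elements of $[n]\setminus B_1$; there are $\binom{b_1}{j}\binom{n-b_1}{b-b_1} = \binom{b_1}{b_1+b_2-b}\binom{n-b_1}{b-b_1}$ such sets. Dividing by $\binom{n}{b_2}$ gives the stated formula. Degenerate cases — $b$ outside the feasible range $[\max(b_1,b_2),\min(b_1+b_2,n)]$, so that $j<0$ or $b-b_1<0$ — are handled automatically by the convention that a binomial coefficient with an infeasible lower argument is zero, and both sides then vanish.

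I do not expect a real obstacle here; the only points requiring care are (i) correctly matching the intersection-size event to the ``$B_2$ hits $B_1$ in $j$ points'' event via inclusion--exclusion, and (ii) justifying the reduction to a single fixed $B_1$ from independence plus the exchangeability of the uniform law. Everything after that is a one-line hypergeometric count.
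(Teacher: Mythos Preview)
Your argument is correct: conditioning on $B_1$ via independence and exchangeability, then counting the $b_2$-subsets that hit $B_1$ in exactly $j=b_1+b_2-b$ points, is precisely the hypergeometric computation that yields \eqref{eqn:intersect_prob}. The paper does not give its own proof of this lemma---it is quoted in the appendix as Lemma~2.3 of \cite{bah2013vanishingly}---but the standard proof there is exactly this counting argument, so your approach matches.
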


\begin{definition}
\label{def:prob_1set_poly}
$\hbox{P}_n\left(x,y,z\right)$ defined in \eqref{eqn:intersect_prob} satisfies
the upper bound
\begin{equation}
\label{eqn:prob_1set_poly1}
\hbox{P}_n\left(x,y,z\right)\le \pi \left( x,y,z \right) \exp(\psi_n(x,y,z))
\end{equation}
with bounds of $\pi \left( x,y,z \right)$ given in Lemma \ref{lem:prob_1set_poly}.
\end{definition}

\begin{lemma}
\label{lem:prob_1set_poly}
For $\pi \left( x,y,z \right)$ and $\hbox{P}_n\left(x,y,z\right)$ given by \eqref{eqn:prob_1set_poly1} and \eqref{eqn:intersect_prob} respectively, if $\{y,z\}<x<y+z$, $\pi \left( x,y,z \right)$ is given by
\begin{equation}
\label{eqn:prob_1set_poly3a}
 \left(\frac{5}{4}\right)^4 \left[ \frac{yz(n-y)(n-z)}{2\pi n(y+z-x)(x-y)(x-z)(n-x)} \right]^{\frac{1}{2}},
\end{equation}
otherwise $\pi \left( x,y,z \right)$ has the following cases.
\begin{align}
\label{eqn:prob_1set_poly3b}
\left( \frac{5}{4} \right)^3 \left[ \frac{y(n-z)}{n(y-z)} \right]^{\frac{1}{2}} & \quad \mbox{if} \quad x=y>z;\\
\label{eqn:prob_1set_poly3c}
\left( \frac{5}{4} \right)^3 \left[ \frac{(n-y)(n-z)}{n(n-y-z)} \right]^{\frac{1}{2}} & \quad \mbox{if} \quad x=y+z;\\
\label{eqn:prob_1set_poly3d}
\left( \frac{5}{4} \right)^2 \left[ \frac{2\pi z(n-z)}{n} \right]^{\frac{1}{2}} & \quad \mbox{if} \quad x=y=z.
\end{align}
\end{lemma}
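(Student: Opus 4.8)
The plan is to read the bound off the exact formula for $\hbox{P}_n(x,y,z)$ in \eqref{eqn:intersect_prob} of Lemma \ref{lem:intersect_prob}, by applying Stirling's formula with explicit error control to each factorial. Writing each binomial as a ratio of three factorials (using $y-(y+z-x)=x-z$ and $(n-y)-(x-y)=n-x$) turns $\hbox{P}_n(x,y,z)$ into a single fraction, and substituting the two-sided bound $\sqrt{2\pi m}\,(m/e)^m e^{1/(12m+1)} \le m! \le \sqrt{2\pi m}\,(m/e)^m e^{1/(12m)}$ into every factorial with positive argument splits the expression into three groups: the powers $m^m$, the exponentials $e^{-m}$, and the residual factors $\sqrt{2\pi m}\,e^{\pm 1/(12m)}$. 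The argument $0$ occurs only in the degenerate configurations, where one instead uses the exact identities $0!=1$ and $\binom{m}{0}=1$.

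First I would check that the $e^{-m}$ factors cancel identically: in the generic case the numerator contributes exponent $-(y+(n-y)+z+(n-z))=-2n$ and the denominator $-((y+z-x)+(x-z)+(x-y)+(n-x)+n)=-2n$. Next I would collect the $m^m$ factors, grouping $y^y$ with $(y+z-x)^{-(y+z-x)}(x-z)^{-(x-z)}$, grouping $(n-y)^{n-y}$ with $(x-y)^{-(x-y)}(n-x)^{-(n-x)}$, and grouping $z^z(n-z)^{n-z}$ with $n^{-n}$; using $\mathcal{H}(p)=-p\log p-(1-p)\log(1-p)=\mathcal{H}(1-p)$, the logarithm of this piece collapses to $\psi_n(x,y,z)=y\,\mathcal{H}\!\left(\frac{x-z}{y}\right)+(n-y)\,\mathcal{H}\!\left(\frac{x-y}{n-y}\right)-n\,\mathcal{H}\!\left(\frac{z}{n}\right)$, which under the dyadic choice $x=a_{2i}$, $y=z=a_i$ reduces to \eqref{eqn:smallpsi} and is the exponent $\psi_n$ attached in \eqref{eqn:prob_1set_poly1}.

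It then remains to bound the surviving algebraic prefactor. Collecting the $\sqrt{2\pi m}$ terms (four in the numerator, five in the denominator, contributing a net power $(2\pi)^{-1/2}$) reproduces exactly the bracketed square root in \eqref{eqn:prob_1set_poly3a}, while the residual error is at most $(5/4)^4$ because each of the four numerator factorials contributes a factor at most $e^{1/12}<5/4$ and the five denominator error terms are $<1$ and may be discarded. The special cases \eqref{eqn:prob_1set_poly3b}--\eqref{eqn:prob_1set_poly3d} follow the same route after deleting whichever binomial degenerates to $1$: $\binom{n-y}{x-y}$ when $x=y$, $\binom{y}{y+z-x}$ when $x=y+z$, and both when $x=y=z$. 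Applying Stirling factor-by-factor to what remains then leaves $3$, $3$, and $2$ numerator factorials respectively (hence $(5/4)^3$, $(5/4)^3$, $(5/4)^2$), with the appropriate $\sqrt{2\pi m}$ factors cancelling to give the simpler brackets stated; the $y\leftrightarrow z$ symmetry of $\hbox{P}_n$ covers $x=z$.

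The crux is bookkeeping rather than any genuine difficulty. The two places that need care are (i) regrouping the $m^m$ terms so that each block closes into a single $\mathcal{H}(\cdot)$ with the right argument and sign, and (ii) the boundary analysis: one must confirm that the hypothesis $\{y,z\}<x<y+z$ (together with $x<n$) keeps every factorial argument strictly positive in the generic case, and switch to the exact identities $0!=1$, $\binom{m}{0}=1$ wherever an argument vanishes. The only quantitative input beyond classical Stirling is the elementary inequality $e^{1/12}<5/4$ (which also covers the small argument $m=1$).
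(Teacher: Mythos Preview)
Your approach is correct and is the standard one. Note, however, that the present paper does not actually supply a proof of this lemma: it is quoted in the appendix as a result from \cite{bah2013vanishingly}, so there is no in-paper argument to compare against line by line. What the paper does provide is the two-sided binomial Stirling bound \eqref{eqn:stirling}, and applying that bound directly to each of the three binomials in \eqref{eqn:intersect_prob} is the intended route: the upper bound on $\binom{y}{y+z-x}$ and $\binom{n-y}{x-y}$ contributes one factor of $5/4$ each, while the lower bound on $\binom{n}{z}$ contributes $(5/4)^2$ upon inversion, giving the $(5/4)^4$ of \eqref{eqn:prob_1set_poly3a}; the entropy exponents assemble into $\psi_n(x,y,z)$ exactly as in \eqref{eqn:psi_n_def}; and the $\sqrt{2\pi p(1-p)N}$ denominators produce the bracketed square root. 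Your factorial-level Stirling with $e^{1/12}<5/4$ is an equivalent (in fact slightly sharper) realisation of the same computation, and your treatment of the degenerate cases via $\binom{m}{0}=\binom{m}{m}=1$ matches how the powers of $5/4$ drop to $3$, $3$, and $2$ in \eqref{eqn:prob_1set_poly3b}--\eqref{eqn:prob_1set_poly3d}.
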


\begin{lemma}
\label{lem:psi_n_behaviour}
Define
\begin{equation}
\label{eqn:psi_n_def}
\psi_n(x,y,z) :=y\cdot\hbox{H}\left(\frac{x-z}{y}\right) + (n-y)\cdot\hbox{H}\left(\frac{x-y}{n-y}\right) - n\cdot\hbox{H}\left(\frac{z}{n}\right)\,,
\end{equation}
then for $n>x>y$ we have that
\begin{align}
\label{eqn:psi_n_property1}
& \mbox{for } y>z \quad \psi_n(x,y,y) \leq \psi_n(x,y,z) \leq \psi_n(x,z,z); \\
\label{eqn:psi_n_property2}
& \mbox{for } x>z \quad \psi_n(x,y,y) > \psi_n(z,y,y); \\
 \label{eqn:psi_n_property3}
 & \mbox{for } 1/2< \a \leq 1 \quad \psi_n(x,y,y) < \psi_n(\a x,\a y,\a y).
\end{align}
\end{lemma}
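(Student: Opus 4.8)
The plan is to read each of the three claims in \eqref{eqn:psi_n_property1}--\eqref{eqn:psi_n_property3} as a one-dimensional monotonicity statement for $\psi_n$ along a segment in its argument space, to reduce the sign of the relevant directional derivative to an elementary algebraic inequality, and then to check that inequality on the appropriate range. The only properties of the entropy I would use are $\mathcal{H}'(p)=\log\frac{1-p}{p}$ (so $\mathcal{H}'$ is strictly decreasing and anti-symmetric about $1/2$), $\mathcal{H}''(p)=-\frac1{p(1-p)}<0$, and the two linearised identities $\mathcal{H}(p)-p\,\mathcal{H}'(p)=-\log(1-p)$ and $\mathcal{H}(p)+(1-p)\,\mathcal{H}'(p)=-\log p$. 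I would also keep track of the fact that all segments involved stay inside the region where every argument of an $\mathcal{H}$ lies in $[0,1]$, and that in the regime in which the lemma is used the coarse-level variable (the union size, playing the role of $x$) never exceeds the value dictated by the expected-size recursion \eqref{eqn:restrictedsizes}, equivalently the solution of the cubic system \eqref{eqn:unrestrictedsizes} with $\beta\ge 1$; this last point is exactly what fixes the sign of the derivative at the endpoints of the segment.

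I would start with \eqref{eqn:psi_n_property2}, which is the cleanest. Fix $y$ and set $h(w):=\psi_n(w,y,y)$ on the admissible interval $[y,2y]$. Differentiating and using $\mathcal{H}'(p)=\log\frac{1-p}{p}$ gives $h'(w)=\mathcal{H}'\!\left(\frac{w-y}{y}\right)+\mathcal{H}'\!\left(\frac{w-y}{n-y}\right)=\log\frac{(2y-w)(n-w)}{(w-y)^2}$, and the numerator minus the denominator equals $n(2y-w)-y^2$, so $h'(w)>0$ precisely when $w<y\!\left(2-\frac yn\right)$. Since both $z$ and $x$ lie on that increasing branch and $x>z$, we conclude $\psi_n(x,y,y)=h(x)>h(z)=\psi_n(z,y,y)$.

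For \eqref{eqn:psi_n_property1} I would vary one coordinate at a time. Holding $x,y$ fixed, $g(t):=\psi_n(x,y,t)$ has $\partial_t g=-\mathcal{H}'\!\left(\frac{x-t}{y}\right)-\mathcal{H}'\!\left(\frac tn\right)=\log\frac{t(x-t)}{(n-t)(t+y-x)}$, which is $\le 0$ exactly when $t\ge\frac{n(x-y)}{n-y}$; since on $[z,y]$ one has $t\ge z$ and $z\ge\frac{n(x-y)}{n-y}$ in the operative regime, $g$ is non-increasing there, so $\psi_n(x,y,y)=g(y)\le g(z)=\psi_n(x,y,z)$. Holding instead $x$ and the third coordinate $z$ fixed, a short computation using the two linearised identities gives, for $\tilde g(w):=\psi_n(x,w,z)$, $\partial_w\tilde g=\log\frac{w(x-w)}{(n-w)(w+z-x)}$, whose numerator minus denominator is $wz-n(w+z-x)$; this is $\le 0$ for every $w\in[z,y]$ as soon as $x\le 2z-\frac{z^2}{n}$ (the operative bound, again a consequence of \eqref{eqn:restrictedsizes}--\eqref{eqn:unrestrictedsizes}), so $\tilde g$ is non-increasing on $[z,y]$ and $\psi_n(x,y,z)=\tilde g(y)\le\tilde g(z)=\psi_n(x,z,z)$. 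Chaining these two inequalities gives \eqref{eqn:psi_n_property1}.

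Finally, for \eqref{eqn:psi_n_property3} I would set $\phi(\lambda):=\psi_n(\lambda x,\lambda y,\lambda y)=\lambda y\,\mathcal{H}\!\left(\frac{x-y}{y}\right)+(n-\lambda y)\,\mathcal{H}\!\left(\frac{\lambda(x-y)}{n-\lambda y}\right)-n\,\mathcal{H}\!\left(\frac{\lambda y}{n}\right)$ and show $\phi'(\lambda)<0$ for $\lambda\in(1/2,1)$, which immediately yields $\psi_n(x,y,y)=\phi(1)<\phi(\alpha)=\psi_n(\alpha x,\alpha y,\alpha y)$. Here the three summands scale differently because $n$ is not rescaled, so $\phi'$ does not collapse to a single logarithm: the identities above give $\phi'(\lambda)=y\!\left[\mathcal{H}\!\left(\frac{x-y}{y}\right)-\mathcal{H}(q_\lambda)\right]+\frac{n\,q_\lambda}{\lambda}\,\mathcal{H}'(q_\lambda)-y\log\frac{n-\lambda y}{\lambda y}$ with $q_\lambda=\frac{\lambda(x-y)}{n-\lambda y}$, and the assertion is that this is negative precisely once $\lambda>1/2$. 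I expect this last sign estimate to be the main obstacle: unlike the first two parts it is not a one-line algebraic inequality, and one must use $\lambda>1/2$ together with $n>x>y$ (and $x$ below its expected-size value) in order to dominate the positive linear contribution $\lambda y\,\mathcal{H}\!\left(\frac{x-y}{y}\right)$ by the strictly decreasing term $-n\,\mathcal{H}\!\left(\frac{\lambda y}{n}\right)$. A workable route is to verify $\phi'(1)<0$ directly and then show $\phi'$ retains its sign on $(1/2,1]$ by bounding $\phi''$, which after the same manipulations is once more a sum of $\mathcal{H}''$-values with controllable coefficients, the ``$n$-curvature'' term being the dominant one for $\lambda>1/2$.
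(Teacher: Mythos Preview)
The paper does not actually prove Lemma~\ref{lem:psi_n_behaviour}. It is listed in Appendix~\ref{sec:oldresults} under ``Key relevant results from \cite{bah2013vanishingly}'' and is simply reproduced verbatim from that earlier work, with no accompanying argument. So there is no proof in the present manuscript to compare your proposal against; any comparison would have to be with the original derivation in \cite{bah2013vanishingly}.

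On the merits of your proposal itself: your treatment of \eqref{eqn:psi_n_property2} and \eqref{eqn:psi_n_property1} via one-variable derivatives is clean and the calculus is correct, but note that you are leaning on an extra hypothesis that is not in the lemma as stated --- namely that $x$ (resp.\ $z$) does not exceed the expected-size value $y(2-y/n)$ (resp.\ $z(2-z/n)$). You flag this honestly as ``the operative regime'' coming from \eqref{eqn:restrictedsizes}--\eqref{eqn:unrestrictedsizes}, and indeed that is how the lemma is invoked in the proof of Lemma~\ref{lem:prob_expansion_bound_new}; but as written the lemma only assumes $n>x>y$, so strictly speaking you are proving a slightly weaker (though sufficient) statement. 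For \eqref{eqn:psi_n_property3} you do not give a proof: you correctly identify that $\phi'(\lambda)<0$ is the target, write out $\phi'$ explicitly, and then propose to check $\phi'(1)<0$ and control $\phi''$ --- but you stop short of carrying this out, and you yourself call it ``the main obstacle.'' That is a genuine gap in the proposal; the argument for the third property remains a sketch rather than a proof.
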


\begin{corollary}
\label{cor:pi_monotonicity}
If $n>2y$, then $\pi(y,y,y)$ is monotonically increasing in $y$.
\end{corollary}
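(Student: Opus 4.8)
The plan is to reduce the claim to an elementary monotonicity statement about a single quadratic in $y$. First I would recall the explicit form of $\pi(y,y,y)$ furnished by the case $x=y=z$ of Lemma \ref{lem:prob_1set_poly}, namely
\[
\pi(y,y,y) = \left(\frac{5}{4}\right)^{2}\left[\frac{2\pi y(n-y)}{n}\right]^{1/2},
\]
and make explicit that here $n$ is held fixed while $y$ is the variable of interest, with $0<y<n$ (indeed $n>2y$ by hypothesis).

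Next I would strip away every factor independent of $y$: the constant $\left(\tfrac{5}{4}\right)^{2}\sqrt{2\pi/n}$ is strictly positive, so $\pi(y,y,y)$ is monotonically increasing in $y$ if and only if $y\mapsto\sqrt{y(n-y)}$ is. Since $t\mapsto\sqrt{t}$ is strictly increasing on $[0,\infty)$ and $y(n-y)\ge 0$ on $0<y<n$, it suffices to prove that $g(y):=y(n-y)$ is increasing on the range $\{y:n>2y\}$.

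Finally I would differentiate: $g'(y)=n-2y$, which is strictly positive exactly when $n>2y$. Hence $g$, and therefore $\sqrt{g}$ and in turn $\pi(y,y,y)$, is strictly increasing on $\{y : n>2y\}$, which is precisely the assertion of the corollary.

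I do not anticipate any real obstacle; the only thing to be careful about is the bookkeeping — keeping $n$ fixed while varying $y$, and noticing that the hypothesis $n>2y$ is exactly the condition $y<n/2$ placing $y$ on the increasing branch of the downward parabola $g$. If a calculus-free argument is preferred, one can instead write $g(y)=\tfrac{n^{2}}{4}-\left(y-\tfrac{n}{2}\right)^{2}$ and observe that $\left|y-\tfrac{n}{2}\right|$ decreases as $y$ increases toward $n/2$, giving the same monotonicity.
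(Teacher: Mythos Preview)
Your argument is correct and is exactly the natural one: with the explicit formula \eqref{eqn:prob_1set_poly3d} (equivalently \eqref{eqn:psi_n_bound}) in hand, monotonicity of $\pi(y,y,y)$ reduces to monotonicity of $y\mapsto y(n-y)$ on $y<n/2$, which is immediate from $g'(y)=n-2y>0$. The paper itself does not supply a proof here; Corollary~\ref{cor:pi_monotonicity} is merely recalled from \cite{bah2013vanishingly}, and your derivation is precisely the standard verification.
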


The following bound, used in \cite{bah2013vanishingly}, is deducible from an asymptotic series for the logarithms Stirling approximation of the factorial ($!$)
\begin{equation}
 \label{eqn:stirling}
 \frac{16 e^{N\mathcal{H}(p)}}{25\sqrt{2\pi p(1-p)N}} \leq \binom{N}{pN} \leq \frac{5 e^{N\mathcal{H}(p)}}{4\sqrt{2\pi p(1-p)N}} \,.
\end{equation}

\subsection{Derivation of Inequalities} \label{sec:inequalities}
\label{sec:bounds}
\subsubsection{Inequality \ref{eqn:prob_1set_ssparseQ1}} \label{sec:bounds1}
By Lemma \ref{lem:num_size_split}, the left hand side (LHS) of \eqref{eqn:prob_1set_ssparseQ1} is equal to the following.
\begin{multline}
\label{eqn:prob_1set_ssparseQ1a}
q_0\left(q_1r_1\right) \cdot \left(q_2r_2\right) \cdot \left(q_3r_3\right)  \cdots  \left(q_{\lceil \log_2 s \rceil - 2}r_{\lceil \log_2 s \rceil - 2}\right) = \left( s - \Big{\lceil}\frac{s}{1}\Big{\rceil} + 1\right) \cdot \left(s - 2\cdot\Big{\lceil}\frac{s}{2}\Big{\rceil} + 2 \right) \\
\times \left(2 - \left(s - 2\cdot\Big{\lceil}\frac{s}{2}\Big{\rceil} + 2 \right) \right) \times \cdots \times \left(s - 2^{\lceil \log_2 s\rceil - 2}\cdot\Big{\lceil}\frac{s}{2^{\lceil \log_2 s\rceil - 2}}\Big{\rceil} + 2^{\lceil \log_2 s\rceil - 2} \right) \\ 
\times \left(2^{\lceil \log_2 s\rceil - 2} - \left(s - 2^{\lceil \log_2 s\rceil - 2}\cdot \Big{\lceil}\frac{s}{2^{\lceil \log_2 s\rceil - 2}}\Big{\rceil} + 2^{\lceil \log_2 s\rceil - 2} \right) \right)\,.
\end{multline}

We simplify \eqref{eqn:prob_1set_ssparseQ1a} to get the following.
\begin{multline}
\label{eqn:prob_1set_ssparseQ1b}
 1 \cdot \left(s - 2\cdot\Big{\lceil}\frac{s}{2}\Big{\rceil} + 2 \right) \cdot \left(2\cdot\Big{\lceil}\frac{s}{2}\Big{\rceil} - s \right) \cdot \left(s - 2^2\cdot\Big{\lceil}\frac{s}{2^2}\Big{\rceil} + 2^2 \right) \cdot \left(2^2\cdot\Big{\lceil}\frac{s}{2^2}\Big{\rceil} - s \right) \times \\
 \cdots \times \left(s - 2^{\lceil \log_2 s\rceil - 2}\cdot\Big{\lceil}\frac{s}{2^{\lceil \log_2 s\rceil - 2}}\Big{\rceil} + 2^{\lceil \log_2 s\rceil - 2} \right) \cdot \left(2^{\lceil \log_2 s\rceil - 2}\cdot\Big{\lceil}\frac{s}{2^{\lceil \log_2 s\rceil - 2}}\Big{\rceil} - s \right)\,.
\end{multline}

We upper bound $- \lceil z \rceil$ by $-z$ and $\lceil z \rceil$ by $z+1$ to upper bound \eqref{eqn:prob_1set_ssparseQ1b} as follows.
\begin{multline}
\label{eqn:prob_1set_ssparseQ1c}
\left(s - 2\cdot\frac{s}{2} + 2 \right) \cdot \left(2\left(\frac{s}{2} + 1\right) - s \right) \cdot \left(s - 4\cdot\frac{s}{4} + 2 \right) \cdot \left(4\left(\frac{s}{4} + 1\right) - s \right) \times \\
 \cdots \times \left(s - 2^{\lceil \log_2 s\rceil - 2} \cdot \frac{s}{2^{\lceil \log_2 s\rceil - 2}} + 2^{\lceil \log_2 s\rceil - 2} \right) \cdot \left(2^{\lceil \log_2 s\rceil - 2}\cdot \frac{s}{2^{\lceil \log_2 s\rceil - 2}} - s \right)\,.
\end{multline}

The bound \eqref{eqn:prob_1set_ssparseQ1c} is then simplified to the following.
\begin{align}
\label{eqn:prob_1set_ssparseQ1d}
(2\cdot 2) \cdot (4\cdot 4) \cdot (8\cdot 8) \times \cdots \times \left(2^{\lceil \log_2 s\rceil - 2} \cdot 2^{\lceil \log_2 s\rceil - 2} \right) & = 2^2 \cdot 4^2 \cdot 8^2 \times \cdots \times 2^{2\lceil \log_2 s\rceil - 4}\\
\label{eqn:prob_1set_ssparseQ1e}
& = 4^1 \cdot 4^2 \cdot 4^3 \cdots \times 4^{\lceil \log_2 s\rceil - 2} \\
\label{eqn:prob_1set_ssparseQ1f}
& \leq 4^{\left(\sum_{i=1}^{\log_2 s - 1} i\right)} \\
\label{eqn:prob_1set_ssparseQ1g}
& = 4^{\frac{1}{2}\left(\log_2 s - 1\right)\cdot \log_2 s} = 2^{\left(\log_2 s - 1\right)\cdot \log_2 s} \,.
\end{align}
In \eqref{eqn:prob_1set_ssparseQ1f} we upper bound $\lceil \log_2 s\rceil$ by $\log_2 s + 1$; while in the LHS of \eqref{eqn:prob_1set_ssparseQ1g} we computed the summation of a finite arithmetic series. After some algebraic manipulations of logarithms we end up with the RHS of \eqref{eqn:prob_1set_ssparseQ1g}, which simplifies to \eqref{eqn:prob_1set_ssparseQ1}.

\subsubsection{Inequality \ref{eqn:prob_1set_ssparseQ2}} \label{sec:bounds2}
Again by Lemma \ref{lem:num_size_split}, the left hand side (LHS) of \eqref{eqn:prob_1set_ssparseQ2}, i.e. $\left(a_{Q_0}a_{Q_1}a_{R_1}a_{Q_2}a_{R_2}a_{Q_3}a_{R_3} \cdots a_{3}a_{2}\right)^{1/2} $ is equal to the following.
\begin{equation}
\label{eqn:prob_1set_ssparseQ2a}
\left(a_{\lceil \frac{s}{2^0}\rceil}a_{\lceil \frac{s}{2^1}\rceil}a_{\lceil \frac{s}{2^1}\rceil - 1}a_{\lceil \frac{s}{2^2}\rceil}a_{\lceil \frac{s}{2^2}\rceil - 1}a_{\lceil \frac{s}{2^3}\rceil}a_{\lceil \frac{s}{2^3}\rceil - 1} \times \cdots \times a_{\lceil\frac{s}{2^{\lceil \log_2 s\rceil - 2}}\rceil}a_{\lceil\frac{s}{2^{\lceil \log_2 s\rceil - 2}}\rceil - 1}\right)^{1/2}\,.
\end{equation}

Given the monotonicity of $a_{(\cdot)}$ in terms of its subscripts, which indicate cardinalities of sets. Due to the nestedness of the sets due to the dyadic splitting, we upper bound  $a_{\lceil \frac{s}{2^j}\rceil - 1}$ by $a_{\frac{s}{2^j}}$, and $a_{\lceil \frac{s}{2^j}\rceil}$ by $a_{\frac{s}{2^j} + 1}$, resulting in the following upper bound for \eqref{eqn:prob_1set_ssparseQ2a}.
\begin{align}
\label{eqn:prob_1set_ssparseQ2b}
& \left[a_{s}a_{\left(\frac{s}{2} + 1\right)}a_{\frac{s}{2}}a_{\left(\frac{s}{4} + 1\right)}a_{\frac{s}{4}}a_{\left(\frac{s}{8} + 1\right)}a_{\frac{s}{8}}\times \cdots \times a_{\left(\frac{s}{2^{\lceil \log_2 s\rceil - 2}} + 1\right)}a_{\frac{s}{2^{\lceil \log_2 s\rceil - 2}}}\right]^{1/2}\\
\label{eqn:prob_1set_ssparseQ2c}
& \leq \left[a_{s}a_{\left(\frac{s}{2} + 1\right)}a_{\frac{s}{2}}a_{\left(\frac{s}{4} + 1\right)}a_{\frac{s}{4}}a_{\left(\frac{s}{8} + 1\right)}a_{\frac{s}{8}}\times \cdots \times a_{\left(\frac{s}{2^{\log_2 s - 2}} + 1\right)}a_{\frac{s}{2^{\log_2 s - 2}}}\right]^{1/2}\,.
\end{align}

In \eqref{eqn:prob_1set_ssparseQ2c} we used the fact that $2^{\log_2 s - 2}$ is a lower bound to $2^{\lceil \log_2 s\rceil - 2}$. We fix $a_s = (1-\e)ds =: cs$ and we require expansion to hold for all $|\support| \leq s$, i.e. $a_{s'} = cs'$ for all $s'\leq s$. Thus we can re-write \eqref{eqn:prob_1set_ssparseQ2c} as follows.
\begin{align}
\label{eqn:prob_1set_ssparseQ2d}
& \left[a_{s} \left(\frac{cs}{2} + c\right)\frac{cs}{2} \left(\frac{cs}{4} + c\right) \frac{cs}{4} \left(\frac{cs}{8} + c\right) \frac{cs}{8}\times \cdots \times \left(\frac{cs}{2^{\log_2 s - 2}} + c\right) \frac{cs}{2^{\log_2 s - 2}}\right]^{1/2}\\
\label{eqn:prob_1set_ssparseQ2e}
& = \left[a_{s} \left(\frac{a_{s}}{2} + c\right)\frac{a_{s}}{2} \left(\frac{a_{s}}{4} + c\right) \frac{a_{s}}{4} \left(\frac{a_{s}}{8} + c\right) \frac{a_{s}}{8}\times \cdots \times \left(\frac{a_{s}}{2^{\log_2 s - 2}} + c\right) \frac{a_{s}}{2^{\log_2 s - 2}}\right]^{1/2}\,.
\end{align}

In \eqref{eqn:prob_1set_ssparseQ2e} we substitute $a_s$ for $cs$. Next we factor $a_s$ out in all the brackets to have the following.
\begin{multline}
\label{eqn:prob_1set_ssparseQ2f}
\left[a_{s} a_{s}\left(\frac{1}{2} + \frac{c}{a_{s}} \right) a_{s}\left(\frac{1}{2}\right) a_{s}\left(\frac{1}{4} + \frac{c}{a_{s}}\right) a_{s}\left(\frac{1}{4}\right) a_{s}\left(\frac{1}{8} + \frac{c}{a_{s}}\right) a_{s}\left(\frac{1}{8}\right)\times \right. \\
\left.\cdots \times a_{s}\left(\frac{1}{2^{\log_2 s - 2}} + \frac{c}{a_{s}}\right) a_{s}\left(\frac{1}{2^{\log_2 s - 2}}\right)\right]^{1/2}\,.
\end{multline}

In total we have twice $(\log_2 s -2)$ plus 1 factors of $a_s$. We use this and the fact that $c/a_s = 1/s$ to simplify \eqref{eqn:prob_1set_ssparseQ2f} to \eqref{eqn:prob_1set_ssparseQ2g}, which further simplifies to \eqref{eqn:prob_1set_ssparseQ2h} by rearranging the terms in \eqref{eqn:prob_1set_ssparseQ2g}. 
\begin{multline}
\label{eqn:prob_1set_ssparseQ2g}
\left[\left(a_{s}\right)^{2\log_2 s - 3} \left(\frac{1}{2} + \frac{1}{s} \right) \left(\frac{1}{2}\right) \left(\frac{1}{4} + \frac{1}{s}\right) \left(\frac{1}{4}\right) \left(\frac{1}{8} + \frac{1}{s}\right) \left(\frac{1}{8}\right)\times \right. \\
\left.\cdots \times \left(\frac{1}{2^{\log_2 s - 2}} + \frac{1}{s}\right) \left(\frac{1}{2^{\log_2 s - 2}}\right)\right]^{1/2}\,.
\end{multline}
\begin{multline}
\label{eqn:prob_1set_ssparseQ2h}
\left[\left(a_{s}\right)^{2\log_2 s - 3}  \left(\frac{1}{2} \cdot \frac{1}{2^2} \cdot \frac{1}{2^3} \cdots \frac{1}{2^{\log_2 s - 2}}\right)\times \right. \\
\left.\cdots \times \left(\frac{1}{2} + \frac{1}{s} \right)  \left(\frac{1}{2^2} + \frac{1}{s}\right)  \left(\frac{1}{2^3} + \frac{1}{s}\right) \cdots \left(\frac{1}{2^{\log_2 s - 2}} + \frac{1}{s}\right) \right]^{1/2}\,.
\end{multline}

We focus on bounding the second line of \eqref{eqn:prob_1set_ssparseQ2h}, ignoring the square-root for the moment, that is $\left(\frac{1}{2} + \frac{1}{s} \right)  \left(\frac{1}{2^2} + \frac{1}{s}\right)  \left(\frac{1}{2^3} + \frac{1}{s}\right) \times \cdots \times \left(\frac{1}{2^{\log_2 s - 2}} + \frac{1}{s}\right)$. This equals
\begin{align}
\label{eqn:prob_1set_ssparseQ2h2}
& \exp\left(\log\left[\left(\frac{1}{2} + \frac{1}{s} \right) \left(\frac{1}{2^2} + \frac{1}{s}\right)  \left(\frac{1}{2^3} + \frac{1}{s}\right) \cdots \left(\frac{1}{2^{\log_2 s - 2}} + \frac{1}{s}\right) \right]\right)\\
\label{eqn:prob_1set_ssparseQ2h3}
& = \exp\left(\log\left[\frac{1}{2}\left(1 + \frac{2}{s} \right)\right] + \log\left[\frac{1}{2^2} \left(1 + \frac{2^2}{s}\right)\right] 
+ \cdots + \log\left[\frac{1}{2^{\log_2 s - 2}} \left(1 + \frac{2^{\log_2 s - 2}}{s}\right) \right]\right)\\
\label{eqn:prob_1set_ssparseQ2h4}
& = \exp\left(\log\left[\frac{1}{2} \cdot \frac{1}{2^2} \cdot \frac{1}{2^3} \cdots \frac{1}{2^{\log_2 s - 2}}\right] 
+ \log\left[\left(1 + \frac{2}{s} \right) \left(1 + \frac{2^2}{s}\right)
\cdots \left(1 + \frac{2^{\log_2 s - 2}}{s}\right) \right]\right)\,.
\end{align}
From \eqref{eqn:prob_1set_ssparseQ2h2} to \eqref{eqn:prob_1set_ssparseQ2h4}, we used simple algebra involving logarithms. Upper bounding $\log(1+x)$ by $x$, since $\log(1+x) \leq x$ for $|x|<1$, we upper bounded the exponent involving the second log term to upper bound \eqref{eqn:prob_1set_ssparseQ2h4} by the following.
\begin{align}
\label{eqn:prob_1set_ssparseQ2h5}
& \left(\frac{1}{2} \cdot \frac{1}{2^2} \cdot \frac{1}{2^3} \cdots \frac{1}{2^{\log_2 s - 2}}\right) 
\times \exp\left(\frac{2}{s} + \frac{2^2}{s}\ + \frac{2^3}{s} + \cdots + \frac{2^{\log_2 s - 2}}{s}\right)\\
\label{eqn:prob_1set_ssparseQ2h6}
& = \left(\frac{1}{2} \cdot \frac{1}{2^2} \cdot \frac{1}{2^3} \cdots \frac{1}{2^{\log_2 s - 2}}\right) 
\times \exp\left[\frac{1}{s}\left(\frac{s}{2} - 2\right)\right] 
\leq \left(\frac{1}{2} \cdot \frac{1}{2^2} \cdot \frac{1}{2^3} \cdots \frac{1}{2^{\log_2 s - 2}}\right) e^{\frac{1}{2}}\,.
\end{align}
The exponent of the exponential on the right of \eqref{eqn:prob_1set_ssparseQ2h5} is a geometric series and this simplifies to the LHS bound of \eqref{eqn:prob_1set_ssparseQ2h6}. The RHS bound of \eqref{eqn:prob_1set_ssparseQ2h6} is due to upper bounding $e^{1/2 - 2/s}$ by $e^{1/2}$. Using the bound in \eqref{eqn:prob_1set_ssparseQ2h6}, we upper bound \eqref{eqn:prob_1set_ssparseQ2h} by the following. 

\begin{align}
\label{eqn:prob_1set_ssparseQ2i}
\left[e^{\frac{1}{2}} \cdot \left(a_{s}\right)^{2\log_2 s - 3}  \left(\frac{1}{2} \cdot \frac{1}{2^2} \cdot \frac{1}{2^3} \cdots \frac{1}{2^{\log_2 s - 2}}\right)^2 \right]^{1/2} & = e^{\frac{1}{4}} \cdot \left(a_{s}\right)^{\log_2 s - \frac{3}{2}} \cdot \left[2^{-(1+2+\cdots + (\log_2 s-2))}\right]\\
\label{eqn:prob_1set_ssparseQ2j}
& = \frac{1}{2} e^{\frac{1}{4}} \cdot \left(a_{s}\right)^{\log_2 s - \frac{3}{2}} \cdot s^{\frac{1}{2}\log_2 s + \frac{3}{2}}\,,
\end{align}
which is the bound in \eqref{eqn:prob_1set_ssparseQ2}, hence concluding the derivation as required.

\pagebreak


\end{document}